\documentclass[sigconf,authorversion]{acmart}

\AtBeginDocument{%
  }

\copyrightyear{2023} 
\acmYear{2023} 
\setcopyright{acmlicensed}\acmConference[KDD '23]{Proceedings of the 29th ACM SIGKDD Conference on Knowledge Discovery and Data Mining}{August 6--10, 2023}{Long Beach, CA, USA}
\acmBooktitle{Proceedings of the 29th ACM SIGKDD Conference on Knowledge Discovery and Data Mining (KDD '23), August 6--10, 2023, Long Beach, CA, USA}
\acmPrice{15.00}
\acmDOI{10.1145/3580305.3599325}
\acmISBN{979-8-4007-0103-0/23/08}


\usepackage{graphicx}
\usepackage{bm}
\usepackage{dsfont}
\usepackage{subcaption}
\usepackage[ruled,vlined,linesnumbered]{algorithm2e}
\usepackage{enumitem}

\newcommand\numberthis{\addtocounter{equation}{1}\tag{\theequation}}

\newcommand{\bc}{betweenness centrality}
\newcommand{\algname}{\textsc{CentRA}}

\newcommand{\pars}[1]{\left( #1 \right)}
\newcommand{\sqpars}[1]{\left[ #1 \right]}
\newcommand{\brpars}[1]{\left\lbrace #1 \right\rbrace}
\newcommand{\qt}[1]{\lq\lq#1\rq\rq}
\newcommand{\qtm}[1]{\text{\lq\lq}#1\text{\rq\rq}}
\newcommand{\ind}[1]{\mathds{1} \left[ #1 \right] }

\newcommand{\sample}{\mathcal{H}}
\newcommand{\nodeset}{\mathcal{S}}
\newcommand{\centr}{\mathcal{C}}
\newcommand{\hyperg}{\mathfrak{H}}

\newcommand{\BO}[1]{\mathcal{O}\left(#1\right)}
\newcommand{\BOi}[1]{\mathcal{O}( #1 )}
\newcommand{\BTi}[1]{\Theta( #1 )}

\newcommand{\E}{\mathbb{E}}
\newcommand{\F}{\mathcal{F}}
\newcommand{\X}{\mathcal{X}}

\DeclareMathOperator*{\argmax}{arg\,max}

\newcommand{\abs}[1]{\lvert#1 \rvert}

\newcommand{\rc}{\rade(\F, m)}

\newcommand{\era}{\erade\left(\F, \sample\right)}

\newcommand{\sd}{\mathsf{D}}
\newcommand{\supdev}{\sd(\F, \sample)}
\newcommand{\probdist}{\gamma}

\newcommand{\vsigma}{{\bm{\sigma}}}

\newcommand{\mera}{\rade^{t}_{m}(\F, \sample, \vsigma)}
\newcommand{\amera}{\trade^{t}_{m}(\F, \sample, \vsigma)}

\newcommand{\ewvar}{\tilde{w}}

\newcommand{\R}{\mathbb{R}}
\newcommand{\rade}{\mathsf{R}}
\newcommand{\erade}{\hat{\rade}}
\newcommand{\trade}{\tilde{\rade}}

\newcommand{\unionbdelta}{\ln \bigl( \frac{5}{\delta} \bigr)}



\begin{document}

\title{Efficient Centrality Maximization with Rademacher Averages}


\author{Leonardo Pellegrina}
\affiliation{%
  \institution{Dept. of Information Engineering, University of Padova}
  \streetaddress{Via Gradenigo 6b}
  \city{Padova}
  \country{Italy}
  \postcode{35131}
}
\email{leonardo.pellegrina@unipd.it}

%


\begin{abstract}

The identification of the set of $k$ most central nodes of a graph, or centrality maximization, is a key task in network analysis, with various applications ranging from finding communities in social and
biological networks to understanding which seed nodes are important to diffuse information in a graph. 
As the exact computation of centrality measures does not scale to modern-sized networks, the most practical solution is to resort to rigorous, but efficiently computable, randomized approximations. 
In this work 
we present \algname, the first algorithm based on progressive sampling 
to compute high-quality approximations of the set of $k$ most central nodes. 
\algname\ is based on a novel approach to efficiently estimate Monte Carlo Rademacher Averages, a powerful tool from statistical learning theory to compute sharp \emph{data-dependent} approximation bounds. 
Then, 
we study the sample complexity of centrality maximization using the VC-dimension, a key concept from statistical learning theory. 
We show that the number of random samples required to compute high-quality approximations 
scales with finer characteristics of the graph, such as its vertex diameter, 
or of the centrality of interest, 
significantly improving looser bounds derived from standard techniques. 
We apply \algname\ to analyze large real-world networks, showing that it significantly outperforms the state-of-the-art approximation algorithm in terms of number of samples, running times, and accuracy. 
\end{abstract}


\begin{CCSXML}
<ccs2012>
<concept>
<concept_id>10002951.10003227.10003351</concept_id>
<concept_desc>Information systems~Data mining</concept_desc>
<concept_significance>500</concept_significance>
</concept>
<concept>
<concept_id>10002950.10003648.10003671</concept_id>
<concept_desc>Mathematics of computing~Probabilistic algorithms</concept_desc>
<concept_significance>500</concept_significance>
</concept>
<concept>
<concept_id>10003752.10003809.10010055.10010057</concept_id>
<concept_desc>Theory of computation~Sketching and sampling</concept_desc>
<concept_significance>500</concept_significance>
</concept>
</ccs2012>
\end{CCSXML}

\ccsdesc[500]{Information systems~Data mining}
\ccsdesc[500]{Mathematics of computing~Probabilistic algorithms}
\ccsdesc[500]{Theory of computation~Sketching and sampling}

\keywords{Centrality Maximization, Rademacher Averages, Random Sampling}

\maketitle


\section{Introduction}
\label{sec:intro}
Measuring the importance of nodes of a graph is a fundamental task in 
graph analytics~\cite{newman2018networks}. 
To this aim, several \emph{centrality} measures have been proposed to quantify the importance of nodes and \emph{sets} of nodes, such as the 
betweenness~\cite{freeman1977set}, closeness~\cite{bavelas1948mathematical}, pagerank~\cite{page1999pagerank}, random walk~\cite{newman2005measure}, and harmonic~\cite{boldi2014axioms} centralities. 
While the specific notion of centrality to use depends on the application~\cite{ghosh2014interplay}, 
identifying important components of a network is crucial for many relevant tasks, such as 
finding communities in social and
biological networks~\cite{girvan2002community},
identifying vulnerable nodes that may be attacked to disrupt the functionality of a network~\cite{iyer2013attack}, 
and
for influence maximization~\cite{kempe2003maximizing}. 
As in many applications the goal of the analysis is to identify important \emph{sets} of central nodes, 
recent works proposed extensions and generalizations of centrality measures from single nodes to sets of nodes, such as the set betweenness~\cite{ishakian2012framework}, coverage~\cite{yoshida2014almost}, $\kappa$-path~\cite{alahakoon2011k} and set closeness centrality~\cite{bergamini2019computing}. 
A key problem is \emph{set centrality maximization}, that consists in the task of finding the most central set of nodes of cardinality at most $k$, for some integer $k \geq 1$. 
For example, $k$ may represents the budget of an attacker with the intention of disrupting a network, or the number of seed nodes to effectively diffuse relevant information in a graph. 

To address the problems mentioned above, 
many algorithms have been proposed to compute centrality metrics exactly~\cite{brandes2001faster,erdHos2015divide}. 
For most centralities, however, these algorithms poorly scale when applied to large graphs~\cite{RiondatoK15,borassi2019kadabra}. 
Therefore, often the only viable solution is to resort to approximate, but rigorous, estimates. 
In fact, several works~(e.g.,~\cite{RiondatoK15,RiondatoU18,borassi2019kadabra,cousins2021bavarian,pellegrina2021silvan} for the \bc, see also Section~\ref{sec:relwork}) have recently proposed sampling approaches that provide approximations with rigorous guarantees of the centralities of individual nodes. 
In all these methods, the most critical challenge to address is to relate the \emph{size} of the random sample with the \emph{accuracy} of the approximation, i.e. studying the trade-off between the time to process the sample and the probabilistic guarantees that the sample provides w.r.t. the exact analysis. 
To achieve this goal, these works make use of sophisticated probabilistic and sampling techniques (described in more detail in Section~\ref{sec:relwork}).

However, as anticipated before, in many applications the interest of the analysis is on the centrality of \emph{sets} of nodes, a task that is considerably more challenging, while yielding results that are substantially different. 
In fact, the centrality of a set of nodes is weakly related to the centralities of the individual nodes making part of the set. 
This implies that knowing the centralities of several nodes does not provide  information on the centrality of the corresponding set of nodes as a whole. 
More importantly, all the most refined approaches mentioned above~\cite{RiondatoK15,RiondatoU18,borassi2019kadabra,cousins2021bavarian,pellegrina2021silvan} are specifically tailored to provide guarantees for the centralities of individual nodes, and do not generalize to sets of nodes of size $> 1$.
For this reason, they cannot be applied to 
centrality maximization or other tasks related to node sets.

Given the difficulty of set centralities approximation, this problem received only scant attention. 
In fact, only a few recent works 
proposed sampling-based algorithms to compute rigorous approximations of the centrality maximization task~\cite{yoshida2014almost,mahmoody2016scalable}. 
As for the individual nodes case, the main technical challenge of these methods is to relate the accuracy of the approximation to the \emph{size} of the random sample to process, that impacts both the space and running time of the algorithm, imposing a severe trade-off. 
As the state-of-the-art techniques~\cite{yoshida2014almost,mahmoody2016scalable} typically provide loose guarantees on the approximation quality, and 
require unrealistic a-priori knowledge of underlying graph, this problem remains difficult to address and computationally expensive in practice.
Set centrality approximation is a challenging problem that requires new techniques to provide rigorous, yet efficiently computable, accuracy guarantees. 
This is the main goal of this work.

\paragraph{Our contributions} 
We introduce a new algorithm, called \algname\ (\underline{Cent}rality Maximization with \underline{R}ademacher \underline{A}verages) for efficient centrality maximization using progressive sampling, that significantly outperforms previous works. 
\begin{itemize}[leftmargin=.15in]
\item \algname\ is the first algorithm for approximate centrality maximization with data-dependent bounds and progressive sampling (Section~\ref{sec:boundsupdevrade} and~\ref{sec:algorithm}). 
The progressive sampling strategy of \algname\ makes it \emph{oblivious} to the (unknown) set centrality of the optimal solution to approximate, in strong contrast with previous works (that instead requires a-priori knowledge of the optimal set centrality, see Section~\ref{sec:prelimsetcentr}). 
To provide sharp data-dependent bounds, the central contribution at the core of \algname\ is a new algorithmic strategy to estimate Monte Carlo Rademacher Averages, a key tool from statistical learning theory. 
\algname\ enables the efficient computation of tight \emph{graph-} and \emph{data-dependent} probabilistic upper bounds to the Supremum Deviation, a key component of its strategy to provide rigorous high-quality approximations of set centralities. 
\item We study the sample complexity of centrality maximization with the VC-dimension, a fundamental concept from statistical learning theory. 
We derive new bounds to the number of sufficient samples to obtain high-quality approximations of set centralities with high probability (Section~\ref{sec:samplecomplexity}). 
Such bounds scale with granular properties of the graph or of the centrality measure of interest. Therefore, 
our novel bounds improve the analysis based on standard techniques (e.g., based on a union bound),
typically offering a refined dependence on the graph size. 
In practice, these results can be naturally combined with the progressive sampling approach of \algname\ by providing an upper limit to the number of samples it needs to process. 
\item We perform an extensive experimental evaluation, testing \algname\ on several large real-world graphs (Section~\ref{sec:experiments}). 
Compared to the state-of-the-art, 
\algname\ computes much sharper approximations for the same amount of work, or speeds up the analysis, of up to two orders of magnitude, for obtaining approximations of comparable quality. 
\end{itemize}

\section{Related Work}
\label{sec:relwork}
We first introduce related work for centrality approximation of individual nodes. 
We focus on methods providing rigorous guarantees, a necessary requirement for many tasks and downstream analysis. 
Riondato and Kornaropoulos~\cite{RiondatoK15} study the VC-dimension~\cite{Vapnik:1971aa}, a key notion from statistical learning theory, of shortest paths to obtain approximation bounds for the betweenness centrality with random sampling. 
Riondato and Upfal~\cite{RiondatoU18} present an improved method based on deterministic upper bounds to Rademacher Averages~\cite{KoltchinskiiP00} and pseudodimension~\cite{pollard2012convergence}. 
KADABRA~\cite{borassi2019kadabra} furtherly improved \bc\ approximations with adaptive sampling and a weighted union bound, 
while BAVARIAN~\cite{cousins2021bavarian} used Monte Carlo Rademacher Averages~\cite{BartlettM02,pellegrina2022mcrapper} as a framework to fairly compare the accuracy of different estimators of the \bc. 
SILVAN~\cite{pellegrina2021silvan} furtherly improved upon~\cite{cousins2021bavarian} leveraging non-uniform approximation bounds. 
As anticipated in Section~\ref{sec:intro}, all these methods are specifically designed for individual nodes, and do not provide information or guarantees for node sets. 
Our algorithm \algname\ leverages novel data-dependent bounds based on Monte Carlo Rademacher Averages. 
We remark that, while these techniques have been previously applied to centrality approximation of \emph{individual} nodes~\cite{cousins2021bavarian,pellegrina2021silvan} and other pattern mining problems~\cite{pellegrina2022mcrapper,simionatobounding}, 
embedding and generalizing them to 
the context of \emph{set} centrality approximation
is extremely challenging, as we discuss in Section~\ref{sec:rad_ave}. 
The main technical contributions of \algname\ are to prove new concentration results, that are general and may be of independent interest, and to develop a new algorithmic approach, based on the contraction principle of Rademacher Averages (\cite{ledoux1991probability,BartlettM02}, see Section~\ref{sec:boundsupdevrade} of \cite{boucheron2013concentration}) and progressive sampling to obtain efficiently computable, but still accurate, approximations for set centrality maximization. 
Interestingly, contraction approaches are fundamental theoretical tools that have been applied to prove generalization bounds for several complex Machine Learning tasks, such as multi-class classification and ranking~\cite{ShalevSBD14}, and models~\cite{BartlettM02,mohri2018foundations}, such as SVMs and neural networks~\cite{cortes1995support,anthony1999neural}. 
In this work we show their relevance and practical impact within graph analysis.

For the problem of approximate set centrality maximization, 
the state-of-the-art method is from Mahmoody et al.~\cite{mahmoody2016scalable}. 
This work presents HEDGE, an algorithm based on a general framework for centralities that can be defined as submodular set functions, and that admit an appropriate randomized sampling oracle (an hyper-edge sampler, see Section~\ref{sec:prelimsetcentr}). 
In our work we adhere to a similar random sampling framework, but we develop a novel progressive and adaptive algorithm, 
and derive sharper sample complexity and data-dependent bounds. 
We show that our new algorithm \algname\ significantly outperforms HEDGE~\cite{mahmoody2016scalable} in terms of running time, accuracy, and approximation guarantees. 

Other works considered efficient algorithms for different tasks or to approximate different notions of group centralities.
\cite{bergamini2018scaling}~considers the problem of closeness centrality maximization. 
\cite{medya2018group}~studies the problem of maximizing the centrality of a set of nodes by adding new edges to the graph. 
\cite{angriman2020group}~proposes a new group centrality measure inspired by Katz centrality that can be efficiently approximated.  
\cite{angriman2021group}~studies the problem of approximating the group harmonic and group closeness centralities~\cite{boldi2013core,boldi2014axioms}. 
\cite{de2020estimating}~approximates percolation centrality with pseudodimension, while~\cite{Chechik2015} is based on probability proportional to size sampling to estimate closeness centralities.
Other recent works extended the computation of the \bc\ to dynamic~\cite{bergamini2014approximating,bergamini2015fully,hayashi2015fully}, uncertain~\cite{saha2021shortest}, and temporal networks~\cite{santoro2022onbra}.

\section{Preliminaries}
In this Section we introduce the notation and the most important concepts for our algorithm \algname. 
\label{sec:prelims}

\subsection{Set Centralities}
\label{sec:prelimsetcentr}

Let a graph $G = (V , E)$ with $n = |V|$ nodes. 
Let $\centr : 2^V \rightarrow \R$ be a \emph{set centrality} function from the set of all possible subsets of $V$ to $\R$. 
Let $\hyperg \subseteq 2^V$ be a space of sets of nodes, or \emph{hyper-edges}~\cite{yoshida2014almost,mahmoody2016scalable}, such that each $h \in \hyperg$ is a subset of $V$.   
For a parameter $k \geq 1$, we define a family of functions $\F$ as 
$\F = \brpars{ f_\nodeset : \nodeset \subseteq V , |\nodeset| \leq k } $, 
where each $f_\nodeset : \hyperg \rightarrow \{ 0 , 1 \}$ is a function from $\hyperg$ to $\{ 0,1 \}$ such that $f_\nodeset(h) = 1$ if at least one of the nodes of $\nodeset$ belongs to $h$, $0$ otherwise; more formally, we have $f_\nodeset(h) = \ind{ \nodeset \cap h \neq \emptyset }$. 
Note that, to simplify notation, we will also denote $f_u$ for all nodes $u \in V$ as equivalent of $f_{ \{ u \} }$.  
Under this setting, it is possible to define various set centralities $\centr(\cdot)$ as the average value of $f_\nodeset$ over $\hyperg$
\begin{align*}
\centr(\nodeset) = \frac{1}{| \hyperg |} \sum_{h \in \hyperg} f_\nodeset (h) , 
\end{align*}
using different notions of the space $\hyperg$. 
For instance, we define the \emph{set \bc} by taking $\hyperg$ as the family of sets of nodes that are internal to shortest paths of the graph $G$, where each shortest path is from the node $u$ to the node $v$, for all pairs of nodes $u,v \in V , u \neq v$. 
A large value of $\centr(\nodeset)$ denotes that many shortest paths of $G$ pass along any of the nodes of $\nodeset$. 
Analogously, for the $\kappa$-path centrality~\cite{alahakoon2011k}, $\hyperg$ is the family of sets of nodes traversed by simple paths of length at most $\kappa$; for the triangle centrality~\cite{mahmoody2016scalable}, $\hyperg$ contains sets of nodes incident to triangles of $G$. 
Intuitively, the centrality $\centr(\nodeset)$ of a node set $\nodeset$ is large if the nodes of $\nodeset$ \qt{cover} a large fraction of the space $\hyperg$. 
Finally, note that $\centr(\nodeset)$ cannot be obtained from the values $\{ \centr(\{u\}) : u \in \nodeset \}$, i.e. the values of the centralities of individual nodes making part of the set $\nodeset$.

The problem of \emph{centrality maximization} is, for a given $k \geq 1$, to identify a set $\nodeset^* \subseteq V$ of size at most $k$ maximizing $\centr(\cdot)$, such that 
\begin{align*}
\nodeset^* = \argmax_{\nodeset \subseteq V , |\nodeset| \leq k} \centr(\nodeset) . 
\end{align*}
As the computation of $\nodeset^*$ is NP-Hard for most set centrality functions~\cite{fink2011maximum}, the only viable solution is to obtain approximations that are efficient to compute, but with rigorous guarantees in terms of solution quality. 
We are interested in the following standard notion of $\alpha$-approximation. 
\begin{definition}
For any $\alpha > 0$, a set $\nodeset \subseteq V$ with $|\nodeset| \leq k$ provides an $\alpha$-approximation of $\nodeset^*$ if it holds 
$\centr(\nodeset) \geq \alpha \centr(\nodeset^*)$. 
\end{definition}
The state-of-the-art approach for the approximation of centrality maximization relies on the \emph{submodularity} of the set centrality function~\cite{schrijver2003combinatorial}. 
The well known greedy algorithm that, starting from an empty set of nodes $\nodeset = \emptyset$, iteratively picks and inserts in $\nodeset$ a new node $u^*$ maximizing the set centrality $u^* = \argmax_u \centr(\nodeset \cup \{ u\})$, for $k$ iterations, achieves an $(1 - 1/e)$-approximation. 
Such approximation ratio is essentially the best possible unless $P=NP$~\cite{feige1998threshold,fink2011maximum}.  

When considering large graphs, this approach is not efficient. 
In fact, simply computing the exact \bc\ of individual nodes requires $\BOi{ n |E| }$ time with Brandes' algorithm~\cite{brandes2001faster} (with a matching lower bound~\cite{borassi2016into}); furthermore, each iteration of the greedy algorithm would require to evaluate and update the centralities $\centr(\nodeset \cup \{ u\})$ for nodes $u \not\in \nodeset$, which is clearly infeasible in reasonable time on large graphs. 
For this reason, approaches based on random sampling have been proposed~\cite{yoshida2014almost,mahmoody2016scalable} to scale the approximation of set centralities. 

Let a sample $\sample = \{ h_1 , \dots , h_m \}$ be a multiset of size $m$ of hyper-edges from $\hyperg$,  
where each $h \in \sample$ is taken independently and uniformly at random from $\hyperg$. 
The estimate $\centr_\sample(\nodeset)$ of $\centr(\nodeset)$ for the set $\nodeset$ computed on $\sample$ is defined as 
\begin{align*}
\centr_\sample(\nodeset) = \frac{1}{m} \sum_{ s = 1 }^m f_\nodeset (h_s) . 
\end{align*}
The key observation of sampling approaches~\cite{yoshida2014almost,mahmoody2016scalable} is that the greedy algorithm can be applied to a random sample, rather than considering the entire space $\hyperg$. 

Let \texttt{greedyCover($k , \sample$)} be the algorithm defined as follows: starting from $\nodeset = \emptyset$, at every iteration $i \in [1,k]$, insert the node $u^* = \argmax_u \centr_\sample (\nodeset \cup \{ u \})$ in $\nodeset$; after all iterations, return $\nodeset$. 

It is clear that the \emph{size} $m$ of the random sample $\sample$ imposes a trade-off between the \emph{accuracy} of the approximation and the running time of \texttt{greedyCover}. Our goal is to precisely and sharply quantify this trade-off. 
First, note that, by definition, $\centr_\sample(\nodeset)$ is an \emph{unbiased} estimator of $\centr(\nodeset)$, as  
$\E_\sample[ \centr_\sample(\nodeset) ] = \centr(\nodeset) , \forall \nodeset \subseteq V$.
However, in order to provide guarantees in terms of solution quality, it is necessary to properly quantify $| \centr(\nodeset) - \centr_\sample(\nodeset) |$, i.e.  the deviation of $\centr_\sample(\nodeset)$ w.r.t. its expectation $\centr(\nodeset)$, for all $\nodeset \subseteq V$.

A key quantity to study in order to control the approximation accuracy is the \emph{Supremum Deviation (SD)} $\supdev$, defined as 
\begin{equation*}
\supdev = \sup_{f_\nodeset \in \F} | \centr(\nodeset) - \centr_\sample(\nodeset) |.
\end{equation*}
Note that it is not possible to evaluate $\supdev$ explicitly, as the values of $\centr(\nodeset)$ are unknown. 
Therefore, our goal is to obtain tight \emph{upper bounds} to the SD $\supdev$. 

The state-of-the-art method for approximate centrality maximization is HEDGE~\cite{mahmoody2016scalable}. 
This algorithm computes, for an $\varepsilon \in (0 , 1 - 1/e)$, an $(1 - 1/e - \varepsilon)$-approximation with high probability. 
More precisely, their analysis shows that if \texttt{greedyCover($k,\sample$)} is applied to a random sample $\sample$ of size 
\begin{align}
m \in \BO{ \frac{ k\log ( n ) + \log( 1/\delta ) } {\varepsilon^2 \centr(\nodeset^*) } } , \label{eq:unionboundsamples}
\end{align}
then it holds $\supdev \leq \varepsilon \centr(\nodeset^*) / 2$ 
with probability $\geq 1 - \delta$;  
moreover, this is a sufficient condition to guarantee that the output $\nodeset$ of \texttt{greedyCover($k , \sample$)} is an $(1 - 1/e - \varepsilon)$-approximation of $\nodeset^*$. 
The analysis of~\cite{mahmoody2016scalable} (Lemma 2 and Theorem~1) combines the Chernoff bound (to upper bound the deviation $| \centr(\nodeset) - \centr_\sample(\nodeset) |$ of a set $\nodeset$ with high probability) with an union bound over $n^k$ events (an upper bound to the number of non-empty subsets of $V$ of size $\leq k$). 

As introduced in previous Sections, there are two key limitation of this approach. First, the union bound leads to loose guarantees, as it ignores any information of the input graph (apart from its size, which is very large for real-world graphs).
Then, the number of samples to generate to compute an high-quality approximation of the most central set of nodes (given by \eqref{eq:unionboundsamples}) depends on $\centr(\nodeset^*)$, that is \emph{unknown} a-priori.

Our new algorithm \algname\ tackles both these issues. Our first goal is to obtain a finer characterization of the trade-off between the size $m$ of the random sample and bounds to the SD $\supdev$. 
\algname\ employs advanced and sharp \emph{data-dependent} bounds to the SD based on Rademacher Averages (defined in Section~\ref{sec:rad_ave}), leading to a much more efficient algorithm for approximate centrality maximization. 
Moreover, \algname\ is oblivious to the (unknown) values of $\centr(\nodeset^*)$, computing an high-quality approximation \emph{progressively} and \emph{adaptively} with progressive sampling. 
Overall, \algname\ is the first progressive and data-dependent approximation algorithm for set centralities.
While in this work we focus on the task of centrality maximization,
\algname's bounds and rigorous guarantees may be useful to scale other exploratory analyses based on the centralities of sets of nodes. 
Furthermore, \algname\ 
directly applies to all set centrality functions supported by the framework introduced in Section~\ref{sec:prelimsetcentr}. 

\subsection{Rademacher Averages}
\label{sec:rad_ave}

Rademacher averages are fundamental tools of statistical learning theory~\cite{KoltchinskiiP00,koltchinskii2001rademacher,BartlettM02} and of the study of empirical processes~\cite{boucheron2013concentration}. 
We introduce the main notions and results relevant to our setting and defer additional details to~\cite{boucheron2013concentration,ShalevSBD14,mitzenmacher2017probability}. 
While Rademacher averages are defined for arbitrary distributions and real valued functions, here we focus on the scenario of interest for set centralities. 

The \emph{Empirical Rademacher Average} (ERA) $\era$ of the family of functions $\F$ computed on $\sample$ is a key quantity to obtain a data-dependent upper bound to the supremum deviation $\supdev$. 
Let $\vsigma=\left< \vsigma_1,\dots, \vsigma_m \right>$ be a vector of $m$ i.i.d. Rademacher random variables, such that each entry $\vsigma_s$ of $\vsigma$ takes value in $\{-1,1\}$ with equal probability. 
The ERA $\era$ is defined as
\begin{equation*}
 \era = \E_\vsigma \left[ \sup_{f_\nodeset \in \F } \frac{1}{m}
 \sum_{s=1}^m \vsigma_s f_\nodeset(h_s) \right].
\end{equation*}
A central result in statistical learning theory implies that the \emph{expected} supremum deviation is sharply controlled by twice the expected ERA, where the expectation is taken over the sample $\sample$; 
this fundamental result is known as the \emph{symmetrization lemma} (see Lemma 11.4 of \cite{boucheron2013concentration} and Lemma~\ref{symlemma}). 
However, the exact computation of $\era$ is usually intractable, since it is not feasible to evaluate all the $2^m$ assignments of $\vsigma$. 
A natural solution to estimate the ERA is based on a Monte-Carlo approach~\cite{BartlettM02}. 

For $t \ge 1$, let $\vsigma \in \{-1,1\}^{t\times m}$ be a $t\times m$ matrix of i.i.d. Rademacher random variables. The \emph{Monte-Carlo Empirical Rademacher average (MCERA)} $\mera$ of $\F$ on $\sample$ using $\vsigma$ is:
\begin{equation*}
\mera = \frac{1}{t} \sum_{j=1}^t \sup_{f_\nodeset \in \F} \Biggl\{ \frac{1}{m} \sum_{s=1}^m\vsigma_{js}f_\nodeset(h_s) \Biggr\}.
\end{equation*}
The MCERA is a powerful tool as it allows to directly estimate the expected supremum deviation from several random partitions of the data, also by taking into account the data-dependent structure of $\F$.
Most importantly, it provides very accurate bounds even with a typically small number $t$ of Monte Carlo trials (as we show in Section~\ref{sec:experiments}). 
For this reason, this Monte Carlo process leads to much more accurate bounds to the SD, compared to \emph{deterministic upper bounds} (e.g., obtained with Massart's Lemma~\cite{Massart00}) or other \emph{distribution-free} notions of complexity, such as the VC-dimension. 

While the MCERA has been applied to centrality approximation for individual nodes~\cite{cousins2021bavarian,pellegrina2021silvan}, its generalization to centralities of sets of nodes is highly non-trivial. 
In fact, for the case of set centralities, computing the MCERA is NP-Hard\footnote{Computing $\max_{\nodeset \subseteq V , |\nodeset| \leq k}\centr_{\sample}(\nodeset)$ reduces to computing $\mera$ by taking $t=1$ and $\vsigma = \{ 1 \}^{1 \times m}$.}.
This is in strong contrast with the restriction to individual nodes ($k=1$), where it can be computed efficiently.  
Moreover, the weighted sum $\sum_{s=1}^m \vsigma_{js} f_\nodeset(h_s)$ is neither monotone or submodular w.r.t. $\nodeset$, due to the negative entries of $\vsigma$. 
For this reason, the greedy algorithm for constrained submodular function maximization described in Section~\ref{sec:prelimsetcentr} does not provide any approximation guarantee for the MCERA. 
It may be tempting to represent the function $\sum_{s=1}^m \vsigma_{js} f_\nodeset(h_s)$ as the difference  
$\sum_{s : \vsigma_{js} = 1 } f_\nodeset(h_s) - \sum_{s : \vsigma_{js} = -1 } f_\nodeset(h_s)$
of two monotone submodular functions; 
unfortunately, there are strong inapproximability results for this maximization problem~\cite{submodulardifference}. 
Moreover, we note that, while recent general approaches based on branch-and-bound optimization can be applied to compute the MCERA for general structured function families~\cite{pellegrina2022mcrapper}, they would still require a very expensive enumeration of the space of node subsets, incurring in high running times. 
These approaches cannot be adapted to centrality maximization as they would defy the purpose of using random sampling to scale the computation. 
These highly non-trivial computational barriers impose a significant challenge to obtaining accurate data-dependent approximations to set centralities. 

In this work we tackle this problem with a new approach to upper bound the MCERA, that we introduce in Section \ref{sec:boundsupdevrade}. Our idea is to relate the ERA of arbitrary sets of nodes (of cardinality $\leq k$) to the ERA of linear combinations of \emph{individual nodes}. 
We make use of the contraction principle~\cite{BartlettM02,boucheron2013concentration}, a fundamental tool in statistical learning theory. 
This approach leads to accurate and efficiently computable probabilistic upper bounds to the ERA, key to our algorithm \algname.  

\section{\algname: Efficient Centrality Maximization with Rademacher Averages}

This Section presents our contributions.
In Section \ref{sec:boundsupdevrade} we introduce new data-dependent bounds to the Supremum Deviation, the key techniques to our algorithm. 
In Section \ref{sec:algorithm} we present our algorithm \algname\ for efficient centrality maximization with progressive sampling. 
In Section \ref{sec:samplecomplexity} we prove new sample complexity bounds for the rigorous approximation of centrality maximization. 

\subsection{Bounding the Supremum Deviation}
\label{sec:boundsupdevrade}

In this Section we introduce our new, efficiently computable, data-dependent bounds to the SD, the key components of \algname. 

For $t,m \geq 1$, recall that $\vsigma \in \{-1 , 1 \}^{t \times m} $ is defined as a $t \times m $ matrix of Rademacher random variables, such that, for all $s \in [1,m], j \in [1,t]$, $\vsigma_{js} \in \{-1 , 1 \}$ independently and with equal probability. 
We define the \emph{Approximate Monte Carlo Empirical Rademacher Average} (AMCERA) $\amera$ as 
\begin{align*}
\amera = \frac{1}{t} \sum_{j=1}^t \sup_{\nodeset \subseteq V , |\nodeset| \leq k} \Biggl\{ \frac{1}{m}
 \sum_{u \in \nodeset} \sum_{s=1}^m \vsigma_{js} f_u(h_s) \Biggr\}  .
\end{align*}
Note that, as anticipated in Section \ref{sec:rad_ave}, the AMCERA $\amera$ takes the form of a linear combination of empirical deviations of \emph{individual nodes}, rather than arbitrary sets of nodes. 
With Theorem~\ref{thm:eraboundhypercube} we show that the AMCERA $\amera$ gives a sharp bound to the ERA $\era$, while Lemma~\ref{amceraefficient} proves that we can compute it efficiently. 
A key quantity governing the accuracy of the probabilistic bound of Theorem~\ref{thm:eraboundhypercube} is the \emph{approximate wimpy variance} $\ewvar_{\F}(\sample)$, that is defined as
\begin{align*}
\ewvar_{\F}(\sample) = \sup_{\nodeset \subseteq V , |\nodeset| \leq k}  \Biggl\{ \frac{b_\sample}{m} \sum_{u \in \nodeset} \sum_{s=1}^{m} \pars{f_u (h_s)}^2 \Biggr\} ,
\end{align*} 
where $b_\sample = \max_{h \in \sample} |h|$. 
We are now ready to state our new, efficiently computable, probabilistic bound to the ERA $\era$. 
Our proof (deferred to the Appendix for space constraints) is based on two key theoretical tools: a contraction inequality for Rademacher Averages~\cite{ledoux1991probability} and a concentration bound for functions defined on the binary hypercube~\cite{boucheron2013concentration}. 
\begin{theorem}
\label{thm:eraboundhypercube}
For $t,m \geq 1$, let $\vsigma \in \{-1 , 1 \}^{t \times m} $ be a $t \times m$ matrix of Rademacher random variables, such that, for all $s \in [1,m], j \in [1,t]$, $\vsigma_{sj} \in \{-1 , 1 \}$ independently and with equal probability. 
Then, 
with probability $\geq 1- \delta$ over $\vsigma$, it holds
\begin{align*}
\era \leq \amera + \sqrt{\frac{4 \ewvar_{\F}(\sample) \ln \bigl( \frac{1}{ \delta } \bigr) }{ t m }} . 
\end{align*}
\end{theorem}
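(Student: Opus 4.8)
The plan is to bound $\era$ in two stages: an \emph{in-expectation} comparison that trades the set-indexed ERA for the node-additive AMCERA by the contraction principle, followed by a \emph{concentration} step that controls the deviation of the realized AMCERA below its mean through a variance-dependent tail inequality on the hypercube. For the first stage I would exploit the exact identity
\[
f_\nodeset(h_s) = \ind{\nodeset \cap h_s \neq \emptyset} = \phi\pars{ \sum_{u \in \nodeset} f_u(h_s) }, \qquad \phi(x) = \min(x,1),
\]
which holds because $\sum_{u\in\nodeset} f_u(h_s) = \abs{\nodeset \cap h_s}$ is a nonnegative integer and $\phi$ clamps any positive count to $1$. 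Since $\phi$ is $1$-Lipschitz with $\phi(0)=0$, it is an admissible contraction. Viewing each $\nodeset$ as the vector $\pars{\sum_{u\in\nodeset} f_u(h_s)}_{s=1}^m \in \R^m$ and applying the Ledoux--Talagrand contraction inequality~\cite{ledoux1991probability} coordinate-wise to a single Rademacher vector $\vsigma \in \{-1,1\}^m$, I would obtain
\[
\E_\vsigma\sqpars{ \sup_{\nodeset} \frac1m \sum_{s=1}^m \vsigma_s f_\nodeset(h_s) } \leq \E_\vsigma\sqpars{ \sup_\nodeset \frac1m \sum_{u\in\nodeset}\sum_{s=1}^m \vsigma_s f_u(h_s) } .
\]
The left-hand side is exactly $\era$, the right-hand side is the expectation of one AMCERA trial, and since the $t$ trials are i.i.d.\ copies $\E_\vsigma[\amera]$ equals the same quantity; hence $\era \leq \E_\vsigma[\amera]$.

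For the second stage I would show that $\amera$ concentrates below $\E_\vsigma[\amera]$. Writing $\amera = \frac1t\sum_{j=1}^t g_j$ with $g_j = \sup_\nodeset \sum_{s=1}^m \vsigma_{js}\, a_{\nodeset,s}$ and $a_{\nodeset,s} = \frac1m\sum_{u\in\nodeset} f_u(h_s)$, each $g_j$ is the supremum of a Rademacher process indexed by $\nodeset$, and the $g_j$ are independent across $j$. The key algebraic computation bounds the wimpy variance of each process: using $f_u\in\{0,1\}$ and the reduction $\abs{\nodeset\cap h_s}^2 \leq b_{\sample}\abs{\nodeset\cap h_s}$,
\[
\sup_\nodeset \sum_{s=1}^m a_{\nodeset,s}^2 = \sup_\nodeset \frac{1}{m^2}\sum_{s=1}^m \abs{\nodeset\cap h_s}^2 \leq \frac{b_{\sample}}{m^2}\sup_\nodeset\sum_{u\in\nodeset}\sum_{s=1}^m f_u(h_s) = \frac{\ewvar_{\F}(\sample)}{m}.
\]
Invoking the hypercube concentration inequality for suprema of Rademacher processes~\cite{boucheron2013concentration}, whose sub-Gaussian proxy is precisely this wimpy variance, each $\E_\vsigma g_j - g_j$ is sub-Gaussian; tensorizing over the $t$ independent rows shows that $\E_\vsigma[\amera] - \amera$ is sub-Gaussian with proxy of order $\ewvar_{\F}(\sample)/(tm)$. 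Equating the resulting one-sided tail $\exp\pars{ -s^2 t m / (c\,\ewvar_{\F}(\sample)) }$ to $\delta$ and solving for $s$ yields the additive correction $\sqrt{ c\,\ewvar_{\F}(\sample)\ln(1/\delta)/(tm) }$, with the constant $c=4$ tracking the constant of the cited inequality. Chaining the two stages gives $\era \leq \E_\vsigma[\amera] \leq \amera + \sqrt{ 4\,\ewvar_{\F}(\sample)\ln(1/\delta)/(tm) }$ with probability $\geq 1-\delta$.

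The step I expect to be the main obstacle is obtaining the \emph{variance-dependent} concentration rather than a crude range-dependent one: a naive bounded-differences (McDiarmid) argument would replace $\sup_\nodeset\sum_s a_{\nodeset,s}^2$ by the larger $\sum_s \sup_\nodeset a_{\nodeset,s}^2$ and scale with $b_{\sample}^2$ instead of the much smaller data-dependent $\ewvar_{\F}(\sample)$. Making the wimpy variance appear requires the refined hypercube inequality in which the squared-sensitivity sum is evaluated at the active maximizer, together with the reduction $\abs{\nodeset\cap h_s}^2 \leq b_{\sample}\abs{\nodeset\cap h_s}$ that converts squared coverage counts into the linear counts defining $\ewvar_{\F}(\sample)$. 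The other delicate point is verifying that the clamp $\phi$ is a legitimate contraction and that applying it coordinate-wise genuinely reproduces $f_\nodeset$, since the entire reduction from arbitrary node sets to linear combinations of individual nodes rests on that identity.
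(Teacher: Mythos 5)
Your proposal is correct and follows essentially the same route as the paper's proof: the identity $f_\nodeset(h) = \min\bigl(1, \sum_{u\in\nodeset} f_u(h)\bigr)$ plus the contraction principle to get $\era \leq \E_\vsigma[\amera]$, followed by the hypercube concentration inequality with the squared-sensitivity sum evaluated at the maximizer and reduced to $\ewvar_{\F}(\sample)$. The only cosmetic differences are that the paper applies the concentration inequality once to the full $t\times m$ matrix rather than per row with tensorization, and uses Cauchy--Schwarz where you use $\abs{\nodeset\cap h_s}^2 \leq b_\sample\abs{\nodeset\cap h_s}$ (equivalent for indicator functions); both yield the same constant.
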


We now investigate additional properties of the AMCERA. 
The following result proves guarantees on the expected approximation ratio of the AMCERA w.r.t. the MCERA. We note that, while there are no guarantees on the relation between the MCERA $\mera$ and the AMCERA $\amera$ 
in the worst-case 
(i.e., that holds \emph{for all} assignments of $\vsigma \in \{-1,1\}^{t \times m}$), their expectations (taken w.r.t. to $\vsigma$) are tightly related. 
In fact, we prove that the \emph{expected} AMCERA $\amera$ is within a factor $k$ of the ERA $\era$, 
therefore it provides a $k$-approximation of the MCERA $\mera$ in expectation. 
To make the dependence of the approximation bound on $k$ explicit, define the set of functions
$\F_{j} = \brpars{ f_\nodeset : \nodeset \subseteq V , |\nodeset| \leq j } $ (note that $\F = \F_k$). 
\begin{lemma}
\label{thm:expectappxratio}
It holds
\begin{align*}
\erade\left(\F_{k}, \sample\right)
\leq 
\E_{\vsigma} \sqpars{ \trade^{t}_{m}(\F_{k}, \sample, \vsigma) }
\leq 
k\erade\left(\F_{1}, \sample\right) 
\leq k\erade\left(\F_{k}, \sample\right) .
\end{align*}
\end{lemma}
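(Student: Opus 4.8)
The plan is to prove the displayed chain as three separate inequalities, read left to right. Throughout I will write $g_\vsigma(u) = \sum_{s=1}^m \vsigma_s f_u(h_s)$ for the signed empirical sum attached to a single node, and $y_\nodeset(h_s) = \sum_{u \in \nodeset} f_u(h_s) = |\nodeset \cap h_s|$, so that $\trade^{t}_{m}(\F_{k}, \sample, \vsigma)$ is the AMCERA of $\F = \F_k$. The starting reduction is that, since the $t$ rows of $\vsigma$ are i.i.d.\ Rademacher vectors, the $t$ summands defining $\trade^{t}_{m}(\F_{k}, \sample, \vsigma)$ are identically distributed, so $\E_\vsigma[\trade^{t}_{m}(\F_{k}, \sample, \vsigma)] = \E_\vsigma\bigl[ \sup_{|\nodeset|\le k} \frac1m \sum_s \vsigma_s\, y_\nodeset(h_s) \bigr]$; that is, the expected AMCERA collapses to a single-vector Rademacher expectation, matching the form of $\erade(\F_k,\sample)$. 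This lets me compare both quantities as expectations of a supremum over the same index set $\{\nodeset : |\nodeset| \le k\}$.

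For the first (and hardest) inequality, the key is the pointwise identity $f_\nodeset(h_s) = \ind{\nodeset \cap h_s \ne \emptyset} = \min\bigl(1, y_\nodeset(h_s)\bigr)$, which holds because $y_\nodeset(h_s)$ is a nonnegative integer. Hence $\phi(x) = \min(1,x)$ is a $1$-Lipschitz contraction with $\phi(0)=0$ that maps the linear statistic $y_\nodeset$ onto the set indicator $f_\nodeset$. Applying the contraction principle for Rademacher averages (\cite{ledoux1991probability,boucheron2013concentration}) to the finite set $T = \{ (y_\nodeset(h_1), \dots, y_\nodeset(h_m)) : |\nodeset| \le k \} \subseteq \R^m$ and the contraction $\phi$ gives $\E_\vsigma[\sup_{t \in T} \sum_s \vsigma_s \phi(t_s)] \le \E_\vsigma[\sup_{t\in T}\sum_s \vsigma_s t_s]$, which after dividing by $m$ is exactly $\erade(\F_k,\sample) \le \E_\vsigma[\trade^{t}_{m}(\F_{k}, \sample, \vsigma)]$. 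I expect this to be the main obstacle, and it is precisely where the expectation is essential: the inequality can fail for a fixed $\vsigma$, since a single sign pattern can make the OR-based ERA term strictly exceed every linear AMCERA term, so no per-sample argument works and the contraction lemma is the tool that converts this worst-case gap into a statement that holds in expectation.

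The second inequality is an elementary pointwise bound followed by taking expectation. For fixed $\vsigma$, the supremum defining the AMCERA is attained by selecting at most $k$ nodes of largest value $g_\vsigma(u)$ while discarding nodes of negative value (the empty set is admissible, so the value is at least $0$); therefore $\sup_{|\nodeset|\le k} \sum_{u\in\nodeset} g_\vsigma(u) \le k\, \max(0, \max_u g_\vsigma(u))$. Dividing by $m$, the right-hand side is exactly $k$ times the inner quantity of $\erade(\F_1,\sample)$, because $\F_1$ is the empty set together with all singletons, whence $\sup_{f\in\F_1} \frac1m\sum_s \vsigma_s f = \max(0, \max_u \frac1m g_\vsigma(u))$. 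Taking $\E_\vsigma$ yields $\E_\vsigma[\trade^{t}_{m}(\F_{k}, \sample, \vsigma)] \le k\,\erade(\F_1,\sample)$.

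Finally, the third inequality is immediate monotonicity: since $k\ge 1$ we have $\F_1 \subseteq \F_k$, so for every fixed $\vsigma$ the inner supremum over $\F_1$ is dominated by the inner supremum over $\F_k$; taking expectations gives $\erade(\F_1,\sample) \le \erade(\F_k,\sample)$, and multiplying through by $k$ closes the chain.
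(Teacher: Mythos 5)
Your proposal is correct and follows essentially the same route as the paper: the leftmost inequality via the contraction principle applied to $\phi(x)=\min(1,x)$ composed with the linear statistic $\sum_{u\in\nodeset}f_u$ (this is exactly how the paper establishes it inside the proof of Theorem~\ref{thm:eraboundhypercube}), the middle inequality via a pointwise bound of the top-$k$ sum by $k$ times the best nonnegative singleton value, and the last by monotonicity of the supremum under $\F_1\subseteq\F_k$. Your remark that the expectation is genuinely needed for the first inequality (it can fail for a fixed $\vsigma$) is a correct and worthwhile observation that the paper also makes in the surrounding text.
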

In Section~\ref{sec:algorithm} we prove that the AMCERA can be computed efficiently, enabling the joint scalability and accuracy of \algname. 

We now prove that the greedy solution $\nodeset = \text{\texttt{greedyCover}}(k , \sample)$ gives a sharp \emph{upper bound} to the optimum set centrality $\centr(\nodeset^{*})$. 
Interestingly, the accuracy of this tail bound is \emph{independent} of $\F$ and $k$, but leverages the concentration of the empirical estimator $\sup_\nodeset \centr_\sample(\nodeset)$ of the optimal set centrality $\centr(\nodeset^{*})$ directly. 
Our novel proof is based on its \emph{self-bounding} property~\cite{boucheron2013concentration}. 
This result is instrumental for bounding the SD $\supdev$ (Theorem~\ref{thm:bound_dev}) and for the stopping condition of \algname, as we discuss in Section~\ref{sec:algorithm}. 
\begin{theorem}
\label{thm:selfboundingupperbound}
Let a sample $\sample$ of size $m$ and let $\nodeset$ be the output of \texttt{greedyCover($k , \sample$)}. 
With probability $\geq 1 - \delta$, it holds
\begin{align*}
\centr(\nodeset^*) \leq  \frac{\centr_{\sample}(\nodeset)}{1-\frac{1}{e}} + \sqrt{ \pars{ \frac{ \ln(\frac{1}{\delta}) }{ m } } ^2 + \frac{ 2 \centr_\sample(\nodeset) \ln(\frac{1}{\delta}) }{ (1-\frac{1}{e}) m } } + \frac{ \ln(\frac{1}{\delta}) }{ m } .
\end{align*}
\end{theorem}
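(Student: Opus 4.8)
The plan is to bound $\centr(\nodeset^*)$ by exploiting that the empirical optimum $\sup_{\nodeset} \centr_\sample(\nodeset)$ concentrates sharply around its expectation, which in turn dominates $\centr(\nodeset^*)$, and then to relate that empirical optimum to the greedy output. First I would introduce the integer-valued statistic $W(\sample) = m \sup_{\nodeset \subseteq V, |\nodeset| \leq k} \centr_\sample(\nodeset) = \sup_{\nodeset} \sum_{s=1}^m f_\nodeset(h_s)$, i.e. the maximum number of sampled hyper-edges coverable by a set of at most $k$ nodes. The core observation is that $W$ is a \emph{self-bounding} function of the $m$ independent draws $h_1, \dots, h_m$: writing $W_i$ for the same quantity on the sample with $h_i$ deleted, and $\nodeset^*_\sample$ for a maximizer on the full sample, one has $0 \leq W - W_i \leq f_{\nodeset^*_\sample}(h_i) \leq 1$ (deleting a single hyper-edge changes the optimal coverage count by at most one), and summing gives $\sum_{i=1}^m (W - W_i) \leq \sum_{i=1}^m f_{\nodeset^*_\sample}(h_i) = W$. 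These are precisely the hypotheses of the self-bounding concentration inequality of \cite{boucheron2013concentration}. Notably, this argument only uses that the $f_\nodeset$ are $\{0,1\}$-valued coverage functions, which is why the resulting tail bound is independent of the fine structure of $\F$ and of $k$.

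Next I would invoke the lower-tail bound for self-bounding functions, $\Pr[W \leq \E[W] - \lambda] \leq \exp(-\lambda^2 / (2\E[W]))$; setting the right-hand side to $\delta$ yields, with probability $\geq 1-\delta$, the relation $\E[W] - \sqrt{2 \E[W] \ln(1/\delta)} \leq W$. The link to the quantity of interest is that $\centr_\sample$ is an unbiased estimator and that the supremum of the means is dominated by the mean of the supremum, so $\E[W] = m \, \E_\sample[\sup_\nodeset \centr_\sample(\nodeset)] \geq m \sup_\nodeset \E_\sample[\centr_\sample(\nodeset)] = m \, \centr(\nodeset^*)$. Hence it suffices to upper bound $\E[W]$ on the high-probability event and divide by $m$, since $\centr(\nodeset^*) \leq \E[W]/m$ holds deterministically.

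Finally I would invert the concentration inequality by reading it as a quadratic in $\sqrt{\E[W]}$: from $\E[W] - \sqrt{2\E[W]\ln(1/\delta)} \leq W$ one solves to get $\E[W] \leq W + \ln(1/\delta) + \sqrt{\ln^2(1/\delta) + 2 W \ln(1/\delta)}$. Dividing by $m$ and writing $Z = W/m = \sup_\nodeset \centr_\sample(\nodeset)$, this becomes $\centr(\nodeset^*) \leq Z + \frac{\ln(1/\delta)}{m} + \sqrt{(\frac{\ln(1/\delta)}{m})^2 + \frac{2 Z \ln(1/\delta)}{m}}$. Because this right-hand side is monotone increasing in $Z$, and because $\centr_\sample(\cdot)$ is a monotone submodular coverage function on the fixed sample so that the greedy guarantee gives $Z = \sup_\nodeset \centr_\sample(\nodeset) \leq \centr_\sample(\nodeset)/(1-1/e)$ for the greedy output $\nodeset$, I would substitute $\centr_\sample(\nodeset)/(1-1/e)$ for $Z$ and obtain exactly the claimed inequality.

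The step I expect to be the main obstacle is the clean verification of the self-bounding property — in particular that the per-coordinate increments $W - W_i$ are dominated by $f_{\nodeset^*_\sample}(h_i)$ and that their sum telescopes back into $W$, since this is what makes the variance proxy equal to $\E[W]$ and delivers the sharp $\sqrt{2\E[W]\ln(1/\delta)}$ deviation rather than a looser bounded-differences rate. Two secondary points require care: the supremum-exchange $\E[\sup] \geq \sup \E$ must be oriented so that $\E[W]$ dominates $m\,\centr(\nodeset^*)$, and both the quadratic inversion and the monotone substitution of the greedy bound must preserve the inequality direction, which they do once $Z$ is isolated.
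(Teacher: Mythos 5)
Your proposal is correct and follows essentially the same route as the paper's own proof: both identify $m\sup_{\nodeset}\centr_\sample(\nodeset)$ as a self-bounding function of the independent hyper-edge draws, apply the lower-tail inequality for self-bounding functions (Theorem 6.12 of \cite{boucheron2013concentration}), use Jensen's inequality to dominate $\centr(\nodeset^*)$ by the expectation of the empirical supremum, invert the resulting quadratic in $\sqrt{\E[W]}$, and finally substitute the greedy $(1-1/e)$ guarantee by monotonicity. The self-bounding verification you flag as the main obstacle goes through exactly as you sketch it, with the maximizer on the full sample witnessing both the per-coordinate bound and the telescoping sum.
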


The following result combines Theorem~\ref{thm:eraboundhypercube} with sharp concentration inequalities~\cite{boucheron2013concentration} to obtain a probabilistic bound to the Supremum Deviation $\sd(\F, \sample)$ from a random sample $\sample$ using the AMCERA. (proof deferred to the Appendix.)
\begin{theorem}
\label{thm:bound_dev}
Let $\sample$ be a sample of $m$ hyper-edges taken i.i.d. uniformly from $\hyperg$. For $k \geq 1$, let $\nodeset$ be the output of \texttt{greedyCover($k,\sample$)}.  
For any $\delta \in (0,1)$, define $\nu, \tilde{\rade}, \rade,$ and $\eta$ as
\begin{align*}
& \nu \doteq \frac{\centr_\sample(\nodeset)}{(1-\frac{1}{e})} + \sqrt{ \pars{ \frac{ \unionbdelta }{ m } } ^2 + \frac{ 2 \centr_\sample(\nodeset) \unionbdelta }{ (1-\frac{1}{e}) m } } + \frac{ \unionbdelta }{ m } , \\ 
& \tilde{\rade} \doteq \amera + \sqrt{ \frac{4  \ewvar_{\F}(\sample) \unionbdelta }{t m} } , \\
& \rade \doteq \tilde{\rade} + \sqrt{ \pars{\frac{\unionbdelta}{m}}^2 + \frac{2\unionbdelta \tilde{\rade} }{ m } } + \frac{\unionbdelta}{m} , \\
& \eta \doteq 2\rade + \sqrt{\frac{2 \unionbdelta
    \left( \nu + 4\rade \right)}{m}}
        + \frac{ \unionbdelta}{3m} \numberthis \label{eq:epsrade} .
\end{align*}
With probability at least $1-\delta$ over the choice of $\sample$ and $\vsigma$, it holds
$\sd(\F, \sample) \le \eta$.
\end{theorem}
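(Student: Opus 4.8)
The plan is to control $\supdev$ by combining the \emph{symmetrization lemma} with a sharp concentration inequality for the supremum of a bounded empirical process, namely Bousquet's form of Talagrand's inequality~\cite{boucheron2013concentration}, and then to replace every quantity that is unknown or not directly computable by one of the high-probability surrogates assembled in the statement. Concretely, I would first show that, with high probability,
\begin{align*}
\supdev \le \E_\sample[\supdev] + \sqrt{\frac{2\unionbdelta\pars{ w(\F) + 2\E_\sample[\supdev]}}{m}} + \frac{\unionbdelta}{3m},
\end{align*}
where $w(\F) = \sup_{f_\nodeset \in \F} \E_\hyperg[ (f_\nodeset)^2]$ is the wimpy variance of $\F$. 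Since each $f_\nodeset$ is $\{0,1\}$-valued we have $(f_\nodeset)^2 = f_\nodeset$, so $w(\F) = \sup_{\nodeset} \centr(\nodeset) = \centr(\nodeset^*)$. The quantity $\eta$ is exactly what the right-hand side above becomes after substituting the upper estimates described below, so the core of the argument is to produce those estimates and to verify that the substitution is legitimate.

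The three ingredients I would establish separately are: (i) by the symmetrization lemma (Lemma~\ref{symlemma}), the deterministic inequality $\E_\sample[\supdev] \le 2\rc$ relating the expected SD to twice the expected Rademacher average $\rc = \E_\sample[\era]$; (ii) an upper bound $\rc \le \rade$ holding with high probability, obtained in two steps: Theorem~\ref{thm:eraboundhypercube} gives $\era \le \trade$ over the draw of $\vsigma$, and a one-sided self-bounding/Bernstein-type concentration of the empirical Rademacher average around its mean yields $\rc \le \era + \sqrt{(\unionbdelta/m)^2 + 2\unionbdelta\, \era /m} + \unionbdelta/m$; chaining these and using monotonicity in $\era$ produces precisely the algebraic form of $\rade$; and (iii) the bound $\centr(\nodeset^*) \le \nu$ from Theorem~\ref{thm:selfboundingupperbound}, which together with the identity above controls the wimpy variance, $w(\F) \le \nu$.

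With these in hand I would substitute into the Bousquet bound. Since both the square-root term and the leading term are nondecreasing in $\E_\sample[\supdev]$ and in $w(\F)$, replacing $\E_\sample[\supdev]$ by its upper bound $2\rade$ and $w(\F)$ by $\nu$ can only enlarge the right-hand side, giving
\begin{align*}
\supdev \le 2\rade + \sqrt{\frac{2\unionbdelta\pars{\nu + 4\rade}}{m}} + \frac{\unionbdelta}{3m} = \eta.
\end{align*}
Each of the high-probability events invoked — Theorem~\ref{thm:eraboundhypercube}, the empirical-to-expected ERA concentration, Theorem~\ref{thm:selfboundingupperbound}, and the Bousquet inequality — is applied at confidence level $\delta/5$, so by a union bound the total failure probability is at most $4\delta/5 \le \delta$; this accounts for the factor $5$ appearing in $\unionbdelta = \ln(5/\delta)$.

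The step I expect to require the most care is the correct instantiation of Bousquet's inequality: identifying the right variance proxy $w(\F) + 2\E_\sample[\supdev]$ in the normalized form, exploiting the $\{0,1\}$-valued structure to equate the wimpy variance with $\centr(\nodeset^*)$, and justifying the monotone substitution of the data-dependent surrogates without loosening the constants. A secondary delicate point is the one-sided concentration of the empirical ERA around $\rc$ underpinning the definition of $\rade$: it rests on the self-bounding property of the Rademacher average and on inverting a Bennett/Bernstein tail, and the explicit form of $\rade$ must match that inversion exactly. Finally, the bookkeeping that splits $\delta$ across the several concentration events, while tracking over which randomness (the sample $\sample$ versus the sign matrix $\vsigma$) each event is taken, must be handled carefully to yield the clean $1-\delta$ guarantee.
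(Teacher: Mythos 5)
Your proposal is correct and follows essentially the same route as the paper: Bousquet's inequality for the supremum deviation, symmetrization to pass to $2\rc$, Theorem~\ref{thm:eraboundhypercube} plus a self-bounding concentration of the ERA to get $\rc \le \rade$, and Theorem~\ref{thm:selfboundingupperbound} to bound the wimpy variance by $\nu$, all glued by a union bound at level $\delta/5$. The only slip is in the event count: since $\sd(\F,\sample)$ is a two-sided supremum, Bousquet's inequality (stated for one-sided suprema) must be applied twice, giving five events at $\delta/5$ each and total failure probability exactly $\delta$, rather than the four events and $4\delta/5$ you report.
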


In the following Section, we show that these results are essential to the theoretical guarantees of \algname.

\subsection{\algname\ Algorithm}
\label{sec:algorithm}

In this Section we present \algname, our algorithm based on progressive sampling for centrality maximization. Algorithm \ref{algo:main} shows the pseudo-code of \algname, that we now describe in more detail. 

\begin{algorithm}[htb]
\SetNoFillComment%
  \KwIn{Graph $G=(V,E)$; hyper-edge space $\hyperg$; $k , t \geq1$; $\varepsilon \in (0 , 1-\frac{1}{e}); \delta \in (0,1)$.}
  \KwOut{Set $\nodeset \subseteq V $, s.t. $|\nodeset| \leq k , \centr(\nodeset) \geq ( 1 - \frac{1}{e} - \varepsilon ) \centr(\nodeset^*)$ with probability $\ge 1 - \delta$}
  $i \gets 0$; $m_i \gets 0$; $\sample_i \gets \emptyset$\; \label{alg:init}
  \While{true}{ \label{alg:startiters}
  $i \gets i+1$; 
  $\delta_i \gets \delta / 2^i$\;
  $m_i \gets $ \texttt{samplingSchedule()}\; \label{alg:schedule}
  $\sample_i \gets \sample_{i-1} \cup  $ \texttt{sampleHEs($\hyperg , m_i - m_{i-1}$)}\; \label{alg:firstfirstline}
  $\nodeset \gets $ \texttt{greedyCover($k , \sample_i$)}\; \label{alg:greedycover}
  $\vsigma \gets $ \texttt{sampleRadeVars($t \times m_i$)}\; \label{alg:amcerastart}
  \ForAll{$u \in V$}{
  	\lForAll{$j \in [1,t]$}{
  		$r_j^u \gets \frac{1}{m_i} \sum_{s=1}^{m_i} \vsigma_{js} f_u (h_s)$} \label{alg:amceraupdate}
  }
  \lForAll{$j \in [1,t]$}{
  	$r_j \gets \sup_{\nodeset^\prime \subseteq V , |\nodeset^\prime| \leq k} \big\{ \sum_{u \in \nodeset^\prime}  r_j^u \big\}$} \label{alg:amceramaxj}
  $\trade^{t}_{m}(\F, \sample_i, \vsigma) \gets \frac{1}{t} \sum_{j=1}^{t} r_j$\; \label{alg:amceraend}
  $\xi \gets \sqrt{ \biggl( \frac{\ln\bigl( \frac{5}{\delta_i} \bigr) }{m_i} \biggr)^2 + \frac{ 2 \centr_{\sample_i}(\nodeset) \ln\bigl( \frac{5}{\delta_i} \bigr) }{ (1 - \frac{1}{e}) m_i } } + \frac{\ln\bigl( \frac{5}{\delta_i} \bigr)}{m_i}$\; \label{alg:xidef}
  $\eta \gets $ as in Equation \eqref{eq:epsrade} replacing $\delta$ by $\delta_i$ and $\sample$ by $\sample_i$\; \label{alg:eps}
  \lIf{$ (1-\frac{1}{e}) \sqpars{ ( 1 \! - \! \frac{1}{e} \! - \! \varepsilon ) \xi + \eta } \leq \varepsilon \centr_{\sample_i}(\nodeset) $}{\textbf{return} $\nodeset$} \label{alg:stopcond}
  }
  \caption{\algname}\label{algo:main}
\end{algorithm}

\algname\ takes in input a graph $G$, a space of hyper-edges $\hyperg$, and the parameters $k,t,\varepsilon,$ and $\delta$, with the goal of computing an approximation $\nodeset$ of $\nodeset^*$ such that $\centr( \nodeset ) \geq (1-1/e-\varepsilon) \centr(\nodeset^*)$ with probability $\geq 1 - \delta$. 

As a progressive sampling-based algorithm, \algname\ works in iterations.
In each iteration $i$, \algname\ considers a random sample $\sample_i$ of size $m_i$, and tries to establish if it is possible to compute an high-quality approximation of $\nodeset^*$ from $\sample_i$. 
To verify whether this holds, \algname\ checks a suitable stopping condition (defined below), returning the current solution when true. 

In line \ref{alg:init}, the algorithm initializes the variables $i$ to $0$, $\sample_i$ to the empty set, and $m_i = |\sample_i|$ to $0$. 
Then, the iterations of the algorithm start at line \ref{alg:startiters}. 
At the beginning of each iteration, $i$ is increased by $1$, and the confidence parameter $\delta_i$ is set to $\delta/2^i$. 
This correction is needed to prove the guarantees of the algorithm, i.e., to make sure that the probabilistic bounds computed at iteration $i$ holds with probability $\geq 1 - \delta_i , \forall i$, so that \algname\ is correct with probability $\geq 1-\delta$ when stopping at any iteration $i$ (more details in the proof of Proposition~\ref{prop:main}). 
Then, the number of samples $m_i$ to consider is given by a procedure \texttt{samplingSchedule} (line \ref{alg:schedule}). 
This procedure can be implemented in several ways. 
For example, the geometric progression defining $m_i = \alpha \cdot m_{i-1}$ for some $\alpha > 1$ is simple but considered to be optimal~\cite{provost1999efficient}. 
We combine this geometric progression (with $\alpha = 1.2$) with an adaptive schedule that tries to guess the next sample size using an optimistic target for the SD bound $\eta$; 
in fact, note that by setting $\rade=0$ and $\nu = 1$ in Thm. \ref{thm:bound_dev}, solving the equality \eqref{eq:epsrade} for $m$ gives a lower bound to the minimum number $m$ of samples to obtain a sufficiently small $\eta$. 

After getting the number of samples $m_i$, the algorithm generates the new sample $\sample_i$ composing $\sample_{i-1}$ with new $m_i - m_{i-1}$ random hyper-edges, taken from $\hyperg$ by the procedure \texttt{sampleHEs($\hyperg , m_i - m_{i-1}$)}. 
We describe the implementation of \texttt{sampleHEs} that we use for the set \bc\ in Section \ref{sec:experiments}. 
Then, \algname\ calls the procedure \texttt{greedyCover($k , \sample_i$)}, that finds a subset of nodes of size $\leq k$ covering many hyper-edges of $\sample_i$ with the greedy procedure introduced in Section~\ref{sec:prelimsetcentr}, returning the identified set $\nodeset$. 

In lines \ref{alg:amcerastart}-\ref{alg:amceraend}, \algname\ computes the AMCERA (Section~\ref{sec:boundsupdevrade}). 
To do so, it generates a matrix $\vsigma$ of size 
$t \times m_i$ by sampling $t m_i$ Rademacher random variables (Section~\ref{sec:rad_ave}) using the procedure \texttt{sampleRadeVars($t \times m_i$)} (line \ref{alg:amcerastart}). 
Then, for each vertex $u \in V$ and each trial of index $j \in [1,t]$, it populates the values of $r_j^u$ as the weighted average of the function $f_u$ on $\sample_i$ w.r.t. the entries of the matrix $\vsigma$ (line \ref{alg:amceraupdate}).
Using these values, \algname\ computes $r_j$ as the supremum sum of at most $k$ entries of the set $\{ r_j^u : u \in V \}$ (line \ref{alg:amceramaxj}). 

\algname\ obtains the AMCERA in line \ref{alg:amceraend} as the average of the $t$ values of $r_j$. 
We remark that it is not needed to generate the entire matrix $\vsigma$ at each iteration of the algorithm, and to compute $r_j^u$ for all $u \in V$, as done in the simplified procedure described above; 
it is sufficient to extend $\vsigma$ with $m_i - m_{i-1}$ new columns, and to update the values of $r_j^u$ \emph{incrementally} as each sample is generated and added to $\sample_i$. 
All these operations can be done efficiently, as proven by Lemma~\ref{amceraefficient}. 

In line \ref{alg:xidef}, the algorithm defines $\xi$, an accuracy parameter that quantifies how far is the estimate $\centr_{\sample_i}(\nodeset)$ from the optimal centrality $\centr(\nodeset^*)$: it holds 
$\centr(\nodeset^*) \leq  \frac{\centr_{\sample_i}(\nodeset)}{1-1/e} + \xi$ 
with probability $\geq 1 - \delta_i/5$ (see Theorem~\ref{thm:selfboundingupperbound} and the value of $\nu$ in the statement and proof of Theorem~\ref{thm:bound_dev}). 
After obtaining the AMCERA, \algname\ computes (line \ref{alg:eps}) an upper bound $\eta$ to the supremum deviation $\sd(\F, \sample_i)$ on the sample $\sample_i$ with confidence $\delta_i$, using the result of Theorem~\ref{thm:bound_dev}. 

Finally, \algname\ checks a stopping condition (line \ref{alg:stopcond}); if true, it means that $\nodeset$ provides an $(1-1/e-\varepsilon)$-approximation of the optimum $\nodeset^*$ with enough confidence, therefore \algname\ returns $\nodeset$ in output. 
\algname\ uses a new, improved stopping condition to establish its theoretical guarantees. 
As discussed previously, the condition $\supdev \leq \varepsilon \centr(\nodeset^*) / 2 $ proposed by HEDGE~\cite{mahmoody2016scalable} is sufficient to guarantee that 
$\centr(\nodeset) \geq (1-1/e - \varepsilon) \centr(\nodeset^*)$. 
However, this condition is not necessary. 
\algname\ combines the upper bound 
$\centr(\nodeset^*) \leq \nu = \centr_{\sample_{i}}(\nodeset)(1-1/e)^{-1} + \xi$ 
to $\centr(\nodeset^*)$  
with the lower bound 
$\centr_{\sample_{i}}(\nodeset) - \eta \leq \centr(\nodeset)$ 
to $\centr(\nodeset)$, rather than using $\eta$ twice (for both the upper bound to $\centr(\nodeset^*)$ and the lower bound to $\centr(\nodeset)$, see the proof of Proposition~\ref{prop:main}). 
Note that, in general, we expect $ \xi \ll \eta $, as $\xi$ is \emph{independent} of $k$ and any property of $\F$ (e.g., its complexity), but sharply scales with the empirical estimate $\centr_{\sample_{i}}(\nodeset)$ of $\centr(\nodeset^*)$.

We now prove the correctness of \algname, and 
a bound to its time complexity for \emph{incrementally} processing the random samples and for computing the AMCERA. (the proofs are in the Appendix.)
\begin{proposition}
\label{prop:main}
Let $\nodeset$ be the output of \algname. 
With probability $\ge 1 - \delta$, it holds $\centr(\nodeset) \geq ( 1 - \frac{1}{e} - \varepsilon ) \centr(\nodeset^*)$. 
\end{proposition}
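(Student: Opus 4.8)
The plan is to exhibit a single event of probability at least $1-\delta$ on which the truth of the stopping condition at line~\ref{alg:stopcond} is \emph{equivalent} to the target guarantee $\centr(\nodeset)\ge(1-\frac1e-\varepsilon)\,\centr(\nodeset^*)$. Because the proposition concerns only the output of \algname\ (not whether it halts), it suffices to reason at whatever iteration the algorithm returns; no termination or sample-complexity bound is invoked here.

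First I would fix the per-iteration confidence. For each $i\ge1$ let $G_i$ be the event on which the conclusion of Theorem~\ref{thm:bound_dev}, instantiated with $\delta_i=\delta/2^i$ and $\sample_i$, holds. Its proof splits the failure probability across five concentration steps (whence the $\ln(5/\delta_i)$ terms), so $G_i$ carries along, as one of those constituent steps, the self-bounding upper bound $\centr(\nodeset^*)\le\nu=\centr_{\sample_i}(\nodeset)/(1-\frac1e)+\xi$ of Theorem~\ref{thm:selfboundingupperbound} at confidence $\delta_i/5$, in addition to the supremum-deviation bound $\sd(\F,\sample_i)\le\eta$, and $\Pr[G_i^c]\le\delta_i$. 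Setting $G=\bigcap_{i\ge1}G_i$, a union bound gives $\Pr[G^c]\le\sum_{i\ge1}\delta/2^i=\delta$, so all these bounds hold \emph{simultaneously at every iteration} with probability at least $1-\delta$. This is precisely why the algorithm sets $\delta_i\gets\delta/2^i$: the geometric tail lets the guarantee survive the random, data-dependent stopping iteration.

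Finally I would carry out the deterministic algebra on $G$ at the iteration $i$ where \algname\ returns. Writing $c=1-\frac1e$ and $\hat C=\centr_{\sample_i}(\nodeset)$, the deviation bound gives the one-sided lower bound $\centr(\nodeset)\ge\hat C-\eta$, while the self-bounding bound gives $\centr(\nodeset^*)\le\hat C/c+\xi$. Substituting the latter into the target,
\begin{align*}
(c-\varepsilon)\,\centr(\nodeset^*)\;\le\;(c-\varepsilon)\pars{\tfrac{\hat C}{c}+\xi}\;=\;\hat C-\tfrac{\varepsilon}{c}\hat C+(c-\varepsilon)\xi ,
\end{align*}
so $\centr(\nodeset)\ge(c-\varepsilon)\,\centr(\nodeset^*)$ reduces to $\hat C-\eta\ge\hat C-\tfrac{\varepsilon}{c}\hat C+(c-\varepsilon)\xi$, i.e.\ to $\varepsilon\hat C\ge c\sqpars{(c-\varepsilon)\xi+\eta}$ — which is \emph{verbatim} the stopping condition of line~\ref{alg:stopcond}. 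Hence on $G$ the returned $\nodeset$ meets the claim.

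The main obstacle is conceptual rather than computational: arguing that one event of probability $1-\delta$ controls all iterations at once, so the bounds remain valid at the a-priori-unknown, sample-dependent stopping time. The schedule $\delta_i=\delta/2^i$ dispatches this once it is set up correctly. The only genuine subtlety in the algebra is selecting the \emph{right} two one-sided bounds — the self-bounding upper bound on $\centr(\nodeset^*)$ through $\xi$ and the deviation lower bound on $\centr(\nodeset)$ through $\eta$ — since using $\eta$ on both sides (as HEDGE does) recovers only the looser sufficient condition $\sd(\F,\sample_i)\le\varepsilon\,\centr(\nodeset^*)/2$ and misses the exact equivalence that makes \algname's stopping rule tight.
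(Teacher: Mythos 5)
Your proof is correct and follows essentially the same route as the paper's: bundle the supremum-deviation bound $\sd(\F,\sample_i)\le\eta$ and the self-bounding upper bound $\centr(\nodeset^*)\le\centr_{\sample_i}(\nodeset)/(1-\frac1e)+\xi$ into a per-iteration event of probability $\ge 1-\delta_i$, union-bound over the geometric schedule $\delta_i=\delta/2^i$, and verify that the stopping condition is exactly the middle link in the chain $\centr(\nodeset^*)\le\frac{\centr_{\sample_i}(\nodeset)}{1-1/e}+\xi\le\frac{\centr_{\sample_i}(\nodeset)-\eta}{1-1/e-\varepsilon}\le\frac{\centr(\nodeset)}{1-1/e-\varepsilon}$. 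The only nit is your opening phrase claiming the stopping condition is \emph{equivalent} to the target guarantee — it is only a sufficient condition for it (it is equivalent to the central inequality above, not to the conclusion) — but your actual derivation uses only the correct direction of the implication.
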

\begin{lemma}
\label{amceraefficient} 
Define $b_{\sample}$ as $\sup_{h \in \sample} |h| \leq b_{\sample} \leq n$.
When \algname\ stops at iteration $i \geq 1$ after processing $m_i$ samples, 
it computes $\trade^{t}_{m_j}(\F, \sample_j, \vsigma)$, for all $j \in [1,i]$, in time
$\BOi{ (b_{\sample} m_i + k)t \log(n)}$. 
\end{lemma}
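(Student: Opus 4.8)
The plan is to reduce the computation of the AMCERA to maintaining, for each trial, a set of per-node partial sums and extracting a top-$k$ subset, and then to bound the cost of doing this incrementally over all the progressive iterations. The conceptual starting point is that $f_u(h_s) = \ind{u \in h_s}$, so $r_j^u = \frac{1}{m_j}\sum_{s : u \in h_s} \vsigma_{js}$ depends only on the sampled hyper-edges that contain $u$. Moreover, the inner optimization $\sup_{\nodeset' \subseteq V, |\nodeset'| \le k} \sum_{u \in \nodeset'} r_j^u$ defining $r_j$ is separable across nodes, so its optimum is attained by selecting the (at most) $k$ nodes with the largest \emph{positive} values $r_j^u$; hence $r_j$ equals the sum of the $\min(k, |\{u : r_j^u > 0\}|)$ largest positive partial sums. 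This reduces the whole task to (i) maintaining the $r_j^u$ and (ii) reading off a top-$k$-positive sum per trial.

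For step (i) I would, for each trial $j \in [1,t]$, keep the nonzero (``active'') partial sums in a balanced order-statistics search structure keyed by value, supporting $O(\log n)$ point updates and ordered traversal of its largest elements. Only nodes that appear in some sampled hyper-edge are active, and there are at most $\min(n, b_\sample m_i) \le n$ of them, so each update costs $O(\log n)$. The key efficiency gain, as highlighted in the algorithm description, is that the samples accumulate ($\sample_{j} \supseteq \sample_{j-1}$) and are processed \emph{incrementally}: when \texttt{sampleHEs} returns a new hyper-edge $h_s$, for each $u \in h_s$ and each $j$ I add $\vsigma_{js}$ to $r_j^u$ and reinsert it, at cost $O(\log n)$. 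Since each sampled hyper-edge is touched exactly once over the entire run, the total number of such updates is $\sum_{s=1}^{m_i} |h_s| \cdot t \le b_\sample m_i t$, giving an aggregate update cost of $\BOi{b_\sample m_i t \log n}$ across all iterations $j \in [1,i]$ (note that this bound holds without any geometric-series argument, precisely because of incrementality).

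For step (ii), at each iteration $j$ and each trial I traverse the $k$ largest active values and sum the positive ones, at cost $O(k)$ per trial after an $O(\log n)$ lookup of the maximum, i.e. $O(kt)$ per iteration. To aggregate over iterations I would invoke the geometric sampling schedule $m_j = \alpha\, m_{j-1}$: it forces $i = O(\log m_i) = O(\log n)$ (with $m_i$ polynomial in $n$), so the total cost of the $i$ query rounds is $\BOi{i \cdot kt} = \BOi{kt\log n}$. Summing the update and query contributions yields the claimed $\BOi{(b_\sample m_i + k) t \log n}$.

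The main obstacle is the top-$k$ accounting in step (ii): one must use a data structure that simultaneously affords $O(\log n)$ incremental updates and an $O(k)$-type extraction restricted to the active nodes, since a naive $O(n)$ rescan per iteration would break the bound whenever $n \gg b_\sample m_i$, and then charge the $i$ per-iteration extractions against the $\log n$ coming from the geometric schedule. The two conceptual facts that make everything go through --- that $f_u$ is a node-membership indicator (so updates are sparse, touching only $|h_s| \le b_\sample$ entries) and that the inner supremum is a plain top-$k$-positive selection --- are what I would establish first and lean on throughout.
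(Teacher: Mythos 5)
Your proposal is correct and follows essentially the same route as the paper's proof: the dominant $\BOi{b_\sample m_i t \log(n)}$ term is obtained identically, by exploiting that $f_u$ is a membership indicator (so each new hyper-edge $h$ triggers only $|h|t \leq b_\sample t$ updates) and maintaining the partial sums $r_j^u$ in $t$ ordered structures with $\BOi{\log(n)}$ per update, charged once per sample thanks to incrementality. The only divergence is in the $\BOi{kt\log(n)}$ term: the paper charges $\BOi{\log(n)}$ per heap extraction for a single retrieval of the $k$ largest positive values after all $m_i$ samples, whereas you charge $\BOi{k}$ per trial per iteration and recover the $\log(n)$ factor by bounding the number of iterations via the geometric schedule. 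Your accounting needs the mild extra assumptions that the schedule is geometric and $m_i$ is polynomial in $n$ (neither stated in the lemma), but in exchange it explicitly covers the extraction performed at every iteration $j \in [1,i]$, which the lemma's ``for all $j$'' clause requires and which the paper's own proof only charges for once.
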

Note that the time bound above considers the total number of operations performed by \algname\ for computing $\trade^{t}_{m_j}(\F, \sample_j, \vsigma)$, summed over \emph{all} iterations $j\leq i$. 
Notably, the time needed by \algname\ to compute the AMCERA is linear in the
final sample size $m_i$ and logarithmic in the number of nodes $n$ of the graph. 
Furthermore, the parameter $b_{\sample}$ is typically very small, as we discuss in Section~\ref{sec:samplecomplexity} (e.g., for the \bc\ is at most the \emph{vertex diameter} $B$ of the graph~\cite{RiondatoK15}).
We obtain the total running time of \algname\ by summing 
the bound from Lemma~\ref{amceraefficient} and $\BOi{ T_\hyperg m_i + n \log(n) }$, 
where $T_\hyperg$ is the time to sample one hyper-edge from $\hyperg$, 
and the second term is 
for the greedy algorithm \texttt{greedyCover} (using the implementation from~\cite{borgs2014maximizing}, and assuming the final iteration index $i \in \BOi{1}$, that is always the case in practice). 
In our experimental evaluation (Section~\ref{sec:experiments}) we show that 
\algname\ is a very practical and scalable approach, 
since the time required to compute the AMCERA is negligible w.r.t. to the time to generate the random samples and \texttt{greedyCover}.  
 
Finally, we remark that \algname\ applies directly to approximate set centrality functions that are submodular and can be defined as averages over some hyper-edge space $\hyperg$ (see Section~\ref{sec:prelimsetcentr}), such as $\kappa$-path, coverage, and triangle centralities, as \algname\ is oblivious to the procedure \texttt{sampleHEs} to sample hyper-edges.

\subsection{Sample Complexity bounds}
\label{sec:samplecomplexity}
In this Section we present a new bound (Theorem~\ref{thm:vcsamplecomplexity}) to the sufficient number of random samples to obtain an $( 1 - \frac{1}{e} - \varepsilon )$-approximation of $\nodeset^*$. 
(due to space constraints, we defer all proofs of this Section to the Appendix.)

We first define the range space associated to set centralities and its VC-dimension, and remand to~\cite{ShalevSBD14,mitzenmacher2017probability,mohri2018foundations} for a more complete introduction to the topic. 
Define the range space $Q_k = (\hyperg , R_k)$, where $\hyperg$ is a family of hyper-edges, 
and 
\begin{align*}
R_k = \brpars{ \brpars{ h : h \in \hyperg , f_\nodeset(h) = 1 } : \nodeset \subseteq V , |\nodeset| \leq k }
\end{align*}
is a family of subsets of $\hyperg$, indexed by the indicator functions $f_\nodeset \in \F$ associated to subsets of $V$ with size at most $k$. 
For any $\sample \subseteq \hyperg$ of size $m$, the projection $P(\sample , R_k)$ of $R_k$ on $\sample$ is 
\[
P(\sample , R_k) = \brpars{ \sample \cap r : r \in R_k } .
\] 
We say that $\sample$ is \emph{shattered} by $R_k$ if 
$|P(\sample , R_k)| = 2^m $. 
The \emph{VC-dimension}~\cite{Vapnik:1971aa} $VC(Q_k)$ of the range space $Q_k$ is the maximum cardinality $m$ of a set $\sample$ that is shattered by $R_k$. 
Under this framework, we are ready to state the following sample complexity bound. 
\begin{theorem}
\label{thm:vcsamplecomplexity}
Let $\sample$ be a sample of $m$ hyper-edges taken i.i.d. uniformly from $\hyperg$. For $k \geq 1$, let $\nodeset$ be the output of \texttt{greedyCover($k,\sample$)}. 
Let $d_k$ such that $VC(Q_k)\leq d_k$. 
For $\delta \in (0,1)$ and $\varepsilon \in (0 , 1-\frac{1}{e})$, 
if 
\begin{align}
m \in \BO{ \frac{ d_k \log \bigl(  \frac{1}{\centr(\nodeset^*)} \bigr) + \log \bigl( \frac{1}{\delta} \bigr) }{\varepsilon^2 \centr(\nodeset^*)} } ,  \label{eq:samplecompl}
\end{align} 
then, it holds $\centr(\nodeset) \geq ( 1 - \frac{1}{e} - \varepsilon ) \centr(\nodeset^*)$ with probability $\geq 1 - \delta$. 
\end{theorem}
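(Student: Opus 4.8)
The plan is to reduce the approximation guarantee to a uniform bound on the Supremum Deviation, and then to establish that bound via a \emph{relative} VC-dimension argument, so that the sample size scales with $d_k \log(1/\centr(\nodeset^*))$ instead of the $k \log n$ term of the union bound of~\cite{mahmoody2016scalable}. As recalled in Section~\ref{sec:prelimsetcentr}, the event $\supdev \leq \varepsilon \centr(\nodeset^*)/2$ already suffices: combining the lower bound $\centr(\nodeset) \geq \centr_\sample(\nodeset) - \supdev$ with the submodular greedy guarantee $\centr_\sample(\nodeset) \geq (1 - \tfrac{1}{e}) \centr_\sample(\nodeset^*)$ (valid because \texttt{greedyCover} achieves a $(1-\tfrac{1}{e})$-approximation of the maximizer of the submodular empirical coverage function $\centr_\sample$, and $\nodeset^*$ is feasible for the empirical problem) and with $\centr_\sample(\nodeset^*) \geq \centr(\nodeset^*) - \supdev$ yields $\centr(\nodeset) \geq (1 - \tfrac{1}{e})\centr(\nodeset^*) - (2 - \tfrac{1}{e})\supdev$, which is at least $(1 - \tfrac{1}{e} - \varepsilon)\centr(\nodeset^*)$ as soon as $\supdev \leq \varepsilon \centr(\nodeset^*)/2$. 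Hence it is enough to show that a sample of the stated size satisfies $\supdev \leq \varepsilon \centr(\nodeset^*)/2$ with probability $\geq 1 - \delta$.

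The key tool is a \emph{relative} $(p, \varepsilon')$-approximation bound for range spaces of bounded VC-dimension (a refinement of the classical VC sampling bound; see, e.g.,~\cite{mitzenmacher2017probability,mohri2018foundations}). Applied to the range space $Q_k = (\hyperg, R_k)$ with $VC(Q_k) \leq d_k$, it guarantees that a sample $\sample$ of size $m \in \BO{ \frac{ d_k \log(1/p) + \log(1/\delta) }{ (\varepsilon')^2 p } }$ is, with probability $\geq 1 - \delta$, such that for every $\nodeset$ with $|\nodeset| \leq k$ it holds $|\centr(\nodeset) - \centr_\sample(\nodeset)| \leq \varepsilon' \max\{ \centr(\nodeset), p \}$. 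I would instantiate this with $p = \centr(\nodeset^*)$ and $\varepsilon' = \varepsilon/2$, which reproduces exactly the sample size~\eqref{eq:samplecompl}.

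It then remains to convert the relative bound into the desired additive bound on $\supdev$. The crucial observation is that $\nodeset^*$ is the global maximizer of $\centr$ over all feasible sets, so every $\nodeset$ with $|\nodeset| \leq k$ satisfies $\centr(\nodeset) \leq \centr(\nodeset^*) = p$; consequently the maximum collapses, $\max\{\centr(\nodeset), p\} = \centr(\nodeset^*)$, and one obtains the uniform bound $\supdev = \sup_{f_\nodeset \in \F} |\centr(\nodeset) - \centr_\sample(\nodeset)| \leq \varepsilon' \centr(\nodeset^*) = \varepsilon \centr(\nodeset^*)/2$. Feeding this into the sufficiency chain of the first paragraph concludes the argument.

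I expect the main obstacle to be obtaining the \emph{relative} (multiplicative) guarantee rather than a plain absolute one: a standard absolute $\varepsilon'$-approximation would only give $\supdev \leq \varepsilon'$ with $m \in \BO{ \frac{ d_k + \log(1/\delta) }{ (\varepsilon')^2 } }$, which lacks the essential $1/\centr(\nodeset^*)$ scaling and is vacuous when $\centr(\nodeset^*)$ is small. Correctly invoking the relative-approximation machinery, carefully handling the $\max\{\cdot, p\}$ term, and confirming that $VC(Q_k)$ is the right complexity quantity controlling the deviations (so that $d_k$ legitimately replaces the union-bound exponent in $k \log n$) are the delicate points; the remaining algebra — the choice $\varepsilon' = \varepsilon/2$ and the fact that $(2 - \tfrac{1}{e})/2 < 1$ in the sufficiency chain — is routine.
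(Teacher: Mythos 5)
Your proposal is correct and follows essentially the same route as the paper: both invoke the relative $(\varepsilon/2,\,\centr(\nodeset^*))$-approximation guarantee for the range space $Q_k$ (Theorem~\ref{thm:relappxsamples}, due to Li et al.) to obtain $\supdev \leq \varepsilon\centr(\nodeset^*)/2$, and then conclude via the sufficiency argument of Theorem~1 of~\cite{mahmoody2016scalable}. You merely spell out two steps the paper leaves implicit — the collapse of $\max\{\centr(\nodeset),\theta\}$ to $\centr(\nodeset^*)$ and the explicit $(2-\frac{1}{e})\supdev$ chain — both of which are handled correctly.
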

We may observe that the main consequence of Theorem \ref{thm:vcsamplecomplexity} is that the number of samples required to obtain a $( 1 - \frac{1}{e} - \varepsilon )$-approximation of $\nodeset^*$ with probability $\geq 1-\delta$
does not necessarily depend on the size of the graph $G$ (e.g., the number of nodes $n = |V|$), but rather scales with the VC-dimension $VC(Q_k)$. 
Moreover, note that it always holds $VC(Q_k) \leq \log_2 |R_k| \leq \lfloor k \log_2 (n) \rfloor$ (in other words, $VC(Q_k)$ cannot be larger than a na\"ive application of the union bound), but the gap between $VC(Q_k)$ and $\log_2 |R_k|$ (and $\lfloor k \log_2 (n) \rfloor$) can be arbitrary large~\cite{ShalevSBD14}. 
In order to characterize $VC(Q_k)$, we give an upper bound to it as a function of $VC(Q_1)$.
\begin{lemma}
\label{thm:lemmavckbound}
The VC-dimension $VC(Q_k)$ of the range space $Q_k$ is $VC(Q_k) \leq 2 VC(Q_1) k \log_2(3k) $.
\end{lemma}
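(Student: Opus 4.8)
The plan is to exploit the fact that every range in $R_k$ is a union of at most $k$ ranges of the base space $Q_1$, and then to combine the Sauer--Shelah lemma with the standard counting argument for unions of ranges, finishing with an elementary (but slightly delicate) inversion of a transcendental inequality. First I would record the structural observation that for any $\nodeset \subseteq V$ with $|\nodeset| \leq k$,
\[
\brpars{ h \in \hyperg : f_\nodeset(h) = 1 } = \bigcup_{u \in \nodeset} \brpars{ h \in \hyperg : f_u(h) = 1 },
\]
since $f_\nodeset(h) = \ind{ \nodeset \cap h \neq \emptyset } = \max_{u \in \nodeset} f_u(h)$. Hence, restricted to any $\sample$, each range of $R_k$ is a union of at most $k$ ranges of $R_1$, so the projections satisfy $|P(\sample, R_k)| \leq |P(\sample, R_1)|^k$: every element of $P(\sample, R_k)$ is fixed by an (unordered, repetition-allowed) choice of at most $k$ elements of $P(\sample, R_1)$, and crudely there are at most $|P(\sample, R_1)|^k$ such choices. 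I would also note that $R_1 \subseteq R_k$, so $VC(Q_1) \leq VC(Q_k)$; this guarantees that a set of size $m = VC(Q_k)$ satisfies $m \geq d$, where $d \doteq VC(Q_1)$.

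Next, let $\sample$ with $|\sample| = m = VC(Q_k)$ be shattered by $R_k$, so that $|P(\sample, R_k)| = 2^m$. Applying the Sauer--Shelah corollary $|P(\sample, R_1)| \leq \sum_{i=0}^{d}\binom{m}{i} \leq (em/d)^{d}$, valid because $m \geq d \geq 1$, together with the union bound above yields
\[
2^m \leq \pars{\tfrac{em}{d}}^{dk}.
\]
Taking base-$2$ logarithms gives the implicit inequality $m \leq dk \log_2\!\pars{em/d}$, which is the essential constraint.

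The main obstacle, and the only part requiring care, is turning this implicit inequality into the explicit clean bound $m \leq 2 d k \log_2(3k)$. I would do this with a monotonicity argument: set $g(m) \doteq dk\log_2(em/d)$ and the candidate threshold $m^\star \doteq 2dk\log_2(3k)$. A short computation reduces the inequality $m^\star > g(m^\star)$ to verifying $9k > 2e\log_2(3k)$ for every integer $k \geq 1$; this is tightest at $k=1$, where $9 > 2e\log_2 3 \approx 8.62$ holds, and it holds with increasing slack for larger $k$ since the left side grows linearly and the right side only logarithmically. Finally, since $g'(m) = dk/(m\ln 2) < 1$ on $[m^\star, \infty)$ (indeed $g'(m^\star) = 1/(2\ln(3k)) < 1$), the map $m \mapsto m - g(m)$ is strictly increasing there, so no $m \geq m^\star$ can satisfy $m \leq g(m)$. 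Combined with $m \geq d$ from the first paragraph, this forces $VC(Q_k) = m < m^\star$, establishing $VC(Q_k) \leq 2 VC(Q_1)\, k \log_2(3k)$. The constants must be chosen so that the verification survives the boundary case $k = 1$, which is where the bound is tight.
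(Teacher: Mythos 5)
Your proposal is correct and follows essentially the same route as the paper's proof: bound the projection $|P(\sample,R_k)|$ by $|P(\sample,R_1)|^k$ using the fact that every range of $R_k$ is a union of at most $k$ ranges of $R_1$, apply Sauer--Shelah, and reduce the resulting transcendental inequality to $\log_2(3k) < 9k/(2e)$, which is tight at $k=1$. The only (harmless) differences are stylistic: you obtain the power bound by a one-shot counting argument where the paper iterates a product inequality $|P(\sample,R_k)| \leq |P(\sample,R_1)|\,|P(\sample,R_{k-1})|$, and your monotonicity step, which rules out all $m \geq 2dk\log_2(3k)$ rather than only the single value checked in the paper, is actually a slightly more careful finish.
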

Lemma~\ref{thm:lemmavckbound} conveniently involves $VC(Q_1)$, a quantity that is often much easier to bound than $VC(Q_k)$. In fact, for the case of the \bc, it holds $VC(Q_1) \leq \lfloor \log_2(2B) \rfloor$, where $B$ is (an upper bound to) the \emph{vertex diameter} of the graph $G$~\cite{RiondatoK15}, i.e., the maximum number of distinct nodes that are internal to a shortest path of $G$. 
Our next result generalizes such bound to general set centralities, in which we assume that the maximum number of distinct nodes in an hyper-edge $h$ that we can sample from $\hyperg$ is upper bounded by a constant $b$. 
\begin{lemma}
\label{thm:vcboundgeneral}
If $\sup_{h \in \hyperg}{ |h| } \leq b$, then 
$VC(Q_1) \leq d_1 = \lfloor \log_2(2b) \rfloor $.
\end{lemma}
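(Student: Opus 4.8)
The plan is to adapt the shattering/counting argument of Riondato and Kornaropoulos~\cite{RiondatoK15} to the abstract hyper-edge setting of $Q_1 = (\hyperg, R_1)$. First I would unpack what the ranges in $R_1$ actually look like. Since $|\nodeset| \leq 1$, each index set $\nodeset$ is either empty or a singleton. The empty set gives $f_\emptyset \equiv 0$, hence the empty range. For a single node $u \in V$ we have $f_{\{u\}}(h) = \ind{u \in h}$, so the corresponding range is $R_u = \brpars{ h \in \hyperg : u \in h }$, the set of hyper-edges containing $u$. The upshot is that every \emph{nonempty} subset of a sample that $R_1$ can pick out is witnessed by an individual node.

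Next, I would argue by contradiction on the size of a shattered set. Suppose $\sample = \brpars{ h_1, \dots, h_m }$ of size $m = VC(Q_1)$ is shattered by $R_1$. Fix one arbitrary hyper-edge, say $h_1 \in \sample$, and restrict attention to the $2^{m-1}$ subsets $T \subseteq \sample$ that contain $h_1$. For each such $T$, shattering supplies a range $r \in R_1$ with $\sample \cap r = T$; since $T$ is nonempty, $r$ cannot be the empty range, so $r = R_{u_T}$ for some node $u_T \in V$, and the defining identity $\brpars{ h \in \sample : u_T \in h } = T$ holds. Because $h_1 \in T$, this forces $u_T \in h_1$.

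The crux is then an injectivity/counting step. The assignment $T \mapsto u_T$ is injective on the family of subsets containing $h_1$, since $T$ is recovered from $u_T$ as $\brpars{ h \in \sample : u_T \in h }$; thus $u_T = u_{T'}$ would force $T = T'$. Consequently $h_1$ must contain at least $2^{m-1}$ distinct nodes, i.e.\ $|h_1| \geq 2^{m-1}$. Combining with the hypothesis $\sup_{h \in \hyperg} |h| \leq b$ gives $2^{m-1} \leq b$, hence $m \leq 1 + \log_2 b$; as $m$ is an integer, $m \leq \lfloor \log_2 b \rfloor + 1 = \lfloor \log_2(2b) \rfloor = d_1$, which is exactly the claimed bound.

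I expect the only delicate point to be the bookkeeping that makes the witnessing map well-defined and localized to $h_1$: one must check that every nonempty shattered subset is realized by a genuine single-node range (never the empty range), and that restricting to subsets forced to contain $h_1$ is precisely what confines all $2^{m-1}$ witnessing nodes inside the single hyper-edge $h_1$. Everything else is elementary arithmetic with the floor function, so the entire conceptual content lies in choosing this restriction and recognizing the bijection between witnessing nodes and realized subsets.
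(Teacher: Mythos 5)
Your proof is correct and follows essentially the same route as the paper's: both arguments observe that the $2^{m-1}$ distinct subsets of a shattered sample containing a fixed hyper-edge $h_1$ must each be witnessed by a distinct single-node range whose node lies in $h_1$, forcing $|h_1| \geq 2^{m-1} $ and hence $2^{m-1} \leq b$. Your version merely makes explicit the injectivity of the map $T \mapsto u_T$ and the exclusion of the empty range, details the paper leaves implicit.
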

For instance, when sampling simple paths of length at most $\kappa \geq 1$ for estimating the $\kappa$-path centrality it holds $b \leq \kappa$. 
For the triangle centrality, $b=3$. 
Furthermore, note that $b \leq n$ as $h \subseteq V, \forall h \in \hyperg$. 
In light of Theorem~\ref{thm:vcsamplecomplexity} and Lemma~\ref{thm:lemmavckbound},
and under the assumption that $\centr(\nodeset^*) \in \BTi{1}$, as common in real-world networks~\cite{yoshida2014almost,mahmoody2016scalable}, it is immediate to observe that the sufficient sample size $m$ to achieve an $( 1 - \frac{1}{e} - \varepsilon )$-approximation is 
\begin{align}
m \in \BO{ \frac{d_k  + \log(1/\delta) }{ \varepsilon^2  } } ,  m \in \BO{ \frac{d_1 k \log(k)  + \log(1/\delta) }{ \varepsilon^2  } } ,  \label{eq:samplecomplexpl}
\end{align}
where $d_1$ can be easily bounded with Lemma~\ref{thm:vcboundgeneral}. 

Remarkably, 
the bounds \eqref{eq:samplecomplexpl} are never worse than \eqref{eq:unionboundsamples} 
(from the union bound, as $d_k \leq \lfloor k \log_2(n) \rfloor$), 
while they are significantly refined in many realistic scenarios. 
An interesting example is given by \emph{small-world} networks~\cite{watts1998collective}, that have vertex diameter $B \in \BO{\log n}$; 
in such cases, both \eqref{eq:samplecompl} and \eqref{eq:samplecomplexpl} provide an \emph{exponential improvement} on the dependence on $n$ for set betweenness centrality approximations (as $b \leq B \in \BO{\log n}$, and $d_1 \in \BO{\log \log n}$). 
Furthermore, Lemma \ref{thm:lemmavckbound} and \ref{thm:vcboundgeneral} imply that, when $d_1$ or $d_k$ are constants (not depending on $n$), the sample complexity bounds of \eqref{eq:samplecomplexpl} are completely \emph{independent} of the graph size $n$. 
This is the case, for example, when estimating the $\kappa$-path centrality and $\kappa$ is a small constant (as it is typically in applications), where $d_1 = \lfloor \log_2(\kappa) \rfloor$, 
or for undirected graphs with unique shortest paths between all pairs of vertices\footnote{In road networks, this feature is often enforced~\cite{geisberger2008better}.}, where $d_1 = 3$ (see Lemma 2 of~\cite{RiondatoK15}). 
These observations confirm that in many situations standard techniques do not capture the correct underlying complexity of the set centrality approximation task, yielding loose guarantees. 

We remark that, while these results improve state-of-the-art bounds from a theoretical perspective, they also have a practical impact as they can be embedded in the progressive sampling strategy of \algname\ by setting an upper limit to the number of samples to consider.

\section{Experimental Evaluation}
\label{sec:experiments}

\begin{figure*}[ht]
\centering
\begin{subfigure}{.75\textwidth}
  \centering
  \includegraphics[width=\textwidth]{./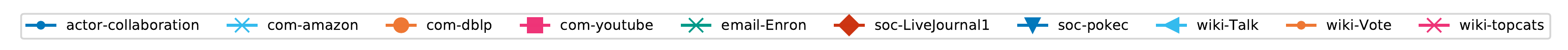}
\end{subfigure} \\
\begin{subfigure}{.24\textwidth}
  \centering
  \includegraphics[width=\textwidth]{./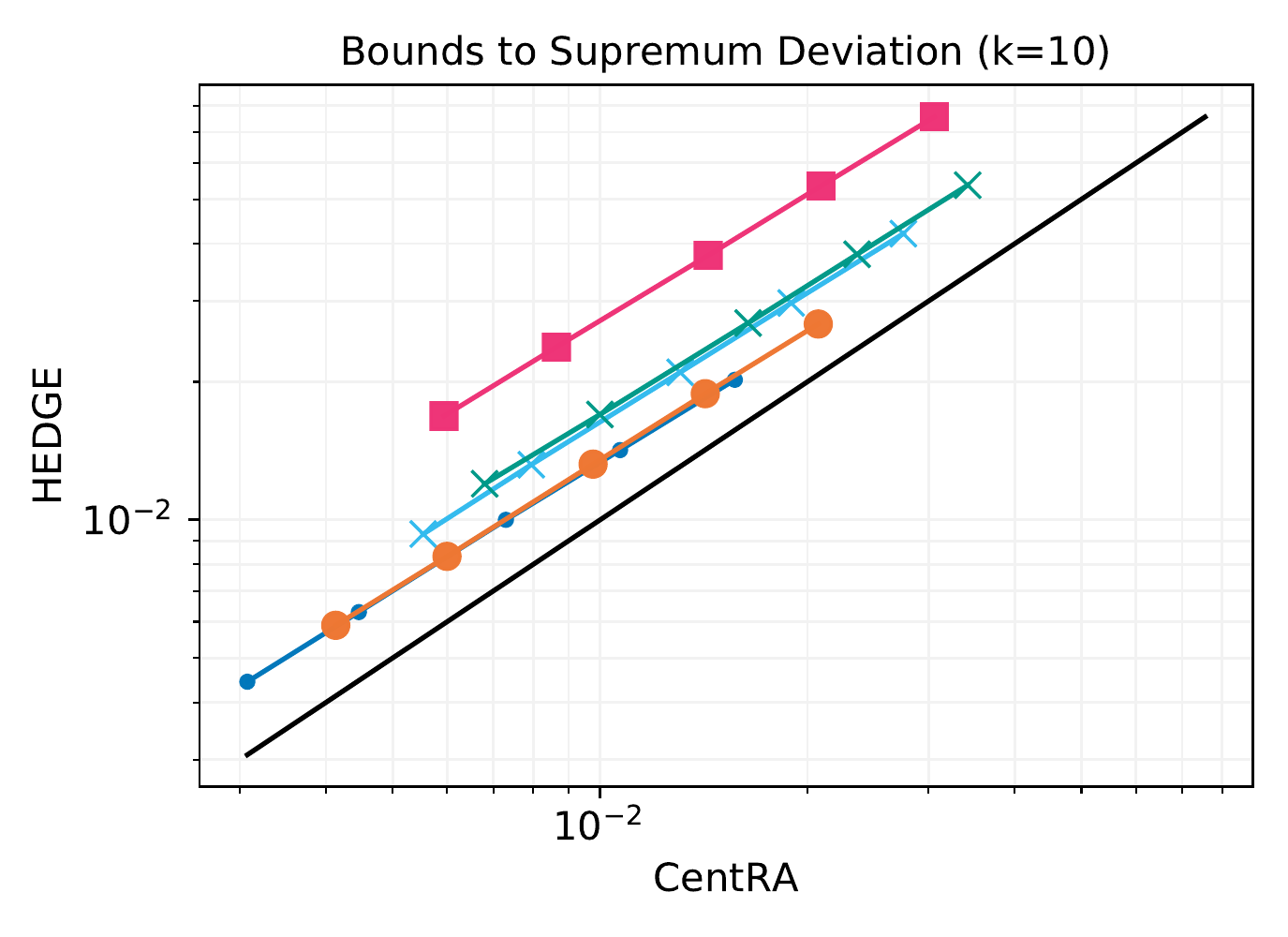}
  \caption{}
\end{subfigure}
\begin{subfigure}{.24\textwidth}
  \centering
  \includegraphics[width=\textwidth]{./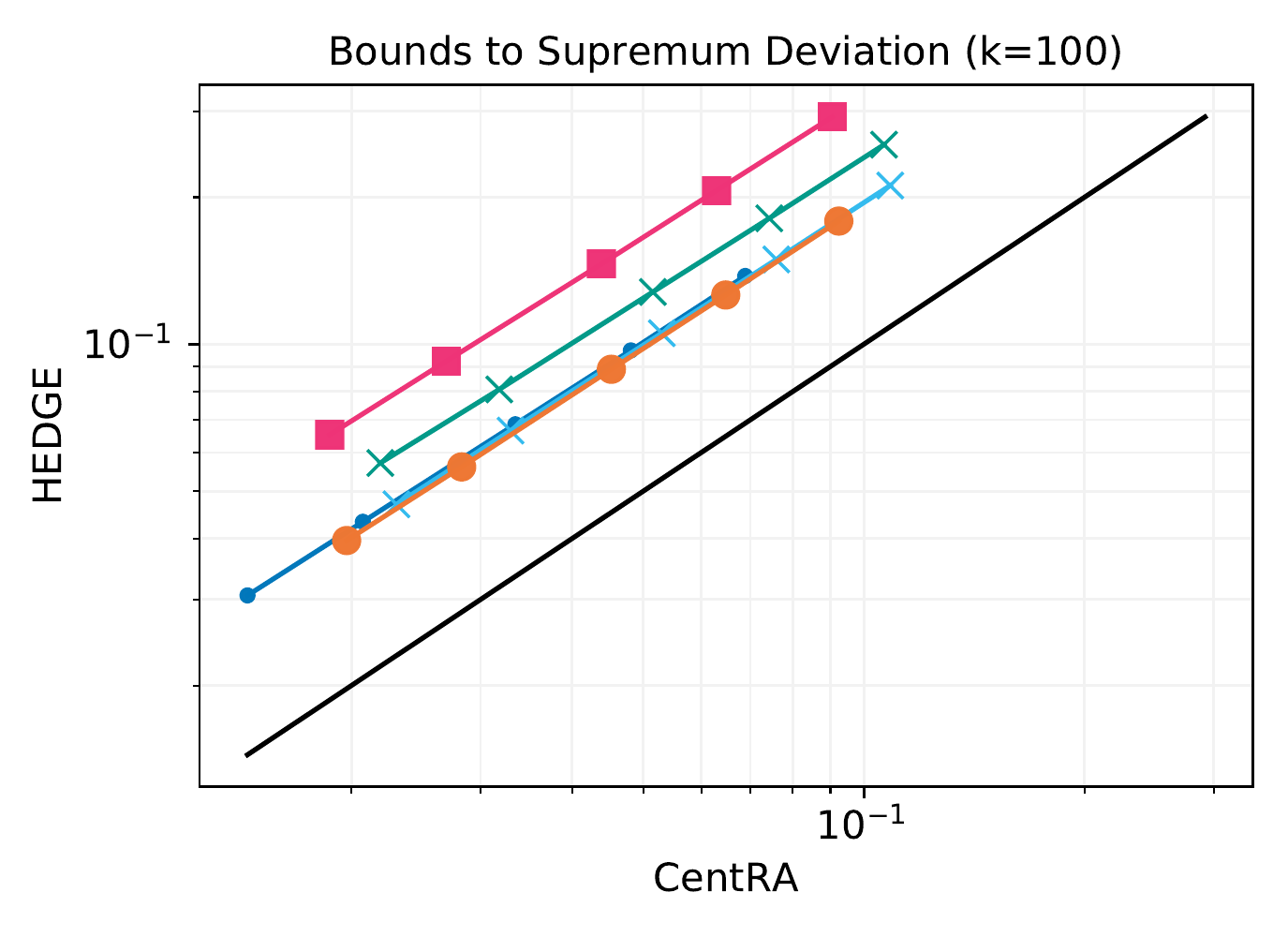}
  \caption{}
\end{subfigure}
\begin{subfigure}{.24\textwidth}
  \centering
  \includegraphics[width=\textwidth]{./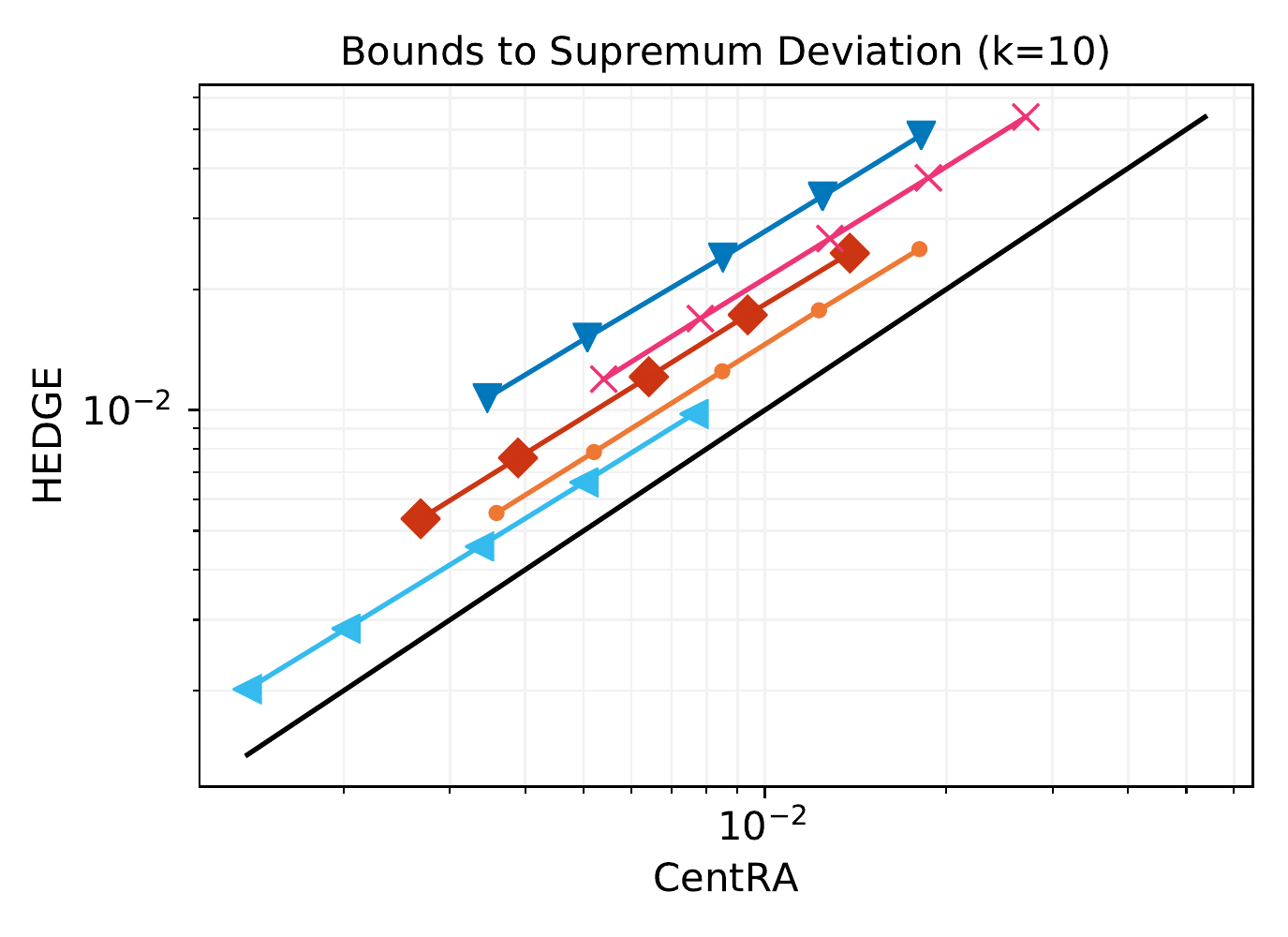}
  \caption{}
\end{subfigure}
\begin{subfigure}{.24\textwidth}
  \centering
  \includegraphics[width=\textwidth]{./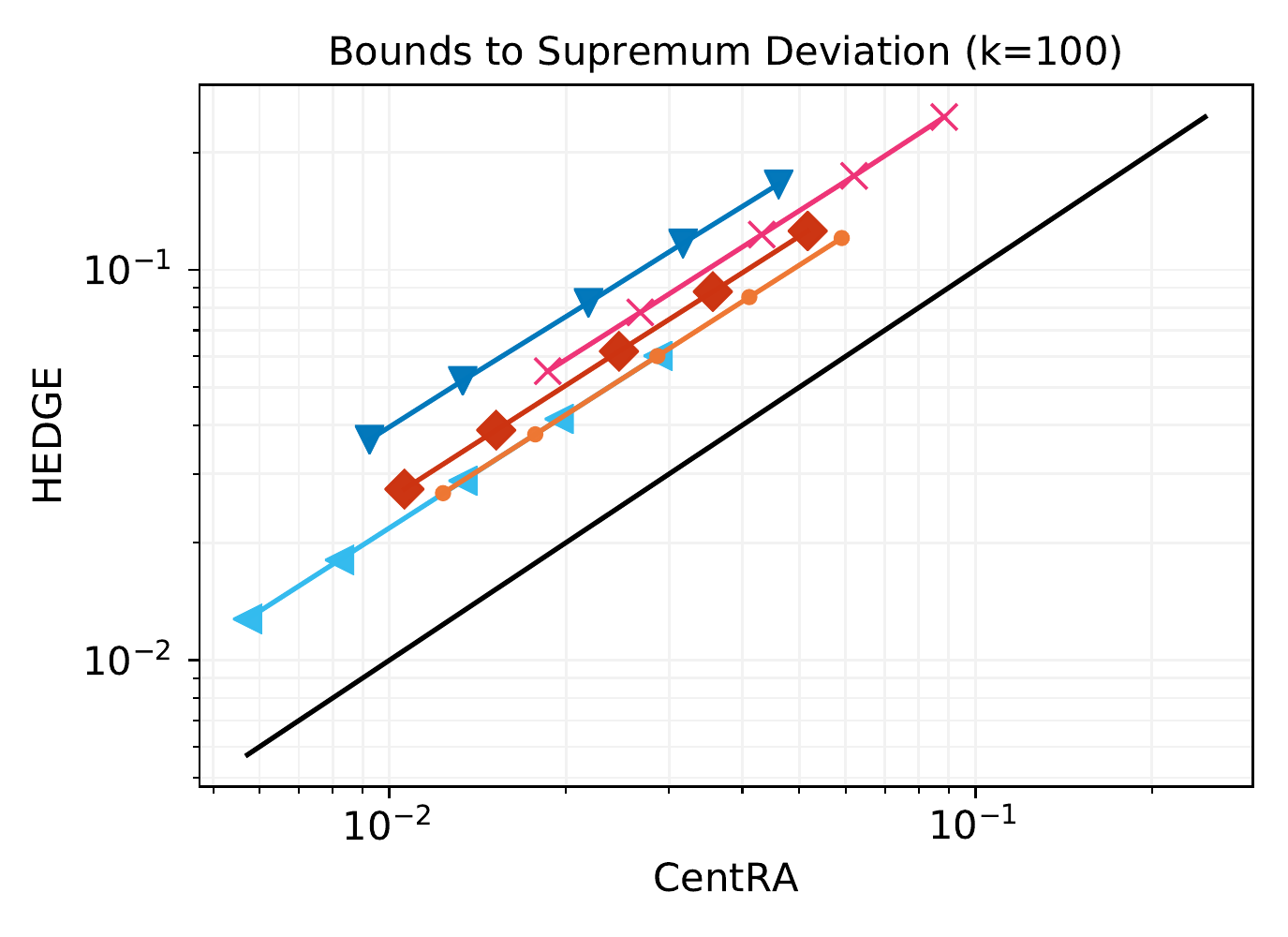}
  \caption{}
\end{subfigure}
\caption{
Comparison between the 
bounds to the Supremum Deviation $\supdev$ obtained by HEDGE ($y$ axes, based on the union bound) and \algname\ ($x$ axes, Section~\ref{sec:algorithm}) on samples of size $m \in \{ 5 \cdot 10^4 , 10^5 , 2 \cdot 10^5 , 5 \cdot 10^5 , 10^6 \}$, for $k=10$ and $k=100$ (other values in Figure \ref{fig:fixedmappx}). 
Figures (a)-(b): undirected graphs.
Figures (c)-(d): directed graphs. 
Each point corresponds to a different value of $m$ (the same for both algorithms).
The black diagonal line corresponds to $y = x$.
}
\Description{This Figure compares the bounds to the supremum deviation obtained by CentRA and HEDGE on random samples of different size, for different values of k. 
The plots show that CentRA obtains smaller bounds to the supremum deviation in all cases, confirming that data-dependent bounds are more accurate than standard techniques.}
\label{fig:fixedm}
\end{figure*}

\begin{figure*}[ht]
\centering
\begin{subfigure}{.75\textwidth}
  \centering
  \includegraphics[width=\textwidth]{./figures/bounds-fixed-m-legend.pdf}
\end{subfigure} \\
\begin{subfigure}{.24\textwidth}
  \centering
  \includegraphics[width=\textwidth]{./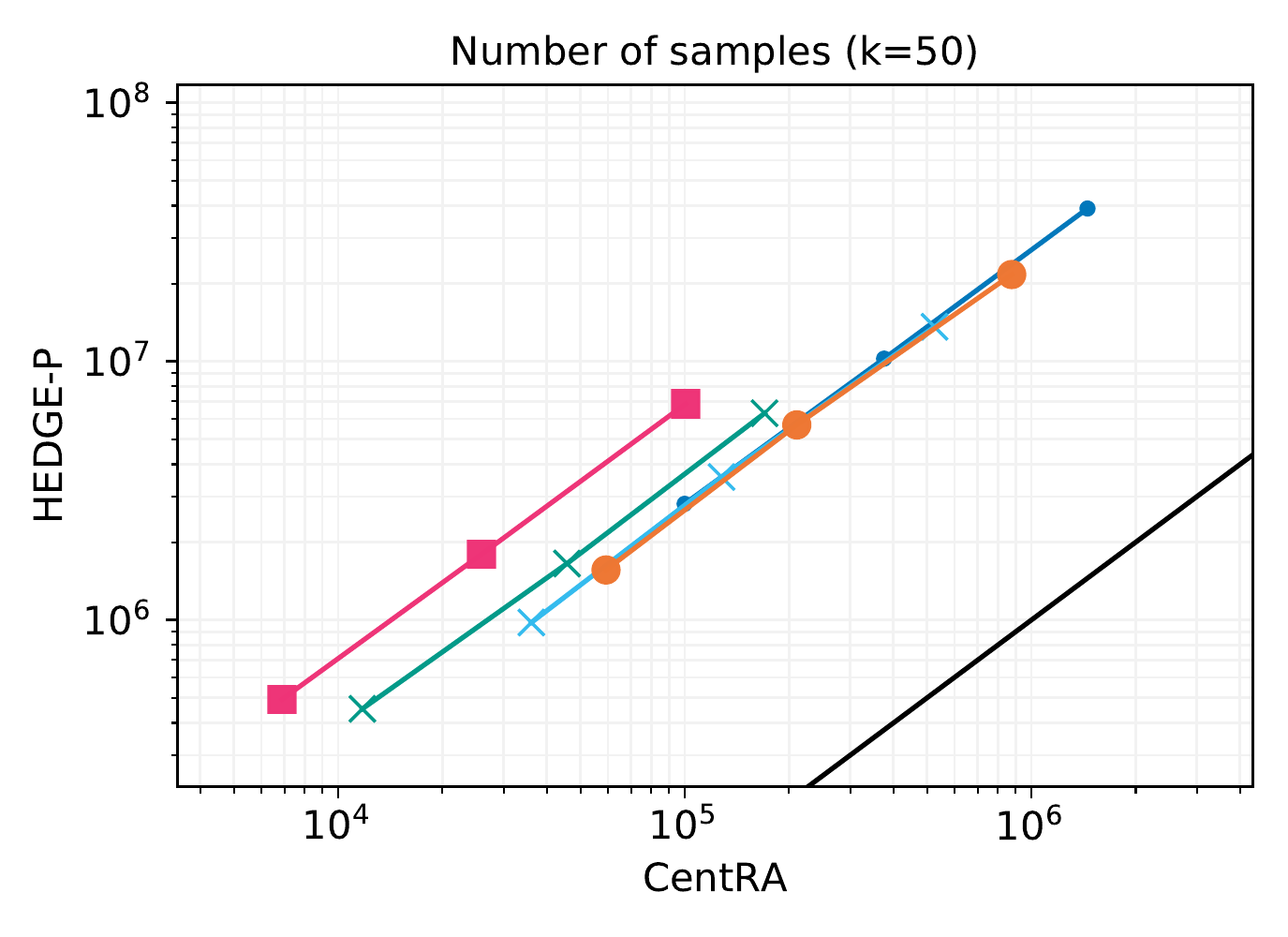}
  \caption{}
\end{subfigure}
\begin{subfigure}{.24\textwidth}
  \centering
  \includegraphics[width=\textwidth]{./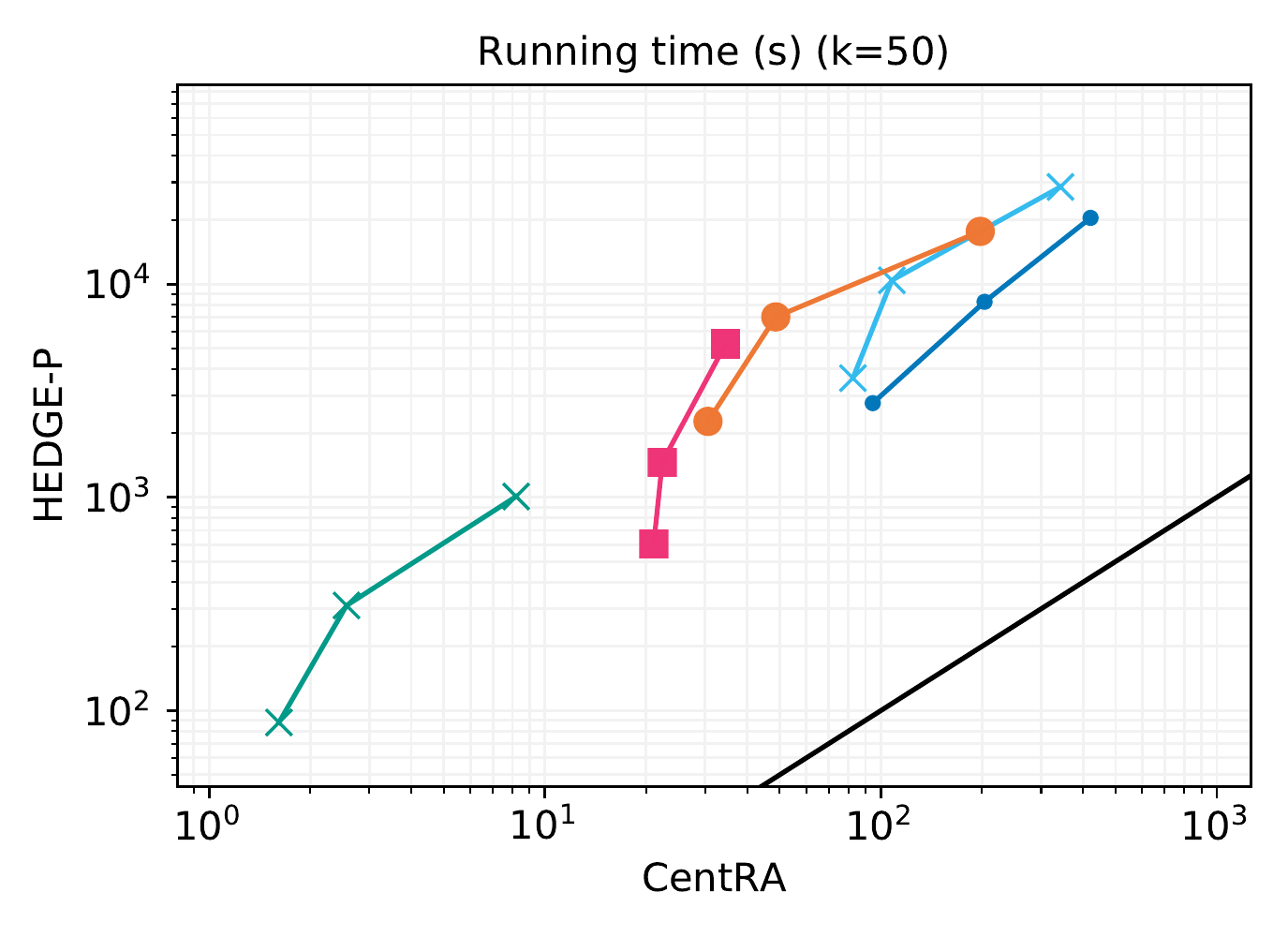}
  \caption{}
\end{subfigure}
\begin{subfigure}{.24\textwidth}
  \centering
  \includegraphics[width=\textwidth]{./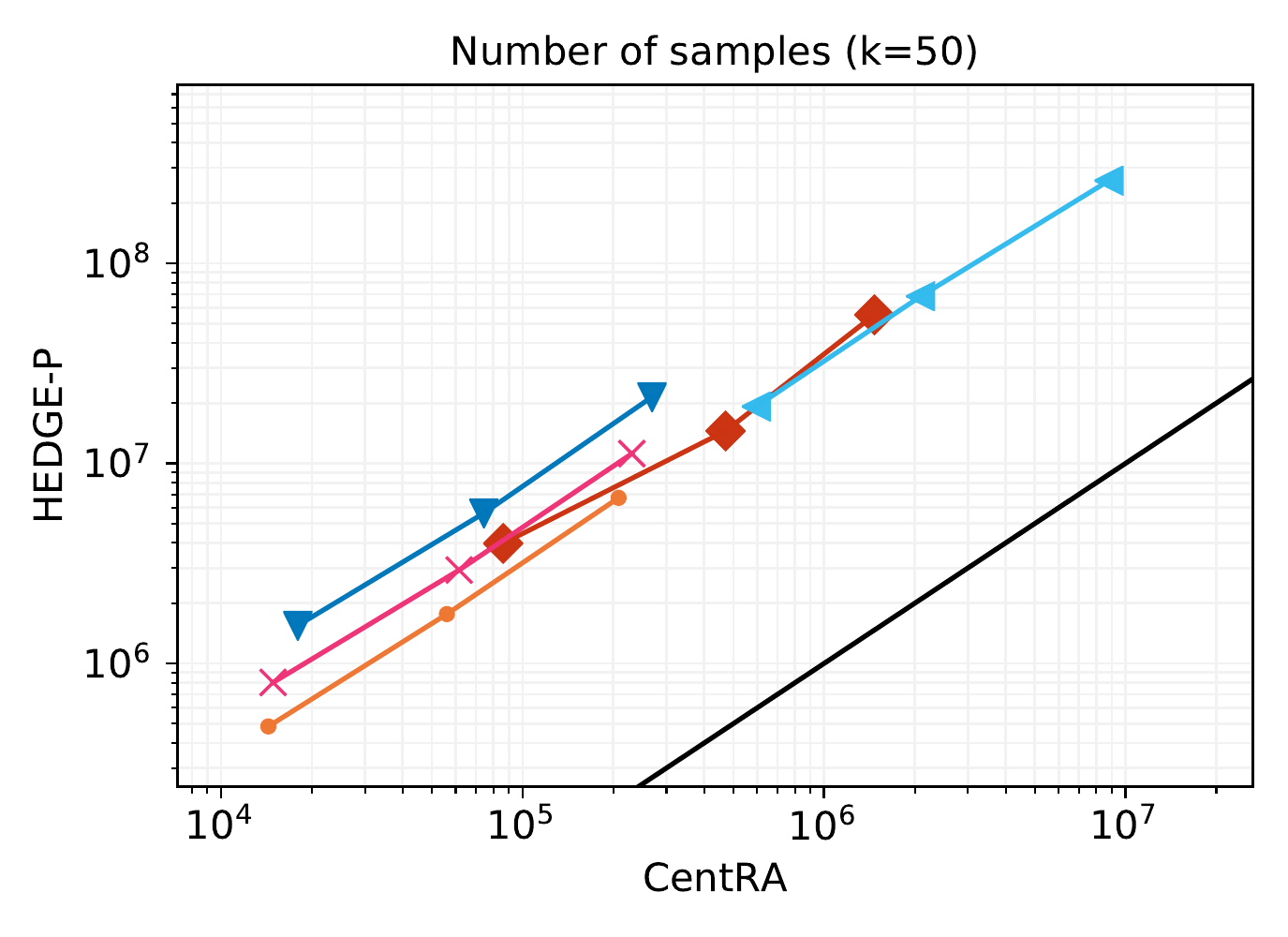}
  \caption{}
\end{subfigure}
\begin{subfigure}{.24\textwidth}
  \centering
  \includegraphics[width=\textwidth]{./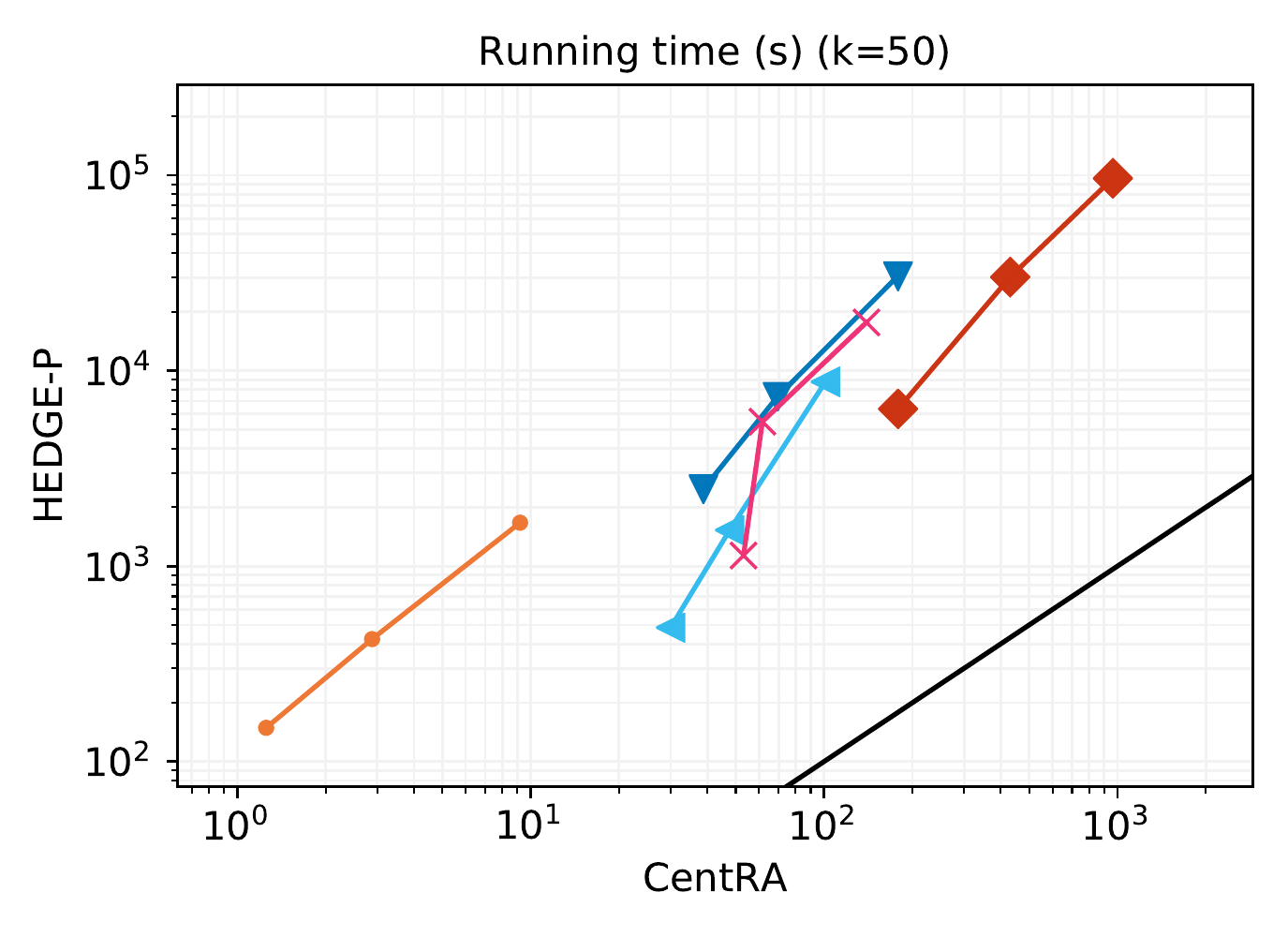}
  \caption{}
\end{subfigure}
\caption{
Comparison between the 
number of samples and running times (in seconds) to obtain an $(1-1/e - \varepsilon)$-approximation of the optimal set centrality $\nodeset^*$ using \algname\ ($x$ axes) and HEDGE-P ($y$ axes) for $\varepsilon \in \{ 0.2 , 0.1 , 0.05 \}$ and $k=50$ (other values of $k$ shown in Appendix). 
Figures (a)-(b): samples and times for undirected graphs.
Figures (c)-(d): directed graphs.
Each point corresponds to a different value of $\varepsilon$ (the same for both algorithms).
The black diagonal line is at $y = x$. 
}
\label{fig:progsampling}
\Description{This Figure compares the running time and number of samples of CentRA and HEDGE-P to compute an high quality approximation of the most central set of nodes. 
The plots clearly show that, in all cases, CentRA converges after considering a much smaller number of random samples, and at a fraction of the time required by HEDGE-P, obtaining an improvement of up to two orders of magnitude.}
\end{figure*}
In this Section we present our experiments. 
The goals of our experimental evaluation are: 
1) to test the practical impact of the number $t$ of Monte Carlo trials for estimating the AMCERA, both in terms of running time and bound on the SD $\supdev$;
2) to compare the data-dependent bounds computed by \algname\ with state-of-the-art techniques; 
3) to show the impact, both in terms of running time and number of random samples, that a finer characterization of the SD has to the task of centrality maximization. 

\emph{Experimental Setup.} 
We implemented \algname\ in \texttt{C++}. The code and the scripts to reproduce all experiments are available 
at
\url{https://github.com/leonardopellegrina/CentRA}. 
All the code was compiled with \texttt{GCC} 8 and run on a machine with 2.30 GHz Intel Xeon CPU, 512 GB of RAM, on Ubuntu 20.04. 

In our experiments we focus on the set betweenness centrality, and remand testing other centralities to the full version of this work. 
To sample random shortest paths from $\hyperg$, the procedure \texttt{sampleHEs} chooses a starting node $u$ and final node $v$ uniformly at random, and then picks a random shortest paths from $u$ to $v$, using the balanced bidirectional BFS~\cite{borassi2019kadabra} (in which two BFSs are expanded from both $u$ and $v$). 
If $v$ is not reachable from $u$, the procedure returns the empty set.  
We compare \algname\ with HEDGE~\cite{mahmoody2016scalable}, the state-of-the-art method for approximate centrality maximization. 
As the implementation of HEDGE is not publicly available, we implemented it is as a modification of \algname.  
Note that we do not compare with the exact approach (that computes centralities exactly), as HEDGE~\cite{mahmoody2016scalable} already outperforms it on small graphs, and because the exact approach does not conclude in reasonable time on the large graphs we considered\footnote{Computing the centralities of \emph{individual nodes} on large graphs already requires significant computation (e.g.,~\cite{borassi2019kadabra} reports that $1$ week is required on a $40$ cores cluster for graphs of size comparable to the ones we tested). In addition, it is infeasible to update these centralities at each of the $k$ iterations of the greedy algorithm.}, and to~\cite{yoshida2014almost} (due to some flaws in the analysis pointed out by~\cite{mahmoody2016scalable}). 
Note that \algname\ does not use the upper bounds proved in Section~\ref{sec:samplecomplexity} to limit its number of samples, so to directly test its data-dependent bounds. 
However, due to space constraints, we defer to the Appendix (Section \ref{sec:additionalexperiments}) analogous comparisons with a variant of \algname, that we call \algname-VC, that uses VC-dimension bounds (Section \ref{sec:samplecomplexity}) instead of the MCERA (Section \ref{sec:boundsupdevrade}). 
We briefly comment these results in the following paragraphs.
We repeat all experiments $10$ times and report averages (standard deviations not shown since the experiments were extremely stable). 

~\emph{Graphs.}
We tested \algname\ on $5$ undirected and $5$ directed real-world graphs from SNAP\footnote{\url{http://snap.stanford.edu/data/index.html}} and KONECT\footnote{\url{http://konect.cc/networks/}}.
The characteristics of the graphs are described in detail in Table~\ref{tab:graphs} (in the Appendix).

\emph{Impact of \algname\ parameters.}
In our first experiments, we evaluate the practical impact of the number $t$ of Monte Carlo trials used by \algname\ to derive data-dependent bounds to the SD via the AMCERA (Section~\ref{sec:boundsupdevrade}). 
We run \algname, letting it process a fixed number of samples $m = 5 \cdot 10^5$ using $t \in \{ 1 , 10 , 50 , 10^2 , 2.5 \cdot 10^2 , 5 \cdot 10^2 \}$, measuring the upper bound $\eta$ to the SD (Theorem~\ref{thm:bound_dev}) and its running time. 
We focus on the two largest and two smallest graphs of the ones we considered, two undirected and two directed. 
In all cases, we fix $\delta = 0.05$ (note that we do not vary $\delta$ as its impact is negligible, due to the use of exponential tails bounds, see Thm.~\ref{thm:bound_dev}).  
In Figure \ref{fig:paramscentra} (a) we show the bounds to the SD, and in Figure \ref{fig:paramscentra} (b) the running time (Figures in the Appendix). 
We conclude that, when using more than $10^2$ trials, the improvements in terms of deviation bound is negligible. 
This confirms that a small number $t$ of Monte Carlo trials is sufficient to estimate the ERA accurately. 
More importantly, the impact to the running time is negligible when increasing $t$ on large graphs, and not predominant for smaller graphs (Figure \ref{fig:paramscentra} (b)). 
This means that the most expensive operations performed by \algname\ are sampling shortest paths and processing the sample with the greedy algorithm, while the AMCERA is computed efficiently. 
Based on these observations, we fix $t  = 10^2$ for all experiments.

\emph{Bounds to the Supremum Deviation.}
We now compare the bounds to the Supremum Deviation $\supdev$ obtained by HEDGE (using the union bound) with our analysis of Section~\ref{sec:boundsupdevrade} based on the AMCERA, at the core of \algname. 
In this way, we directly assess the trade-off in terms of accuracy and sample size $m$ of both algorithms.  
To do so, we sample for both algorithms the same number $m$ of hyper-edges and compute $\eta \geq \supdev$ for \algname\ (using Thm.~\ref{thm:bound_dev}) and combine the Chernoff bound with the union bound (following the analysis of Lemma~2 of~\cite{mahmoody2016scalable}) for HEDGE, obtaining $\eta^{UB} \geq \supdev$. 
We considered $m \in \{ 5 \cdot 10^4 , 10^5 , 2 \cdot 10^5 , 5 \cdot 10^5 , 10^6 \}$, $k \in \{ 10 , 25 , 50 , 75 , 100 \}$, and $\delta = 0.05$ for both methods. 

Figure~\ref{fig:fixedm} shows the results for these experiments. 
Figures~\ref{fig:fixedm}~(a)-(b) show the bounds to the SD as functions of $m$ for $k=10$ and $k=100$ for HEDGE ($y$ axes) and \algname\ ($x$ axes) for undirected graphs (to ease readability, directed are in Figures~\ref{fig:fixedm}~(c)-(d)). 
Note that each point in the plots corresponds to a value of $m$, while the diagonal black line is at $y=x$. 
Figures for other values of $k$ are very similar and showed in the Appendix (Figure \ref{fig:fixedmappx}) for space constraints. 
In Figure \ref{fig:estcentr} we show the estimated centrality of the set $\nodeset$ returned by \algname\ as function of $k$, for $m=10^6$.

From these results, we can clearly see that \algname\ obtains a much smaller bound to the SD than HEDGE (as the points are above the black diagonal), uniformly for all graphs. 
This is a consequence of the sharp \emph{data-dependent} bounds derived in Section~\ref{sec:boundsupdevrade}, which significantly improve the loose guarantees given by the union bound. 
We observe that, for $k=10$, the upper bound $\eta^{UB}$ to the SD from the union bound is at least $1.25$ times higher than $\eta$ for all graphs, more than $1.5$ higher for $7$ graphs, and at least $2$ times higher for $3$ graphs. 
At $k=50$ (Figure \ref{fig:fixedmappx}), it is $1.5$ higher for all graphs, at lest $2$ times higher for $6$ graphs, and at least $3$ times higher for $2$ graphs. 
This gap is even larger for $k=100$. 
Therefore, $\eta$ is often \emph{a fraction} of the upper bound $\eta^{UB}$ from HEDGE. 
Interestingly, we observe this difference on both smaller and larger graphs. 
Furthermore, this gap grows with $k$, confirming that the data-dependent bounds of \algname\ scale much more nicely than the union bound w.r.t. $k$. 
To appreciate the magnitude of the improvement, note that both $\eta$ and $\eta^{UB}$ decrease proportionally with $\BTi{\sqrt{1/m}}$ (see the rates in Figures~\ref{fig:supdevvsmindividual}); 
this means that to make the upper bound to the SD $\alpha$ times smaller, we need a sample that is approximately $\alpha^2$ times larger. 
Therefore, we expect that obtaining a sharper bound to the SD should have a significant impact on the the sample size and, consequently, on the cost to analyze it. 
The following experiments confirm this intuition. 
Finally,
the results deferred to Section~\ref{sec:additionalexperiments} (comparing \algname, \algname-VC, and HEDGE) confirm our theoretical insights: 
the novel VC-dimension bounds (used by \algname-VC) are always equally or more accurate than standard bounds, while \algname\ is the overall superior approach, as it leverages sharp data-dependent techniques.

\emph{Approximate Centrality Maximization with Progressive Sampling.}
In the last set of experiments, our goal is to evaluate the performance of \algname\ to compute an $(1-1/e - \varepsilon)$-approximation of the optimum $\nodeset^*$ with progressive sampling. 
Note that HEDGE cannot be applied to this problem as is, since it requires knowledge of $\centr(\nodeset^*)$ to fix its sample size. 
Since \algname\ is the \emph{first} algorithm of this kind, we compare it to a \emph{new} variant of HEDGE, that we call HEDGE-P, that we implemented as a modification of \algname. 
HEDGE-P follows the same iterative scheme of Algorithm~\ref{algo:main}, but with crucial differences: 
instead of using the AMCERA to upper bound the SD, it uses $\eta^{UB}$ (described previously) from a union bound (Lemma 2 of~\cite{mahmoody2016scalable}). 
As the stopping condition, we follow the analysis of~\cite{mahmoody2016scalable}: 
HEDGE-P stops when it guarantees that $\supdev \leq \varepsilon \centr(\nodeset^*) / 2 = \eta^{UB} $, as this is a sufficient condition to obtain an $(1-1/e - \varepsilon)$-approximation (see Thm. 1 of~\cite{mahmoody2016scalable}). 
Instead, \algname\ leverages a refined stopping condition (line 14 of Alg. 1, Section \ref{sec:algorithm}).
We compare \algname\ to HEDGE-P in terms of the number of samples and running times required to converge, using $\varepsilon \in \{0.2 , 0.1, 0.05\}$ and $k \in \{10 , 50 , 100 \}$ (that are well representative of other values of $k$, as shown previously). 

Figure \ref{fig:progsampling} shows the results of these experiments for $k=50$ (other values in Figure \ref{fig:progsamplingappx}). 
\algname\ obtains an $(1-1/e - \varepsilon)$-approximation of the optimum $\nodeset^*$ using a \emph{fraction} of the samples and at a fraction of the time of HEDGE-P. 
For $k=10$, \algname\ needs a random sample that is one order of magnitude smaller than HEDGE-P. 
For $k=50$, the random samples required by \algname\ are at least $20$ times smaller than the samples needed by HEDGE-P. 
When $k=100$, the sample size reduction is close to $2$ orders of magnitude (Figure \ref{fig:progsamplingappx}). 
From these results it is clear that a better characterization of the SD 
and a refined stopping condition allow \algname\ to be a much more efficient algorithm for approximate centrality maximization in terms of sample sizes. 
As expected, we observed that reducing the sample sizes has a significant impact on the running times. 

In fact, we observed the running time to be essentially linear w.r.t. the sample size. 
For $k=10$, for $3$ graphs and $\varepsilon = 0.2$, \algname\ is at least $5$ times faster than HEDGE-P; for all other cases, \algname\ is one order of magnitude faster. 
For $k=50$, \algname\ finished in less than $1/20$ of the time needed by HEDGE-P, improving up to two orders of magnitude when $k = 100$ (Figure \ref{fig:progsamplingappx}). 
Both \algname\ and HEDGE-P conclude after a small number of iterations (always at most $3$), confirming that the adaptive progressive sampling schedule is very accurate. 
These observations confirm the efficiency of \algname: our new techniques provide efficiently computable, sharp accuracy bounds that enable much more scalable approximations for the centrality maximization task.

\section{Conclusions}
In this work we presented \algname, a new algorithm for approximate centrality maximization of sets of nodes. 
First, 
we developed a new approach based on efficiently computable bounds to Monte Carlo Rademacher Averages, a fundamental tool from statistical learning theory to obtain tight data-dependent bounds to the Supremum Deviation. 
Then, we derived new sample complexity bounds, proving that standard techniques typically provide overly conservative guarantees.
We tested \algname\ on large real-world networks, showing that it significantly outperforms the state-of-the-art. 

For future work,  
\algname\ can be extended to analyse dynamic~\cite{bergamini2014approximating,bergamini2015fully,hayashi2015fully}, uncertain~\cite{saha2021shortest}, and temporal networks~\cite{santoro2022onbra}, all settings in which our contributions may be useful to design efficient approximation algorithms for set centralities. 
Another direction is to consider the approximation of different group centralities that are not directly captured by the framework of Section~\ref{sec:prelimsetcentr} (e.g., \cite{bergamini2018scaling,angriman2021group}). 
More generally, it would be interesting to adapt the techniques introduced in this work to other data mining problems, such as mining interesting~\cite{riondato2020misosoup,fischer2020discovering}, significant~\cite{hamalainen2019tutorial,pellegrina2019hypothesis,pellegrina2019spumante,pellegrina2022mcrapper,dalleiger2022discovering}, and causal patterns~\cite{simionatobounding}.

\section{Acknowledgments}
This work was supported by the \qt{National Center for HPC, Big
Data, and Quantum Computing}, project CN00000013, funded by the
Italian Ministry of University and Research (MUR). 

\newpage
\bibliographystyle{ACM-Reference-Format}
\bibliography{bibliography}

\appendix 

\section{Appendix}

In this Appendix we provide the proofs for our main results and additional experimental results that were deferred due to space constraints.

\subsection{Proofs of Section \ref{sec:boundsupdevrade}}

First, we state some key technical tools.
The first is a contraction inequality for Rademacher averages. 
\begin{lemma}[Lemma 26.9 \cite{ShalevSBD14}]
\label{thm:contrrade}
Let $\F$ be a family of functions from a domain $\X$ to $\R$.
Let $\phi : \R \rightarrow \R$ be a $L$-Lipschitz function, such that, for all $a,b \in \R$ it holds $|\phi(a) - \phi(b)| \leq L|a-b|$. 
Define $\vsigma$ as a vector of $m$ i.i.d. Rademacher random variables $\vsigma = < \vsigma_1 , \dots  , \vsigma_m >$. 
For any $X \in \X^m$ with $X = \{ x_1 , \dots  , x_m \}$, it holds
\begin{align*}
\E_\vsigma \Biggr[ \sup_{f \in \F} \sum_{i=1}^m \vsigma_i \phi( f(x_i) ) \Biggr] 
\leq L \E_{\vsigma} \Biggr[ \sup_{f \in \F} \biggl\{ \sum_{i=1}^m \vsigma_{i} f(x_i) \biggl\} \Biggr] . 
\end{align*}
\end{lemma}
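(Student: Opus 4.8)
The plan is to first reduce to the normalized case $L=1$ and then prove the inequality by removing the nonlinearity $\phi$ from one coordinate at a time. For the reduction, I would replace $\phi$ by $\phi/L$, which is $1$-Lipschitz; since $\sum_{i=1}^m \vsigma_i (\cdot)$ is linear and the supremum commutes with multiplication by the positive constant $L$, the general statement follows from the case $L=1$. From now on assume $\phi$ is $1$-Lipschitz and drop the factor $L$.

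The heart of the argument is a \emph{single-coordinate} contraction claim: for any function $h \colon \F \to \R$, any point $x \in \X$, and a single Rademacher variable $\sigma$,
\[
\E_\sigma\bigl[\,\sup_{f\in\F}(h(f)+\sigma\,\phi(f(x)))\,\bigr] \le \E_\sigma\bigl[\,\sup_{f\in\F}(h(f)+\sigma f(x))\,\bigr].
\]
To prove it I would expand the expectation over the two equiprobable values $\sigma=\pm1$, writing the left-hand side as $\frac12\sup_f(h(f)+\phi(f(x)))+\frac12\sup_f(h(f)-\phi(f(x)))$, and pick (near-)maximizers $f_1,f_2$ of the two suprema. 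Using that $\phi$ is $1$-Lipschitz, $\phi(f_1(x))-\phi(f_2(x)) \le |f_1(x)-f_2(x)|$, and a short case analysis on the sign of $f_1(x)-f_2(x)$ regroups the terms as $(h(f_1)\pm f_1(x))+(h(f_2)\mp f_2(x))$, each of which is bounded by the corresponding supremum on the right-hand side; this yields exactly the claimed inequality.

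With the single-coordinate claim in hand, I would run an induction that swaps $\phi(f(x_i))$ for $f(x_i)$ one index at a time. Concretely, define the interpolating quantity $A_k = \E_{\vsigma}[\sup_{f}(\sum_{i\le k}\vsigma_i f(x_i) + \sum_{i>k}\vsigma_i\phi(f(x_i)))]$, so that $A_0$ is the left-hand side of the lemma and $A_m$ is the right-hand side. To show $A_{k-1}\le A_k$, I would condition on all $\vsigma_i$ with $i\neq k$: the remaining terms then form a fixed function $h(f)$ of $f$, and applying the single-coordinate claim with $x=x_k$ gives $A_{k-1}\le A_k$. Telescoping over $k=1,\dots,m$ and taking the outer expectation back yields $A_0\le A_m$, which is the desired bound.

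The step I expect to be the main obstacle is handling the suprema rigorously when they are not attained: the $f_1,f_2$ in the single-coordinate claim should be taken as $\varepsilon$-approximate maximizers and the bound derived with an additive $\varepsilon$ that is then sent to $0$. A secondary point of care is the bookkeeping in the induction, making sure that when we condition on $\{\vsigma_i : i\neq k\}$ the already-swapped terms $\vsigma_i f(x_i)$ and the not-yet-swapped terms $\vsigma_i\phi(f(x_i))$ are all correctly absorbed into the fixed function $h(f)$ so that the single-coordinate claim applies verbatim.
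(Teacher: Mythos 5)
Your proof is correct. Note, however, that the paper does not prove this statement at all: it is imported verbatim as Lemma 26.9 of the cited textbook \cite{ShalevSBD14}, and your argument --- normalizing to $L=1$, establishing the single-coordinate contraction claim via $\varepsilon$-approximate maximizers and a case split on the sign of $f_1(x)-f_2(x)$, then telescoping one coordinate at a time by conditioning on the remaining Rademacher variables --- is precisely the standard proof given in that source, so it matches the intended derivation.
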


We now state a concentration inequality for functions uniformly distributed on the binary hypercube.
\begin{theorem}[Theorem 5.3 \cite{boucheron2013concentration}]
\label{thm:concentrationhyperc}
For $c > 0$, let a function $g : \{ -1 , 1 \}^c \rightarrow \mathbb{R}$ and assume that $X$ is uniformly distributed on $\{ -1 , 1 \}^c$. 
Let $v > 0$ be such that 
\begin{align*}
\sum_{i=1}^{c} \pars{ g\pars{x} - g ( \overline{x}^i )  }^2_{+} \leq v 
\end{align*}
for all $x = ( x_1 , \dots , x_c ) \in \{-1 , 1\}^c$, where 
\[
\overline{x}^i = ( x_1 , \dots , x_{i-1} , -x_i , x_{i+1} , \dots , x_c )
\] 
is a copy of $x$ with the $i$-th component multiplied by $-1$, and $( b )_+ = \max \{ b , 0 \}$ is the positive part of $b \in \mathbb{R}$.
Then, the random variable $Z \doteq g(X)$ satisfies, for all $q > 0$,
\begin{align*}
\Pr \pars{ Z > \E \sqpars{ Z } + q } , \Pr \pars{ Z < \E \sqpars{ Z } - q } \leq \exp ( -q^2 /v ).
\end{align*}
\end{theorem}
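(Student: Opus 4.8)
The statement is a standalone concentration inequality on the discrete cube, so the earlier Rademacher machinery is not needed; the natural route is the \emph{entropy method} (the Herbst argument). The plan is to reduce both tail bounds to a single estimate on the logarithmic moment generating function. Writing $F(\lambda) = \E[e^{\lambda Z}]$, the Chernoff bound $\Pr(Z > \E[Z]+q) \le e^{-\lambda(\E[Z]+q)}F(\lambda)$ (and its analogue with $\lambda<0$ for the lower tail) shows that it suffices to prove
\[
\log F(\lambda) - \lambda\,\E[Z] \le \frac{v\lambda^2}{4} \quad \text{for all } \lambda>0 \text{ and all } \lambda<0 .
\]
Optimizing the resulting bound $\exp(-\lambda q + v\lambda^2/4)$ at $\lambda = 2q/v$ then gives exactly $\exp(-q^2/v)$, matching the claimed tails. (All quantities are finite since $g$ is a function on the finite cube.)

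To obtain this MGF bound I would run the Herbst argument. Setting $\psi = \log F$, one has the identity $\mathrm{Ent}(e^{\lambda Z}) = \lambda F'(\lambda) - F(\lambda)\log F(\lambda) = F(\lambda)\bigl(\lambda\psi'(\lambda)-\psi(\lambda)\bigr)$, so establishing the differential inequality
\[
\mathrm{Ent}(e^{\lambda Z}) \le \frac{v\lambda^2}{4}\,F(\lambda)
\]
is equivalent to $\lambda\psi'-\psi \le v\lambda^2/4$, i.e. $\frac{d}{d\lambda}\bigl(\psi(\lambda)/\lambda\bigr) \le v/4$. Integrating from $0^+$, where $\psi(\lambda)/\lambda \to \psi'(0)=\E[Z]$, yields $\psi(\lambda) \le \lambda\,\E[Z] + v\lambda^2/4$, which is precisely the target above.

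The heart of the proof is therefore the entropy inequality, and here I would exploit that $X$ is a product of independent symmetric signs. Subadditivity (tensorization) of entropy gives $\mathrm{Ent}(e^{\lambda Z}) \le \sum_{i=1}^c \E\bigl[\mathrm{Ent}_i(e^{\lambda Z})\bigr]$, where $\mathrm{Ent}_i$ is the entropy taken over coordinate $i$ with the others frozen — a two-point space. On each two-point space I would apply the hypercube logarithmic Sobolev inequality to $e^{\lambda g/2}$ together with the elementary convexity estimate $(e^{a/2}-e^{b/2})^2 \le \tfrac14 (a-b)^2 e^{\max(a,b)}$, obtaining, for $\lambda>0$,
\[
\mathrm{Ent}_i(e^{\lambda Z}) \le \frac{\lambda^2}{4}\,\E_i\bigl[(Z-Z_i')_+^2\, e^{\lambda Z}\bigr], \qquad Z_i' \doteq g(\bar{X}^i).
\]
The key point is that the $\max$ isolates the endpoint carrying the \emph{larger} value of $g$, which is exactly where the positive part $(g(x)-g(\bar{x}^i))_+$ appears. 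Summing over $i$, factoring out $e^{\lambda Z}$, and invoking the hypothesis $\sum_i (g(x)-g(\bar{x}^i))_+^2 \le v$ pointwise gives $\mathrm{Ent}(e^{\lambda Z}) \le \tfrac{v\lambda^2}{4}F(\lambda)$, closing the upper tail. The lower tail follows from the identical argument with $\lambda<0$ (equivalently, applied to $-g$), where the same manipulation symmetrically produces the decreasing differences.

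The main obstacle is the per-coordinate two-point estimate and aligning the positive-part structure with the sign of $\lambda$. Concretely, one must verify the sharp two-point entropy/convexity inequality with constant $1/4$ and check that for $\lambda>0$ the $\max$ selects exactly the squared decreases $(g(x)-g(\bar{x}^i))_+^2$ that the hypothesis controls. Pinning down this constant is what produces the clean tail $\exp(-q^2/v)$ rather than a lossy version, and it is the only genuinely delicate computation; the tensorization and Herbst steps are then routine.
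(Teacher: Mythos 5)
The paper does not actually prove this statement: it is imported verbatim, with citation, from Boucheron, Lugosi, and Massart, so there is no internal proof to compare against. Your route --- Chernoff/Herbst reduction to a bound on the log-moment generating function, tensorization of entropy, and the two-point logarithmic Sobolev inequality on each coordinate --- is the standard proof of this result in that reference, and your treatment of the \emph{upper} tail is correct: the constant $v\lambda^2/4$ in the entropy bound, the integration of $\frac{d}{d\lambda}(\psi(\lambda)/\lambda)\le v/4$ from $0^+$, and the optimization at $\lambda=2q/v$ all check out, and for $\lambda>0$ the convexity estimate does attach the factor $e^{\lambda Z}$ to the endpoint where $g$ is larger, which is exactly where $(g(x)-g(\overline{x}^i))_+$ lives.

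The gap is in the lower tail, and it sits precisely at the point you flagged as ``the only genuinely delicate computation'' but then asserted rather than carried out. For $\lambda<0$ the same two-point estimate
\begin{align*}
\bigl(e^{a/2}-e^{b/2}\bigr)^2 \le \tfrac14 (a-b)^2 e^{\max(a,b)}, \qquad a=\lambda g(x),\ b=\lambda g(\overline{x}^i),
\end{align*}
attaches the weight $e^{\lambda\,\cdot}$ to the endpoint where $g$ is \emph{smaller}, because $\lambda<0$ reverses the order. Rewriting the resulting per-coordinate bound in the required form $\frac{\lambda^2}{4}\,\E_i\bigl[W\,e^{\lambda Z}\bigr]$ then forces $W=(Z-Z_i')_-^2$, i.e.\ the squared \emph{increases} $(g(\overline{x}^i)-g(x))_+^2$, not the decreases controlled by the hypothesis; the coordinate-flip involution does not convert one sum into the other pointwise. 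Equivalently, ``applying the argument to $-g$'' requires the different assumption $\sum_i (g(x)-g(\overline{x}^i))_-^2\le v$ for all $x$. So as written your computation establishes only the upper tail; the lower tail needs either that additional hypothesis or a genuinely different argument, and it should not be presented as following symmetrically from the same manipulation. This is not a cosmetic issue here, since the lower tail is the direction the paper actually invokes when it applies this theorem to bound the ERA by the AMCERA.
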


\begin{figure}
\centering
\begin{subfigure}{.23\textwidth}
  \centering
  \includegraphics[width=\textwidth]{./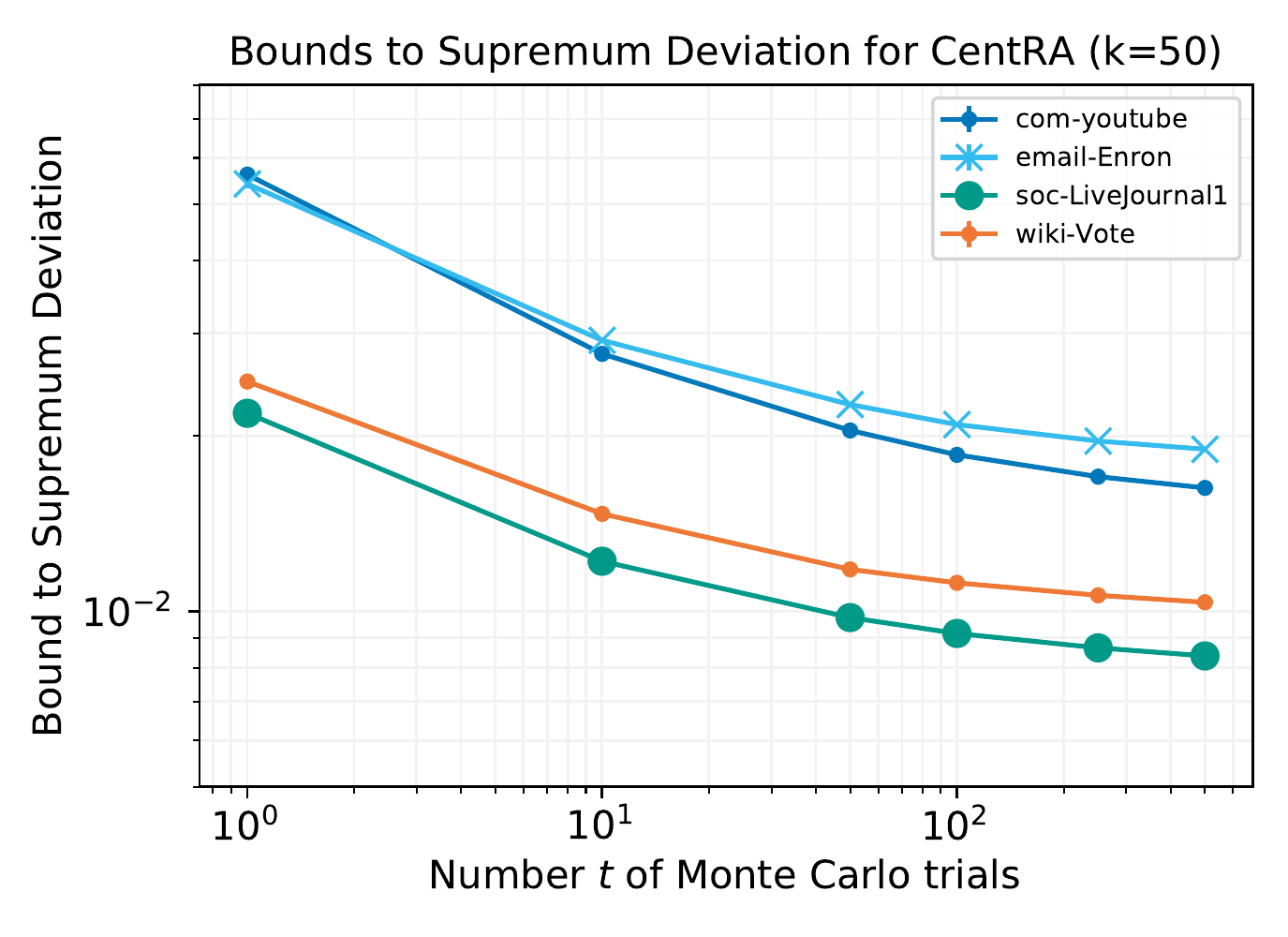}
  \caption{}
\end{subfigure}
\begin{subfigure}{.23\textwidth}
  \centering
  \includegraphics[width=\textwidth]{./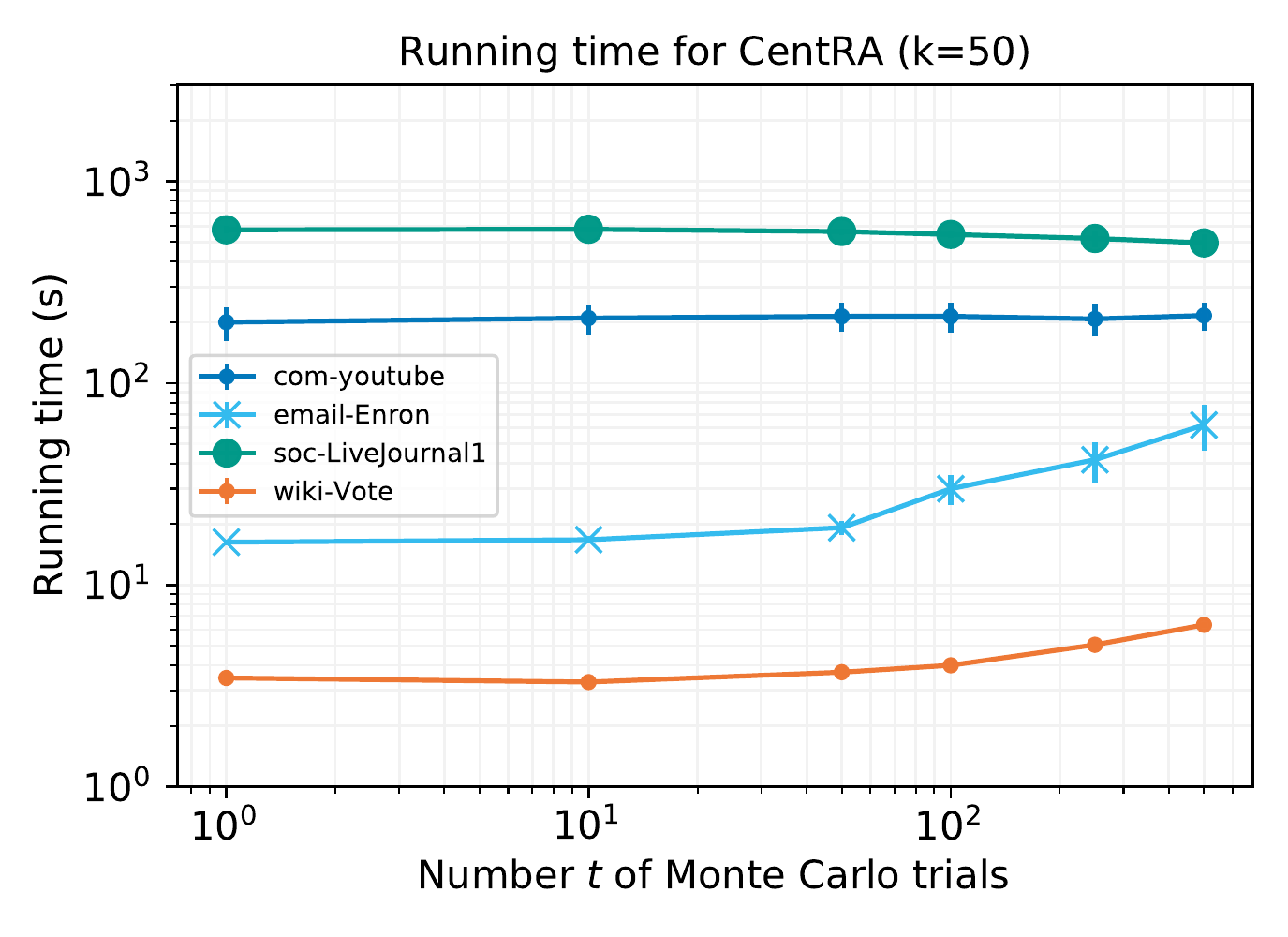}
  \caption{}
\end{subfigure}
\caption{
Impact of \algname\ parameters on upper bounds to the SD and running times on a sample of size $m = 5 \cdot 10^5$ for $k = 50$.  
(a): bounds to the SD as function of the number of Monte Carlo trials $t$. 
(b): running time of \algname.  
}
\label{fig:paramscentra}
\Description{This Figure shows the impact to the bounds to the supremum deviation and running time of CentRA of the parameter t, that is the number of Monte Carlo trials of the AMCERA. The plots show that the supremum deviation is quite stable for t larger than 100, while the running time is not significantly impacted by t, in particular for the case of larger graphs.}
\end{figure}

\begin{figure}
\centering
\begin{subfigure}{.32\textwidth}
  \centering
  \includegraphics[width=\textwidth]{./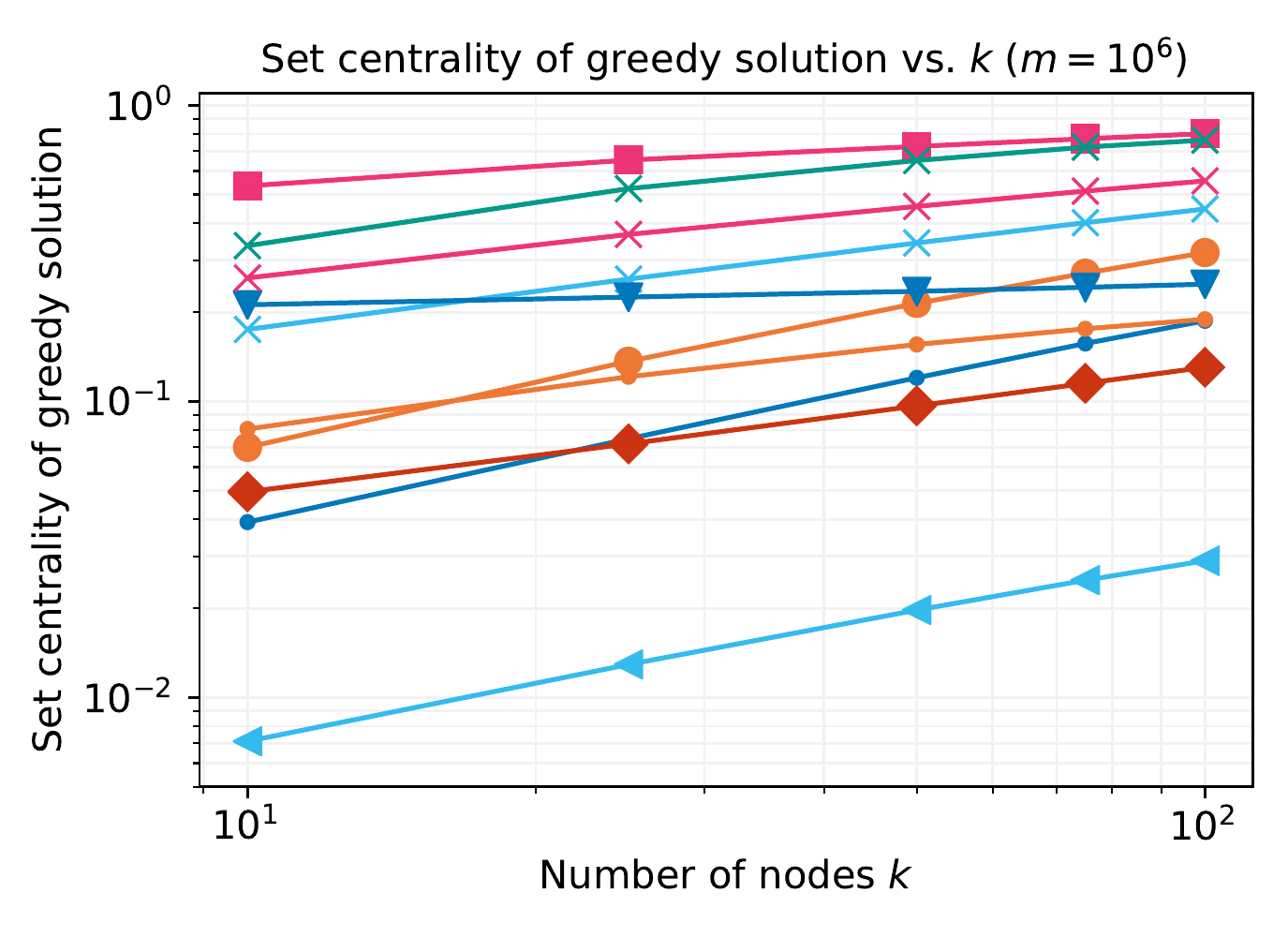}
\end{subfigure}
\caption{
Estimated centrality $\centr_\sample(\nodeset)$ of the node set $\nodeset$ returned by \algname\ for different $k$ and $m=10^6$. (the legend is the same of Figure \ref{fig:fixedm}.)
}
\label{fig:estcentr}
\Description{This Figure shows the estimated maximum set centrality obtained by CentRA for different values of k. The plot shows that for k sufficiently large, the maximum centrality approaches a constant independent of the size of the graph n.}
\end{figure}

\begin{table}
  \caption{Statistics of undirected (top section) and directed (bottom section) graphs. $B$ is the vertex diameter. }
\label{tab:graphs}
  \begin{tabular}{lrrrrr}
    \toprule
    $G$              & $|V|$      & $|E|$ & $B$       \\
    \midrule
    actor-collaboration     & 3.82e5  & 3.31e7 & 13  \\
	com-amazon     & 3.34e5 & 9.25e5 & 44  \\
	com-dblp     & 3.17e5 & 1.04e6  & 21 \\
	com-youtube     & 1.13e6 & 2.98e6 & 20  \\
	email-Enron & 3.66e4 & 1.83e5  & 11 \\
	\midrule
        soc-LiveJournal1 & 4.84e6 & 6.90e7 & 16  \\
        soc-pokec & 1.63e6 & 3.06e7  & 16 \\
        wiki-Talk & 2.39e6 & 5.02e6  & 9 \\	
        wiki-topcats & 1.79e6 & 2.85e7 & 9  \\	
	wiki-Vote & 7.11e3 & 1.03e5 & 7 \\	
  \bottomrule
\end{tabular}
\end{table}

We are now ready to prove Theorem~\ref{thm:eraboundhypercube}. 
To do so, we first prove that the expected AMCERA upper bounds the ERA with the contraction inequality of Lemma~\ref{thm:contrrade},
and then we show that the AMCERA is sharply concentrated around its expectation, leveraging Theorem~\ref{thm:concentrationhyperc}. 

\begin{proof}[Proof of Theorem~\ref{thm:eraboundhypercube}]
We first prove that 
\begin{align*}
\era \leq \E_\vsigma \sqpars{ \amera }.
\end{align*}
We observe that, for all $h \in \hyperg$ and all $\nodeset \subseteq V$, we can equivalently write the function $f_\nodeset$ as 
\begin{align*}
f_\nodeset (h) 
&= \max \big\{ f_u (h) : u \in \nodeset \big\} 
= \bigvee_{u \in \nodeset} f_u (h) \\
&= \min \brpars{1 , \sum_{u \in \nodeset} f_u(h)} .
\end{align*}
We define the function $\phi : \R \rightarrow \R$ as 
\begin{align*}
\phi(x) = \min\{1 , x\} , \text{ for } x \in [0 , k],
\end{align*}
and observe that it is $1$-Lipschitz. 
Replacing the above in the definition of the ERA $\era$, we have 
\begin{align*}
\era &= 
\E_\vsigma \Biggl[ \sup_{f_\nodeset \in \F } \Biggl\{ \frac{1}{m}
 \sum_{s=1}^m \vsigma_s f_\nodeset(h_s) \Biggr\} \Biggr] \\
 &=  \E_\vsigma \left[ \sup_{\nodeset \subseteq V , |\nodeset| \leq k} \Biggl\{ \frac{1}{m}
 \sum_{s=1}^m \vsigma_s \phi \biggl( \sum_{u \in \nodeset} f_u(h_s) \biggr) \Biggr\} \right] . \numberthis \label{eq:eraequiv}
\end{align*}
We apply Lemma~\ref{thm:contrrade} to upper bound \eqref{eq:eraequiv} obtaining that
\begin{align*}
\text{\eqref{eq:eraequiv}} & \leq 
 \E_\vsigma \left[ \sup_{\nodeset \subseteq V , |\nodeset| \leq k} \Biggl\{ \frac{1}{m}
 \sum_{s=1}^m \vsigma_s \sum_{u \in \nodeset} f_u(h_s) \Biggr\} \right] \\
 & = \E_{\vsigma} \left[ \amera \right] . 
\end{align*}
We now derive a new concentration bound for $\amera$ w.r.t. its expectation $\E_{\vsigma} \left[ \amera \right]$. 
Let the matrix $x \in \{ - 1 , 1\}^{t \times m}$.
For $j \in [1,t]$ and $s \in [1,m]$, define the matrix $\overline{x}^{js}$ as a copy of $x$ such that its $(j,s)$-th component $\overline{x}^{js}_{js}$ is equal to the $(j,s)$-th component $x_{js}$ of $x$ multiplied by $-1$, i.e. $\overline{x}^{js}_{js} = - x_{js}$. 
We define the function $g$ as  
$g(x) = tm \trade^{t}_{m}(\F, \sample, x)$. 
We aim at upper bounding 
\begin{align*}
\sum_{j = 1}^t  \sum_{s = 1}^m  \pars{ g\pars{x} - g ( \overline{x}^{js} )  }_+^2 
\end{align*}
below some constant $v$. 
We first observe that the equivalence 
\begin{align}
\pars{ g\pars{x} - g ( \overline{x}^{js} )  }_+ = \pars{ g\pars{x} - g ( \overline{x}^{js} ) } \ind{ g\pars{x} > g ( \overline{x}^{js} ) }  
\label{eq:proofequivalence}
\end{align}
implies that we can focus on the case $g\pars{x} > g ( \overline{x}^{js} )$ to upper bound \eqref{eq:proofequivalence}, as otherwise \eqref{eq:proofequivalence} is equal to $0$. 
Let $S^*_j$ be the subset of $V$ such that 
\begin{align*}
S^*_j = \argmax_{ \nodeset \subseteq V , |\nodeset| \leq k } \brpars{ \sum_{u \in \nodeset} \sum_{s=1}^{m} f_u (h_s) x_{js}  } .
\end{align*}
We note that, from the definition of supremum and after simple calculations, for any $j \in [1,t]$ and $s \in [1,m]$ it holds 
\begin{align*}
g(x) - g(\overline{x}^{js}) & \leq \sum_{u \in S^*_j} f_u (h_s) \sqpars{ x_{js} - \overline{x}^{js}_{js}  }  \\ & 
= 2 \sum_{u \in S^*_j} f_{u} (h_{s}) x_{js} ,
\end{align*}
since all entries of $x$ and $\overline{x}^{js}$ are equal, apart from the pair $(j,s)$ for which $x_{js} = - \overline{x}^{js}_{js}$. 
Summing over all $j,s$ we obtain 
\begin{align*}
& \sum_{j = 1}^t  \sum_{s = 1}^m  \pars{ g\pars{x} - g ( \overline{x}^{js} )  }_+^2 
\leq \sum_{j = 1}^t \sum_{s = 1}^m \biggl( 2 \sum_{u \in S^*_j} f_{u} (h_{s}) x_{js} \biggr)^2  \\
&= 4 \sum_{j = 1}^t \sum_{s = 1}^m \biggl( \sum_{u \in S^*_j \cap h_{s}} f_{u} (h_{s}) \biggr)^2  
\leq 4 \sum_{j = 1}^t \sum_{s = 1}^m b_\sample \sum_{u \in S^*_j} \pars{ f_{u} (h_{s}) }^2  \\
&\leq 4 b_\sample t \sup_{\nodeset \subseteq V , |\nodeset| \leq k} \biggr\{ \sum_{u \in \nodeset} \sum_{s=1}^{m} \pars{f_u (h_s)}^2 \biggl\}  \\
&= 4 t m  \ewvar_{\F}(\sample) , 
\end{align*}
where the second-last inequality holds by Cauchy-Schwarz inequality. 
We apply Theorem~\ref{thm:concentrationhyperc} with $v = 4 t m  \ewvar_{\F}(\sample)$, obtaining
\begin{align*}
\E_\vsigma\sqpars{ tm \amera } \leq tm \amera + q 
\end{align*}
with probability $ \geq 1 - \exp(-q^2/v)$. 
Setting $\exp(-q^2/v) \leq \delta $ and solving for $q$ proves the statement. 
\end{proof}

\begin{proof}[Proof of Lemma~\ref{thm:expectappxratio}]
First, note that we have already shown
\begin{align*}
\erade\left(\F_{k}, \sample\right)
\leq 
\E_{\vsigma} \sqpars{ \trade^{t}_{m}(\F_{k}, \sample, \vsigma) }
\end{align*}
in the proof of Theorem~\ref{thm:eraboundhypercube}.
We now prove 
\begin{align}
\label{eq:toprovekappx}
\E_{\vsigma} \sqpars{ \trade^{t}_{m}(\F_{k}, \sample, \vsigma) }
\leq 
k\erade\left(\F_{1}, \sample\right)  .
\end{align}
We have

\begin{align*}
& \trade^{t}_{m}(\F_{k}, \sample, \vsigma) 
= \frac{1}{t} \sum_{j=1}^{t} \sup_{\nodeset \subseteq V , |\nodeset| \leq k} \Biggl\{ 
 \sum_{u \in \nodeset} \frac{1}{m} \sum_{s=1}^m \vsigma_{js} f_u(h_s) \Biggr\}  \\
& \leq \frac{1}{t} \sum_{j=1}^{t} \sup_{\nodeset \subseteq V , |\nodeset| \leq k} \Biggl\{ 
 \sum_{u \in \nodeset} \sup_{u \in \nodeset} \brpars{ \frac{1}{m} \sum_{s=1}^m \vsigma_{js} f_u(h_s) \Biggr\} }  \\
&  \leq \frac{k}{t} \sum_{j=1}^{t}  \sup_{\nodeset \subseteq V , |\nodeset| \leq k} \Biggl\{  \sup_{u \in \nodeset} \brpars{ \frac{1}{m} \sum_{s=1}^m \vsigma_{js} f_u(h_s) \Biggr\} } \\
&  = \frac{k}{t} \sum_{j=1}^{t}  \sup_{\nodeset \subseteq V , |\nodeset| \leq 1} \Biggl\{ \sum_{u \in \nodeset}  \frac{1}{m} \sum_{s=1}^m \vsigma_{js} f_u(h_s) \Biggr\}  \\
&= k \trade^{t}_{m}(\F_{1}, \sample, \vsigma) .
\end{align*}
Taking the expectation w.r.t. $\vsigma$ on both sides yields \eqref{eq:toprovekappx}.
The rightmost inequality of the statement follows from simple properties of the supremum. 
\end{proof}

We now show Theorem~\ref{thm:selfboundingupperbound}. 
Our proof leverages the concept of \emph{self-bounding} functions \cite{boucheron2013concentration}. 
Define the functions $g : \X^m \rightarrow \R$ and $g^i : \X^{m-1} \rightarrow \R$, let the vector $x = ( x_1 , \dots , x_m ) \in \X^m$ and define $x^i = ( x_1 , \dots , x_{i-1} , x_{i+1} , \dots , x_m )$ 
as a copy of $x$ removing the $i$-th entry $x_i$. 
We say that $g$ is a self-bounding function if there exists a $g^i$ such that, for all $x \in \X^m$,  it holds 
\begin{align*}
0 \leq g(x) - g^i(x^i) \leq 1 ,
\end{align*}
and that 
\begin{align*}
\sum_{i=1}^m (g(x) - g^i(x^i)) \leq g(x). 
\end{align*}
If all the $x_i \in \X$ are independent random variables, a self-bounding function satisfies (Theorem 6.12 \cite{boucheron2013concentration})  
\begin{align*}
\Pr \pars{ g(x) \leq \E_x[g(x)] - q  } \leq \exp \pars{ - \frac{q^2}{2 \E_x[g(x)] } } .
\end{align*}

\begin{proof}[Proof of Theorem~\ref{thm:selfboundingupperbound}]
For any sample $\sample$ of size $m$, define $\nodeset^\prime = \argmax_{\nodeset \subseteq V , |\nodeset| \leq k} \centr_\sample \pars{\nodeset}$. 
From Jensen's inequality it holds 
\begin{align*}
\centr \pars{ \nodeset^* } & = \sup_{\nodeset \subseteq V , |\nodeset| \leq k} \E_\sample\sqpars{ \centr_\sample \pars{\nodeset} }  \\
& \leq 
\E_\sample \biggl[ \sup_{\nodeset \subseteq V , |\nodeset| \leq k}  \centr_\sample \pars{\nodeset} \biggr] 
= \E_\sample\sqpars{ \centr_\sample( \nodeset^\prime  ) }. 
\end{align*}
We now show that the function $g : \hyperg^m \rightarrow \R$, with
\begin{align*}
g(\sample) = m \centr_\sample(\nodeset^\prime) =  \sup_{\nodeset \subseteq V , |\nodeset| \leq k}  \sum_{s=1}^m f_\nodeset (h_s) 
\end{align*}  
is self-bounding. 
Define $g^i(\sample)$ as 
\begin{align*}
g^i(\sample) = \sup_{\nodeset \subseteq V , |\nodeset| \leq k} \sum_{s=1 , s \neq i}^m f_\nodeset (h_s) .
\end{align*}
First, denoting $x = \sample$, note that $0 \leq g(x) - g^i(x^i) \leq f_{\nodeset^\prime} (h_i) \leq 1$ are immediate from properties of the supremum and the definition of $f$.
Then, we have 
\begin{align*}
\sum_{i=1}^m (g(x) - g^i(x^i)) \leq 
\sum_{i=1}^m f_{\nodeset^\prime} (h_i) = g(x). 
\end{align*}
Therefore, $g$ is self-bounding, and it holds 
\begin{align*}
\Pr \pars{ m \centr_\sample(\nodeset^\prime) \leq \E[ m \centr_\sample(\nodeset^\prime)] - q } \leq \exp \pars{ \frac{ - q^2}{2 \E_\sample[m
\centr_\sample(\nodeset^\prime)] } } .
\end{align*}
Setting the r.h.s. $\leq \delta$, and solving for $q$, we obtain w.p. $\geq 1 - \delta$
\begin{align*}
\E[ \centr_\sample(\nodeset^\prime)] \leq \centr_\sample(\nodeset^\prime) + \sqrt{ \frac{2 \E[ \centr_\sample(\nodeset^\prime)] \ln(\frac{1}{\delta})}{m}} .
\end{align*}
Finding the fixed point of the above inequality gives
\begin{align}
\E[ \centr_\sample(\nodeset^\prime)] \leq \centr_\sample(\nodeset^\prime) + \sqrt{ \pars{\frac{\ln(\frac{1}{\delta})}{m}}^2 + \frac{2 \centr_\sample(\nodeset^\prime) \ln(\frac{1}{\delta})}{m}} + \frac{\ln(\frac{1}{\delta})}{m} . \label{eq:empupperboundestsup}
\end{align}
Let $\nodeset$ be the output of \texttt{greedyCover($k , \sample$)} for a fixed sample $\sample$. 
For the properties of \texttt{greedyCover}, it holds 
\begin{align*}
\centr_\sample \pars{\nodeset^\prime} \leq \centr_\sample \pars{\nodeset} \pars{ 1 - 1/e}^{-1} .
\end{align*}
By replacing this upper bound to $\centr_\sample \pars{\nodeset^\prime}$ in \eqref{eq:empupperboundestsup} and recalling that $\centr(\nodeset^*) \leq \E[ \centr_\sample(\nodeset^\prime)]$ we obtain the statement. 
\end{proof}

Before proving Theorem~\ref{thm:bound_dev}, we state some intermediate results. 
These results provide refined concentration inequalities relating Rademacher averages to the supremum deviation~\cite{boucheron2013concentration}. 
Let the \emph{Rademacher complexity} $\rc$ of the set of functions $\F$ (defined in Section~\ref{sec:prelims}) be defined as $\rc = \E_{\sample} \sqpars{ \era }$. 
The following central result relates $\rc$ to the \emph{expected} supremum deviation.

\begin{lemma}{Symmetrization Lemma \citep{ShalevSBD14,mitzenmacher2017probability}}
\label{symlemma}
Let $Z$ be either 
\begin{align*}
\sup_{f\in \F} \brpars{ \centr_{\sample}(\nodeset) -  \centr(\nodeset) }  
\text{ or }  
\sup_{f\in \F} \brpars{ \centr(\nodeset) -  \centr_{\sample}(\nodeset) } .
\end{align*}
It holds
$ \E_{\sample} \sqpars{ Z } \leq 2 \rc $.
\end{lemma}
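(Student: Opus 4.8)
The plan is to use the classical \emph{ghost sample} symmetrization argument. I focus on the case $Z = \sup_{f_\nodeset \in \F} \brpars{ \centr(\nodeset) - \centr_\sample(\nodeset) }$, since the other direction is entirely analogous. The starting point is the unbiasedness established in Section~\ref{sec:prelimsetcentr}, namely $\centr(\nodeset) = \E_{\sample}[\centr_\sample(\nodeset)]$ for every $\nodeset \subseteq V$. I introduce an independent \qt{ghost} sample $\sample'$ of the same size $m$, drawn i.i.d.\ uniformly from $\hyperg$, so that I may replace $\centr(\nodeset)$ by $\E_{\sample'}[\centr_{\sample'}(\nodeset)]$ inside the supremum.

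First I would push the supremum inside the ghost expectation via Jensen's inequality (using $\sup \E \le \E \sup$), obtaining
\begin{align*}
\E_\sample\sqpars{Z} \leq \E_{\sample, \sample'}\sqpars{ \sup_{f_\nodeset \in \F} \frac{1}{m} \sum_{s=1}^m \pars{ f_\nodeset(h'_s) - f_\nodeset(h_s) } } .
\end{align*}
The core step is then the symmetrization itself. Since $h_s$ and $h'_s$ are i.i.d., the quantity $f_\nodeset(h'_s) - f_\nodeset(h_s)$ is symmetric about $0$, so swapping $h_s$ with $h'_s$ — equivalently, multiplying the $s$-th summand by an independent Rademacher sign $\vsigma_s$ — leaves the joint distribution, and hence the expectation, unchanged. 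This licenses the insertion of a vector $\vsigma$ of $m$ i.i.d.\ Rademacher variables:
\begin{align*}
\E_{\sample, \sample'}\sqpars{ \sup_{f_\nodeset \in \F} \frac{1}{m} \sum_{s=1}^m \pars{ f_\nodeset(h'_s) - f_\nodeset(h_s) } } = \E_{\vsigma, \sample, \sample'}\sqpars{ \sup_{f_\nodeset \in \F} \frac{1}{m} \sum_{s=1}^m \vsigma_s \pars{ f_\nodeset(h'_s) - f_\nodeset(h_s) } } .
\end{align*}

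To finish, I would apply subadditivity of the supremum to split the bracketed quantity into two terms, one depending only on $\sample'$ and one only on $\sample$. Using that $-\vsigma_s$ has the same distribution as $\vsigma_s$, each of the two resulting expectations equals $\E_\sample[\era] = \rc$, so their sum is exactly $2\rc$, which yields $\E_\sample\sqpars{Z} \le 2\rc$. The symmetric choice of $Z$ is handled by the same chain of steps with the roles of $\centr$ and $\centr_\sample$ exchanged.

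The step I expect to require the most care is the distributional justification for inserting the Rademacher signs: the supremum over $\F$ couples all $m$ coordinates, so I cannot argue coordinatewise in a naive way. The clean way is to note that for any fixed sign vector $\vsigma \in \{-1,1\}^m$ the map that exchanges $(h_s, h'_s)$ exactly for the indices $s$ with $\vsigma_s = -1$ is a measure-preserving bijection of $(\sample, \sample')$, so the expectation of the (symmetric) summand is invariant under it; averaging over all $2^m$ such sign patterns then introduces $\vsigma$ without changing the value. The remaining manipulations (Jensen, subadditivity of $\sup$, and the sign-flip symmetry of each marginal term) are routine.
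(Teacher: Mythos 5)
Your proof is correct and is exactly the classical ghost-sample symmetrization argument that the paper relies on by citation (it states the lemma with references to the standard texts rather than proving it). All the steps — unbiasedness plus Jensen to introduce the ghost sample, the measure-preserving swap argument justifying the insertion of the Rademacher signs for a fixed sign pattern, and subadditivity of the supremum together with the sign-flip symmetry to obtain the factor $2\rc$ — are the standard ones and are handled with the appropriate care.
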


The following theorem states Bousquet's inequality~\cite{bousquet2002bennett}, that   gives a variance-dependent bound to the supremum deviation above its expectation.
\begin{theorem}{(Thm. 2.3 \cite{bousquet2002bennett})}\label{thm:sdbousquetbound}
  Let $\hat{\nu}_{\F} \geq \sup_{f\in \F} \left\lbrace Var(f) \right\rbrace$,
  and $Z$ be either 
$\sup_{f\in \F} \brpars{ \centr_{\sample}(\nodeset) -  \centr(\nodeset) } $ 
or 
$\sup_{f\in \F} \brpars{ \centr(\nodeset) -  \centr_{\sample}(\nodeset) } $. 
  Then, with probability at
  least $1 - \lambda$ over $\sample$, it holds 
  \begin{align*}
    Z \le \E_\sample\left[ Z \right] + \sqrt{\frac{2 \ln \bigl( \frac{1}{\lambda} \bigr)
    \left( \hat{\nu}_{\F} + 2 \E[ Z ] \right)}{m}}
        + \frac{ \ln \bigl( \frac{1}{\lambda} \bigr) }{3m} .
  \end{align*}
\end{theorem}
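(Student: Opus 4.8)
The plan is to prove this as the sharpened (Bousquet) form of Talagrand's concentration inequality for the supremum of an empirical process, via the entropy/tensorization method. The first move is a reduction to un-normalized, centered sums. Writing $g_f \doteq f_\nodeset - \centr(\nodeset)$, each $g_f$ takes values in $[-1,1]$ with $\E[g_f]=0$, so $Z = \frac{1}{m}\sup_{f\in\F}\sum_{s=1}^m g_f(h_s)$. Setting $W \doteq \sup_{f\in\F}\sum_{s=1}^m g_f(h_s) = mZ$, it suffices to prove a Bennett-type tail for $W$ with variance proxy $v \doteq m\hat{\nu}_{\F} + 2\E[W]$, because $\sup_{f}\sum_{s}\E[g_f(h_s)^2] = m\sup_f Var(f) \leq m\hat{\nu}_{\F}$. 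The normalized statement will then follow by rescaling $s = m\tilde{s}$ at the very end. Since $\F$ is finite here (indexed by node subsets of size at most $k$), no measurability or separability subtleties arise, and I may treat $W$ as a concrete function of the independent coordinates $h_1,\dots,h_m$.

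Second, I would control the logarithmic moment generating function $\psi(\theta)\doteq \ln\E_\sample[e^{\theta(W-\E[W])}]$ through the entropy method. The subadditivity (tensorization) of entropy gives $\mathrm{Ent}(e^{\theta W}) \leq \sum_{s}\E[\mathrm{Ent}_s(e^{\theta W})]$, where $\mathrm{Ent}_s$ is the entropy conditional on all coordinates except the $s$-th. Letting $W^{(s)}$ denote the supremum recomputed with $h_s$ modified, the increments $W - W^{(s)}$ lie in $[-1,1]$, and their positive parts obey self-bounding-type controls of the form $\sum_s (W-W^{(s)})_+^2 \leq v$ and $\sum_s (W-W^{(s)})_+ \leq W$ — precisely the positive-part bookkeeping already exploited in the self-bounding argument for Theorem~\ref{thm:selfboundingupperbound}. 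Feeding these into a modified logarithmic Sobolev inequality yields a differential inequality for $\psi$, whose solution is the sub-Poisson bound $\psi(\theta) \leq v(e^\theta - \theta - 1)$ for $\theta > 0$.

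Third, a Chernoff optimization of this MGF bound produces the Bennett tail $\Pr(W \geq \E[W] + s) \leq \exp\pars{-v\,h(s/v)}$ with $h(u) = (1+u)\ln(1+u) - u$. To reach the stated form I would invert this tail \emph{directly}, using the elementary inequality $h^{-1}(y) \leq \sqrt{2y} + \frac{y}{3}$ (equivalently, verifying $h(\sqrt{2y}+y/3)\geq y$ for all $y\geq 0$), which gives, with probability $\geq 1-\lambda$, $W \leq \E[W] + \sqrt{2vL} + \frac{L}{3}$ with $L \doteq \ln(1/\lambda)$. Substituting $v = m\hat{\nu}_{\F} + 2\E[W]$ and rescaling by $m$ (so that $\E[W]/m = \E[Z]$, $\sqrt{2vL}/m = \sqrt{2L(\hat{\nu}_{\F}+2\E[Z])/m}$, and $\tfrac{L}{3}/m = \tfrac{L}{3m}$) recovers exactly the stated inequality. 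The opposite orientation of $Z$ is handled identically after replacing each $g_f$ by $-g_f$.

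The step I expect to be the main obstacle is the second one: obtaining the \emph{sharp} variance proxy $v = m\hat{\nu}_{\F} + 2\E[W]$ together with the precise constants (the factor $2$ multiplying $\E[W]$, and ultimately the $1/3$ in the deviation's linear term). A crude bounded-differences argument via McDiarmid would deliver only a proxy of order $m\hat{\nu}_{\F}$, without the self-referential $\E[W]$ correction, and would degrade the constants. It is precisely the delicate handling of the conditional entropy increments through the modified log-Sobolev inequality — rather than McDiarmid — that yields Bousquet's tight form. A secondary but genuine subtlety is the inversion: the cruder relaxation $h(u)\geq \frac{u^2/2}{1+u/3}$ (Bernstein form) loses a factor of two in the linear term, so the clean $\frac{L}{3m}$ of the statement requires the direct Bennett inversion noted above.
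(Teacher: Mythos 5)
You should first note a mismatch of framing: the paper does not prove Theorem~\ref{thm:sdbousquetbound} at all --- it is imported verbatim as Theorem~2.3 of Bousquet's paper (as the theorem header indicates) and used as a black box inside the proof of Theorem~\ref{thm:bound_dev}. So the only fair comparison is with Bousquet's original argument, which your sketch does shadow faithfully in outline: the reduction to $W = mZ$ with centered $[-1,1]$-valued summands is fine (and replacing $g_f$ by $-g_f$ correctly handles the second orientation, since the summands are bounded on both sides); the final inversion is also correct --- the inequality $h\pars{\sqrt{2y}+y/3}\geq y$ for $h(u)=(1+u)\ln(1+u)-u$ is exactly how the Bennett tail is converted to the stated deviation form, and your remark that the Bernstein relaxation $h(u)\geq u^2/(2+2u/3)$ would degrade the linear term to $2\ln(1/\lambda)/(3m)$ is accurate.

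The genuine gap is in your second step, and it is not merely the difficulty you flag but a claim that is false as written. The controls $\sum_s \pars{W-W^{(s)}}_+^2 \leq v$ and $\sum_s \pars{W-W^{(s)}}_+ \leq W$ cannot hold pointwise with $v = m\hat{\nu}_{\F} + 2\E\sqpars{W}$: the left-hand side is a random variable while $v$ contains an expectation, and pointwise one only gets bounds of the form $\sum_s \pars{g_{f^*}(h_s)}^2$ with a \emph{data-dependent} maximizer $f^*$, whose expectation is not bounded by $m\hat{\nu}_{\F}$ --- taming this self-referential term is precisely Bousquet's contribution. Moreover, the self-bounding bookkeeping you borrow from the proof of Theorem~\ref{thm:selfboundingupperbound} works there because the functions $f_{\nodeset}$ are $\{0,1\}$-valued, so increments satisfy $0 \leq g(x)-g^i(x^i) \leq 1$; here the centered summands take values in $[-1,1]$, the increments of $W$ can be negative, $W$ is not self-bounding, and the plain modified logarithmic Sobolev inequality for self-bounding functions does not apply (it would in any case yield a variance proxy of order $\E\sqpars{W}$ alone, not $m\hat{\nu}_{\F}+2\E\sqpars{W}$). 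Bousquet's actual proof routes around this with a substantially more delicate entropy argument, building on Massart and Rio, that compares $W$ with conditional expectations rather than coordinate-modified suprema and uses a carefully chosen auxiliary function to derive the differential inequality giving $\psi(\theta)\leq v\pars{e^{\theta}-\theta-1}$. You correctly identify this as the main obstacle, but acknowledging it is not the same as closing it: as a standalone proof your proposal is a correct roadmap whose central lemma is missing, whereas for the paper's purposes citing Bousquet, as the author does, is all that is required.
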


The next result bounds $\rc$ above its estimate $\era$.
\begin{theorem}{\cite{boucheron2013concentration}}
\label{thm:rcboundselfbounding}
With probability $\geq 1-\lambda$ over $\sample$, it holds
\begin{align*}
\rc \leq \era  + \sqrt{ \pars{\frac{\ln \bigl( \frac{1}{\lambda} \bigr)}{m}}^2 + \frac{2\ln\pars{\frac{1}{\lambda}} \era }{ m } }   + \frac{\ln \bigl( \frac{1}{\lambda} \bigr)}{m} .
\end{align*}
\end{theorem}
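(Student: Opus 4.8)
The plan is to obtain $\eta$ as the terminus of a chain of five high-probability bounds, each established at confidence $\delta/5$ so that $\unionbdelta = \ln(5/\delta)$ appears uniformly throughout, and then to conclude by a single union bound that all five hold simultaneously with probability at least $1 - 5\cdot\frac{\delta}{5} = 1-\delta$. Since $\F$ contains $f_\emptyset \equiv 0$, both one-sided suprema $Z_+ = \sup_{f_\nodeset \in \F}\{\centr_\sample(\nodeset) - \centr(\nodeset)\}$ and $Z_- = \sup_{f_\nodeset \in \F}\{\centr(\nodeset) - \centr_\sample(\nodeset)\}$ are nonnegative, and the pointwise identity $|a-b| = \max\{a-b, b-a\}$ gives $\supdev = \max\{Z_+, Z_-\}$; hence it suffices to bound each of $Z_+$ and $Z_-$ by $\eta$ on a common event.

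First I would instantiate the auxiliary quantities. Applying Theorem~\ref{thm:selfboundingupperbound} with confidence $\delta/5$ yields $\centr(\nodeset^*) \le \nu$ with probability $\ge 1-\delta/5$; this simultaneously controls the variance, since each $f_\nodeset$ is $\{0,1\}$-valued, so $\sup_{f_\nodeset \in \F} Var(f_\nodeset) \le \sup_{|\nodeset|\le k}\centr(\nodeset) = \centr(\nodeset^*) \le \nu$, giving a valid choice $\hat{\nu}_{\F} \le \nu$. Next, Theorem~\ref{thm:eraboundhypercube} with confidence $\delta/5$ gives $\era \le \tilde{\rade}$ with probability $\ge 1-\delta/5$ over $\vsigma$. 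Then Theorem~\ref{thm:rcboundselfbounding} with confidence $\delta/5$ bounds the Rademacher complexity $\rc = \E_\sample[\era]$ above its empirical value $\era$; because that bound is monotone increasing in $\era$, substituting the event $\era \le \tilde{\rade}$ into it produces exactly $\rc \le \rade$, holding with probability $\ge 1-2\delta/5$ once the ERA bound and the complexity bound are combined.

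Finally I would assemble the deviation bound. The symmetrization lemma (Lemma~\ref{symlemma}) supplies the deterministic inequality $\E_\sample[Z_\pm] \le 2\rc$, so on the event $\rc \le \rade$ we have $\E_\sample[Z_\pm] \le 2\rade$. Applying Bousquet's inequality (Theorem~\ref{thm:sdbousquetbound}) to $Z_+$ and to $Z_-$, each with confidence $\delta/5$, and then substituting $\hat{\nu}_{\F} \le \nu$ and $\E_\sample[Z_\pm] \le 2\rade$ — both licensed by the monotonicity of the Bousquet right-hand side in $\hat{\nu}_{\F}$ and in $\E[Z]$ — collapses that right-hand side to exactly $2\rade + \sqrt{2\unionbdelta(\nu + 4\rade)/m} + \unionbdelta/(3m) = \eta$, matching \eqref{eq:epsrade}. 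A union bound over the five events (the self-bounding bound on $\centr(\nodeset^*)$, the ERA bound, the complexity bound, and the two Bousquet bounds) then yields $\supdev = \max\{Z_+, Z_-\} \le \eta$ with probability at least $1-\delta$.

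The delicate part is not any single inequality but the careful chaining of the monotone substitutions: each downstream bound is phrased in terms of an exact but unobservable quantity ($\era$, $\rc$, $\E_\sample[Z_\pm]$, and the true wimpy variance), and I must verify at every step that replacing such a quantity by its high-probability upper bound only \emph{increases} the right-hand side, so the substituted expression remains a valid bound on the same event rather than merely on the expectation. The second point requiring attention is the confidence bookkeeping — five events at $\delta/5$ apiece, with the randomness split between $\sample$ and $\vsigma$ and with $\rade$ inheriting its failure probability from both the ERA and self-bounding steps — all reconciled by one final union bound to reach overall confidence $1-\delta$.
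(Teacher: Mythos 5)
Your proposal does not prove the assigned statement. The statement is Theorem~\ref{thm:rcboundselfbounding}: a concentration inequality bounding the Rademacher complexity $\rc = \E_\sample\sqpars{\era}$ above its empirical realization $\era$, with probability $\geq 1-\lambda$ over the sample alone. What you have written instead is a (faithful, and essentially correct) reconstruction of the proof of Theorem~\ref{thm:bound_dev} --- the five-event argument at confidence $\delta/5$ each, combining the self-bounding bound on $\centr(\nodeset^*)$, the AMCERA bound of Theorem~\ref{thm:eraboundhypercube}, symmetrization, and two applications of Bousquet's inequality to conclude $\supdev \leq \eta$. Crucially, in the middle of your chain you \emph{invoke} Theorem~\ref{thm:rcboundselfbounding} as a black box (``Then Theorem~\ref{thm:rcboundselfbounding} with confidence $\delta/5$ bounds the Rademacher complexity $\rc$ above its empirical value $\era$''). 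As a proof of that very theorem, your argument is therefore circular --- or, put more plainly, it is a proof of a downstream result, not of the statement you were asked to prove.

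A genuine proof of Theorem~\ref{thm:rcboundselfbounding} runs along the following lines (the paper itself cites \cite{boucheron2013concentration} rather than reproving it, but the mechanism is exactly the one the paper uses for Theorem~\ref{thm:selfboundingupperbound}). Show that $g(\sample) = m\,\era$, viewed as a function of the i.i.d.\ sample, is \emph{self-bounding}: with $g^i$ the conditional Rademacher average computed on the sample with $h_i$ removed, the facts that each $f_\nodeset$ takes values in $\{0,1\}$ and that $f_\emptyset \equiv 0 \in \F$ give $0 \leq g(\sample) - g^i(\sample^i) \leq 1$ and $\sum_{i=1}^m \bigl( g(\sample) - g^i(\sample^i) \bigr) \leq g(\sample)$. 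The lower-tail self-bounding inequality (Theorem 6.12 of \cite{boucheron2013concentration}, quoted in the paper's appendix) then yields, with probability $\geq 1-\lambda$,
\begin{align*}
\E_\sample\sqpars{\era} \leq \era + \sqrt{\frac{2\,\E_\sample\sqpars{\era}\,\ln\bigl(\tfrac{1}{\lambda}\bigr)}{m}} ,
\end{align*}
and solving this fixed-point inequality for $\E_\sample\sqpars{\era} = \rc$ (the same quadratic manipulation as in \eqref{eq:empupperboundestsup}) produces exactly the claimed bound $\rc \leq \era + \sqrt{ \bigl(\ln(\tfrac{1}{\lambda})/m\bigr)^2 + 2\ln(\tfrac{1}{\lambda})\,\era/m } + \ln(\tfrac{1}{\lambda})/m$. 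None of this appears in your proposal; the self-bounding property of the conditional Rademacher average and the fixed-point step are the missing content.
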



We now prove Theorem~\ref{thm:bound_dev}. 

\begin{proof}[Proof of Theorem~\ref{thm:bound_dev}]
We define $5$ events $A_1 , \dots , A_5$ as 
\begin{align*}
A_1 &= \qtm{ \sup_{f\in \F} \brpars{ \centr_{\sample}(\nodeset) -  \centr(\nodeset) }  > \eta } , \\
A_2 &= \qtm{ \sup_{f\in \F} \brpars{ \centr(\nodeset) - \centr_{\sample}(\nodeset) } > \eta  } , \\
A_3 &= \qtm{ \era > \tilde{\rade}  } , \\ 
A_4 &= \qtm{ \rc > \rade } , \\
A_5 &= \qtm{ \sup_{f\in \F} \brpars{ \centr(\nodeset) } > \nu  } .
\end{align*}
The statement holds if, with probability $\geq 1 - \delta$, $A_1$ and $A_2$ are both false. 
If we assume that $\Pr(A_i) \leq \delta/5 , \forall i$, then it holds 
\begin{align*}
\Pr\Bigl( \bigcup_i A_i \Bigr) \leq \sum_{i} \Pr \pars{A_i} \leq \delta .
\end{align*}
This implies that all events $A_i$ are false simultaneously with probability $\geq 1 - \delta$, obtaining the statement. 
We now show that $\Pr(A_i) \leq \delta/5 , \forall i$. 
We observe that $Pr(A_3) \leq \delta/5$ is a consequence of Theorem~\ref{thm:eraboundhypercube} (replacing $\delta$ by $\delta/5$). 
Then, $Pr(A_4) \leq \delta/5$ follows from Theorem~\ref{thm:rcboundselfbounding} (using $\lambda = \delta/5$ and $\era \leq \tilde{\rade}$). 
$Pr(A_5) \leq \delta/5$ is a consequence of Theorem~\ref{thm:selfboundingupperbound} (replacing $\delta$ by $\delta/5$). 
$Pr(A_1) \leq \delta/5$ and $Pr(A_2) \leq \delta/5$ both hold applying Bousquet's inequality (Theorem~\ref{thm:sdbousquetbound}) twice (with $\lambda = \delta/5$), observing that $\rc \leq \rade$, using the Symmetrization Lemma (Lemma~\ref{symlemma}), and observing that 
\begin{align*}
\sup_{f\in \F} \left\lbrace Var(f) \right\rbrace \leq \sup_{f\in \F} \left\lbrace \centr(\nodeset) \right\rbrace \leq \nu .
\end{align*}
The fact $\Pr(A_i) \leq \delta/5 , \forall i$,  
concludes our proof. 
\end{proof}

\subsection{Proofs of Section \ref{sec:algorithm}}

\begin{proof}[Proof of Proposition~\ref{prop:main}]
We prove that the statement is a consequence of 
the following facts, that we assume hold simultaneously with probability 
$\geq 1 - \delta/2^{i}$ at a fixed iteration $i \geq 0$: 
\begin{enumerate}
\item \algname\ computes $\amera$ correctly at line~\ref{alg:amceraend}; \label{item:mcera}
\item $\eta$, at the end of iteration $i$, is computed using Theorem~\ref{thm:bound_dev} such that $\sd(\F, \sample_i) \leq \eta$; \label{item:supdevs}
\item $\xi$ (line \ref{alg:xidef}) is such that $\centr(\nodeset^*) \leq \centr_{\sample_i}(\nodeset) / (1-1/e) + \xi$; \label{item:xisupdevs}
\item $\nodeset$ is computed by the greedy algorithm \texttt{greedyCover($k , \sample_i$)} s.t. $\centr_{\sample_i}(\nodeset) \geq (1-1/e) \sup_{\nodeset^\prime \subseteq V , |\nodeset^\prime| \leq k} \centr_{\sample_i}(\nodeset^\prime)$; \label{item:greedy}
\item \algname\ stops at the current iteration $i$ when the stopping condition of line \ref{alg:stopcond} is true. \label{item:stopping}
\end{enumerate}

Notice that if all the following inequalities hold,
\begin{align*}
\centr(\nodeset^*) \leq \frac{\centr_{\sample_i}(\nodeset)}{1 - \frac{1}{e}} + \xi \leq \frac{ \centr_{\sample_i}(\nodeset) - \eta }{ 1 - \frac{1}{e} - \varepsilon } \leq \frac{ \centr(\nodeset) }{1 - \frac{1}{e} - \varepsilon} ,
\end{align*}
then $\centr(\nodeset^*) \leq \centr(\nodeset)/(1-1/e-\varepsilon)$ and \algname\ is correct. 
The rightmost inequality holds due to facts \eqref{item:mcera} and \eqref{item:supdevs}, while the leftmost holds from facts \eqref{item:xisupdevs} and \eqref{item:greedy}. 
The central inequality holds as it is equivalent to the stopping condition (line \ref{alg:stopcond}) and from fact \eqref{item:stopping}.

We need to prove that the algorithm is correct with probability $\geq 1 - \delta$, taking into account the validity of the probabilistic guarantees for of all possible iterations $i$. 
The probability that there is at least one iteration $i$ in which any of the above facts is false is at most 
$\textstyle\sum_{i} \delta_i = \sum_i \delta 2^{-i} \leq \delta$,  
from a union bound, proving the statement. 
\end{proof}

\begin{proof}[Proof of Lemma~\ref{amceraefficient}]
When a sample $h \in \sample$ is generated, $\BOi{ |h| t }$ operations are sufficient to generate $t$ Rademacher random variables and update all values of $r^u_j$, for all $u \in h$ and $j \in [1,t]$. 
With $t$ heaps, each of size at most $n$, the $t$ sets of values $\{r^u_j : u \in V\}$ can be kept sorted with a total of $\BOi{ |h| t \log(n) }$ time for each sample $h$. 
Summing such costs for all $h \in \sample$ (noting that each sample is considered only once) and observing that $|h| \leq b_\sample$, we obtain the first term of the bound.
The second term is the time to retrieve the $k$ largest positive values from all heaps, after processing all $m_i$ samples, that costs $\BOi{k t \log(n)}$ operations. 
\end{proof}

\subsection{Proofs of Section \ref{sec:samplecomplexity}}

Let $\F$ be a function family from a domain $\X$ to $\{ 0,1 \}$, and define the range space $Q = (\X , R)$ such that $R$ is the family of subsets of $\X$ generated by $\F$:
\begin{align*}
R = \brpars{ \brpars{ x \in \X : f(x) = 1 } : f \in \F } . 
\end{align*}
A sample $\sample = \{ h_1 , \dots , h_m \} \in \X^m$ of size $m$ taken i.i.d. from a distribution $\probdist$ gives an \emph{$(r , \theta)$-relative approximation} for $\F$ if, for $r >0$ and $\theta \in (0,1]$, it holds 
\begin{align*}
\biggl\lvert  \sum_{i=1}^m f(h_i) - \E_{h \sim \probdist}\sqpars{ f(h) } \biggr \rvert \leq r \max\{ \theta , \E_{h \sim \probdist}\sqpars{ f(h) } \}  , \forall f \in \F. 
\end{align*}
We state a result due to \cite{li2001improved} (in a version presented by \cite{har2011relative}).
\begin{theorem}[Thm. 2.11 \cite{har2011relative}]
\label{thm:relappxsamples}
Let $d \geq VC(Q)$. 
A random sample $\sample$ of size $m$ taken i.i.d. from a distribution $\probdist$ gives an $(r , \theta)$-relative approximation for $\F$ with probability $\geq 1 - \delta$ if  
\begin{align*}
m \geq c \frac{d \ln\pars{ \frac{1}{\theta} } + \ln\pars{ \frac{1}{\delta} } }{ r^2 \theta } ,
\end{align*}
where $c$ is an absolute constant. 
\end{theorem}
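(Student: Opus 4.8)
The plan is to establish the relative $(r,\theta)$-approximation bound by the classical symmetrization--plus--Sauer--Shelah route, upgraded to the refined union bound of~\cite{li2001improved} that yields the clean $d\ln(1/\theta)$ dependence. Fix notation: for $f \in \F$ let $p_f = \E_{h \sim \probdist}\sqpars{ f(h) }$ and let $\hat{p}_f = \frac{1}{m}\sum_{i=1}^m f(h_i)$ be its empirical frequency on $\sample$. The goal is to show that, for $m$ as in the statement,
\[
\Pr\pars{ \exists f \in \F : \abs{ \hat{p}_f - p_f } > r \max\brpars{ \theta , p_f } } \leq \delta .
\]
The normalization by $\max\{\theta,p_f\}$ is exactly what separates this from an ordinary additive $\varepsilon$-approximation, and it is the reason the resolution parameter $\theta$ enters the sample size.

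First I would symmetrize. Introducing an independent ghost sample $\sample' = \brpars{ h'_1 , \dots , h'_m }$ drawn i.i.d.\ from $\probdist$, with empirical frequencies $\hat{p}'_f$, a standard argument bounds the event above by a constant times the probability that the relative discrepancy \emph{between the two samples}, namely $\abs{\hat p_f - \hat p'_f}$ normalized by $\max\brpars{\theta,(\hat p_f + \hat p'_f)/2}$, is large for some $f$. The only subtlety here, absent in the additive case, is that one must guarantee the ghost frequency concentrates around its (possibly tiny) mean; this is where the floor $\theta$ is needed, and a relative Chernoff bound on $\hat p'_f$ supplies the required constant, provided $m\theta$ exceeds an absolute constant --- a condition already implied by the target sample size.

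Next I would condition on the union multiset $W = \sample \cup \sample'$ of $2m$ elements and exploit that, given $W$, the split into $\sample$ and $\sample'$ is a uniformly random partition. For a \emph{fixed} range $r_f \in R$ the count $\sum_{i} f(h_i)$ is then hypergeometric, so a tail bound for sampling without replacement controls the per-range relative discrepancy by $\exp\pars{ -\Omega\pars{ r^2 \theta m } }$. Crucially, conditioned on $W$ the functions $f$ act on $\sample \cup \sample'$ only through the distinct traces $\brpars{ W \cap r_f : f \in \F }$, whose number is the growth function, bounded via Sauer--Shelah by $\pars{ 2em/d }^{d}$ since $d \geq VC(Q)$. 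A union bound over these traces, followed by setting the product of the count and the per-range failure probability below $\delta$ and solving the resulting self-referential inequality in $m$, yields a bound of the claimed form.

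The main obstacle is obtaining the \emph{clean} $d\ln(1/\theta)$ term rather than the weaker $d\ln\pars{1/(r\theta)}$ that the crude Sauer--Shelah union bound produces: naively, solving $m \geq \frac{c_0}{r^2\theta}\pars{ d\ln(2em/d) + \ln(1/\delta) }$ leaves an extra $\ln(1/r)$ inside the logarithm. Removing it requires the sharper analysis of~\cite{li2001improved}: instead of union-bounding all $\pars{2em/d}^d$ traces at once, one peels the ranges into dyadic bands according to the size of $p_f$ (equivalently of $\abs{W \cap r_f}$) and applies, within each band, a packing argument at resolution $\theta$ together with a band-adapted relative tail bound. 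Carefully balancing the number of bands, the packing numbers, and the exponentially small per-range probabilities so that everything telescopes into a single $\ln(1/\theta)$ factor --- while keeping all constants absolute --- is the delicate, calculation-heavy core of the argument; the symmetrization and Sauer--Shelah steps above are routine by comparison.
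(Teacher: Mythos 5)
The paper does not prove Theorem~\ref{thm:relappxsamples} at all: it is imported verbatim as Theorem~2.11 of~\cite{har2011relative} (due to~\cite{li2001improved}) and used as a black box in the proof of Theorem~\ref{thm:vcsamplecomplexity}, so there is no in-paper argument to compare against and your proposal has to stand on its own. At the level of architecture, your sketch does match the known proof of this result: symmetrization with a ghost sample under the relative normalization $\max\brpars{\theta , p_f}$ (with the requirement $m\theta \geq \Omega(1)$ correctly identified as the price of the floor $\theta$), conditioning on the double sample so that each fixed trace obeys a hypergeometric tail of order $\exp\pars{-\Omega( r^2 \theta m )}$, and a Sauer--Shelah count of the traces. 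You also diagnose correctly that this crude union bound, after solving the self-referential inequality in $m$, only delivers $m = \BOi{ (d \ln ( 1/(r^2\theta) ) + \ln(1/\delta)) / (r^2\theta) }$.

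That diagnosis, however, is exactly where the genuine gap sits: the statement you were asked to prove has no $\ln(1/r)$ inside the logarithm, and for small $r$ the crude bound does \emph{not} imply it, so the refined step is not a cosmetic improvement but the entire content of the theorem. Your proposal names that step (dyadic bands in $p_f$, packing at resolution $\theta$, band-adapted tails) but does not execute it: you never state the packing bound you need (Haussler's packing lemma, giving $\BOi{(1/\eta)^d}$ $\eta$-separated ranges with no extra logarithmic factor, is the standard tool), nor verify that the per-band failure probability $\exp\pars{-\Omega( r^2 2^j \theta m )}$ beats the band's packing number so that the sum over bands telescopes into the single $d\ln(1/\theta)$ term with absolute constants. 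Note also that the weaker bound you do establish would not even recover the paper's downstream claim: plugging it into the proof of Theorem~\ref{thm:vcsamplecomplexity} with $r = \varepsilon/2$ and $\theta = \centr(\nodeset^*)$ would leave a spurious $\log(1/\varepsilon)$ inside the logarithm of \eqref{eq:samplecompl}. As written, your proposal is an accurate road map to the Li--Long--Srinivasan proof rather than a proof of the stated theorem; closing it requires either carrying out that packing/peeling computation in full or doing what the paper does and invoking the cited result.
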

Note that the sample complexity bound of Theorem~\ref{thm:relappxsamples} is optimal up to constant factors \cite{li2001improved}. 
We now prove Theorem~\ref{thm:vcsamplecomplexity}. 
\begin{proof}[Proof of Theorem~\ref{thm:vcsamplecomplexity}]
In accordance with the statement, assume $m$ to be 
\begin{align*}
m = 4c \frac{d_k \ln \pars{ \frac{1}{ \centr(\nodeset^*) } } + \ln \pars{ \frac{1}{\delta} } }{ \varepsilon^2 \centr(\nodeset^*) } ,
\end{align*}
where $c$ is the absolute constant of Theorem~\ref{thm:relappxsamples}. 
For $r = \varepsilon /2$, $\theta = \centr(\nodeset^*)$, and $Q = Q_k$,  
the guarantees of Theorem~\ref{thm:relappxsamples} imply that the sample $\sample$ provides an $(\varepsilon/2 ,  \centr(\nodeset^*))$-relative approximation for $\F$ (where $\F$ is defined in Section~\ref{sec:prelimsetcentr}) with probability $\geq 1 - \delta$. From the definition of $(\varepsilon/2 ,  \centr(\nodeset^*))$-relative approximation, it holds 
\begin{align*}
\supdev \leq \varepsilon \centr(\nodeset^*) / 2 .
\end{align*}
Following analogous derivations of the proof of Theorem 1 in \cite{mahmoody2016scalable}, this constraint to the SD is a sufficient condition to guarantee that $\nodeset$ provides an $(1- 1/e - \varepsilon)$-approximation of $\nodeset^*$ with probability $\geq 1 - \delta$, proving the statement. 
\end{proof}

We now prove that the VC-dimension $VC(Q_k)$ of the range space $Q_k$ can be upper bounded in terms of the VC-dimension $VC(Q_1)$ of the range space $Q_1$ (Lemma~\ref{thm:lemmavckbound}). 
We follow steps that are similar to the proof of bounds to the VC-dimension of a concept class composed by the intersection of up to $k$ concept classes (Exercise 3.23 of \cite{mohri2018foundations}). 
In our case, $R_k$ can be seen as the \emph{disjunction} of up to $k$ ranges from $R_1$. 
\begin{proof}[Proof of Lemma~\ref{thm:lemmavckbound}]
We first prove that, for all $k \geq 1$,  
\begin{align*}
|P(\sample , R_k)| \leq |P(\sample , R_1)| |P(\sample , R_{k-1})| ,
\end{align*}
noting that when $k=1$ it holds $|P(\sample , R_{k-1})| = 1 $.  
Define the set of ranges $R_k^v$ built from the set of nodes $V\setminus\{v\}$ as
\begin{align*}
R_k^v = \brpars{ \brpars{ h : f_\nodeset(h) = 1 } : \nodeset \subseteq V \setminus \{ v \} , |\nodeset| \leq k} . 
\end{align*}
Clearly, it holds $R_k^v \subseteq R_k, \forall v \in V$. We can write $R_k$ as the union of the ranges of individual nodes $v$ with the ranges of $R_{k-1}^v$: 
\begin{align*}
& R_k  = \brpars{ \brpars{ h : f_\nodeset(h) = 1 } : \nodeset \subseteq V , |\nodeset| \leq k  } \\
& = \bigcup_{v \in V} \brpars{ \brpars{ h : f_v(h) = 1 } \cup \brpars{ h : f_\nodeset(h) = 1 } : \nodeset \subseteq V \setminus \{ v \} , |\nodeset| \leq {k-1}  } \\
& = \bigcup_{v \in V} \brpars{ \brpars{ h : f_v(h) = 1 } \cup r : r \in R_{k-1}^v  } . 
\end{align*}
Therefore, since $R_{k-1}^v \subseteq R_{k-1}$ for all $v$, the set of projection $P(\sample , R_k)$ of $R_k$ on $\sample$ is contained in the union of all the projections of $P(\sample , R_{k-1})$ in conjunction with all projections $x \in P(\sample , R_{1})$:
\begin{align*}
P(\sample , R_k) \subseteq \bigcup_{x \in P(\sample , R_{1})} \brpars{ (\sample \cap r) \cup x : r \in R_{k-1} } .
\end{align*}
This implies that the size $|P(\sample , R_k)|$ of $P(\sample , R_k)$ can be bounded with an union bound by
\begin{align*}
|P(\sample , R_k)| 
& \leq \sum_{x \in P(\sample , R_{1})} \abs{ \brpars{ (\sample \cap r) \cup x : r \in R_{k-1} } } \\
& \leq \sum_{x \in P(\sample , R_{1})} \abs{ \brpars{ \sample \cap r : r \in R_{k-1} } } \\
& = \sum_{x \in P(\sample , R_{1})} | P(\sample , R_{k-1}) | \\
& = |P(\sample , R_{1})| | P(\sample , R_{k-1}) | ,
\end{align*}
where the second inequality holds since the number of intersections of the ranges $R_{k-1}$ on the sample $\sample$ can only decrease when in conjunctions with a fixed subset $x \subseteq \sample$. 

Iterating this argument $k$ times, we obtain
\begin{align*}
|P(\sample , R_k)| \leq |P(\sample , R_1)|^k .
\end{align*}

To prove that the VC-dimension of the range space $Q_k$ is $< m$, it must hold that 
\begin{align*}
|P(\sample , R_1)|^k < 2^m .
\end{align*}
Let $d = VC(Q_1)$. 
We apply Sauer-Shelah's Lemma \cite{ShalevSBD14} obtaining 
\begin{align*}
|P(\sample , R_1)|^k \leq \pars{ \frac{e m }{d} }^{k d} .
\end{align*}
We now need to show that, for $m = 2 d k \log_2(3k)$, it holds
\begin{align*}
\pars{ \frac{e m }{d} }^{k d} < 2^m .
\end{align*}
Taking the $\log_2$ on both sides, and substituting $m = 2 d k \log_2(3k)$, we obtain
\begin{align*}
& \pars{ \frac{e m }{d} }^{k d} < 2^m \\ 
& \iff \log_2 ( 2ek \log_2(3k) ) < \log_2(9k^2) \\
& \iff 2ek \log_2(3k) < 9 k^2 \\
& \iff \log_2(3k) < \frac{9k}{2e} ,
\end{align*}
which holds for all $k \geq 1$, proving the statement. 
\end{proof}

\balance

\begin{proof}[Proof of Lemma~\ref{thm:vcboundgeneral}]
Denote a set $\sample = \{ h_1 , \dots , h_m \}$ of size $m$ that is shattered, such that
\begin{align*}
|P(\sample , R_1)| = 2^m .
\end{align*}
If $\sample$ is shattered, it means that each $h_i \in \sample$ is in the projection of $2^{m-1}$ distinct ranges;  
there must be at least $2^{m-1}$ distinct nodes $u \in V$ with $f_u(h_i) = 1$. 
This implies that 
$|h_i| \geq 2^{m-1}$. 
Since $|h| \leq b, \forall h \in \hyperg$, it also holds that 
\begin{align*}
2^{m-1} \leq b \iff m \leq  \log_2 (2b)  .
\end{align*}
Therefore, the maximum size of a set that can be shattered is at most $\lfloor \log_2 (2b) \rfloor$, obtaining the statement. 
\end{proof}

\subsection{Additional Experimental Results}
\label{sec:additionalexperiments}
In this Section we report additional results not included in the main text due to space constraints.

\textit{Comparisons between HEDGE, \algname, and VC-dimension bounds.} 
Figures \ref{fig:samplecompl-vc-hedge} and \ref{fig:samplecompl-vc-centra} compare the bounds to the Supremum Deviation obtained from the results described in Section \ref{sec:samplecomplexity}
(based on the VC-dimension), 
with results obtained by HEDGE (based on the union bound) and \algname\ (based on Monte Carlo Rademacher Averages, Section \ref{sec:boundsupdevrade}). 
To ease the comparison, we denote a modification of \algname, that we call \algname-VC, that instead of using Rademacher Averages (the results of Section~\ref{sec:boundsupdevrade}) uses the VC-dimension based bounds (from Section~\ref{sec:samplecomplexity}). 
For these experiments we consider $\delta = 0.05$, $k \in \{ 10 , 50 , 100\}$, and samples of size $m \in \{ 5 \cdot 10^4 , 10^5 , 2 \cdot 10^5 , 5 \cdot 10^5 , 10^6 \}$ (other values of $k$ produced analogous results). 
The bounds of Section \ref{sec:samplecomplexity} are computed bounding the VC-dimension $V(Q_k)$ with Lemma \ref{thm:lemmavckbound} and Lemma \ref{thm:vcboundgeneral}, using $b=B$ (the vertex diameter of the graphs, see Table~\ref{tab:graphs}). 

From Figure \ref{fig:samplecompl-vc-hedge}, comparing \algname-VC with HEDGE, we observe that \algname-VC yields generally more accurate bounds, up to $4$ times smaller for two graphs, confirming that standard techniques provide too conservative guarantees in such cases.

Figure \ref{fig:samplecompl-vc-centra} compares \algname-VC with \algname. As we may have expected, we can clearly conclude that distribution- and data-dependent bounds are significantly more accurate that 
VC-dimension based results
in almost all cases, 
improving up to a factor $3$.
This confirms the significance of the contributions at the core of \algname.

\begin{figure*}
\centering
\begin{subfigure}{.75\textwidth}
  \centering
  \includegraphics[width=\textwidth]{./figures/bounds-fixed-m-legend.pdf}
\end{subfigure} \\
\begin{subfigure}{.24\textwidth}
  \centering
  \includegraphics[width=\textwidth]{./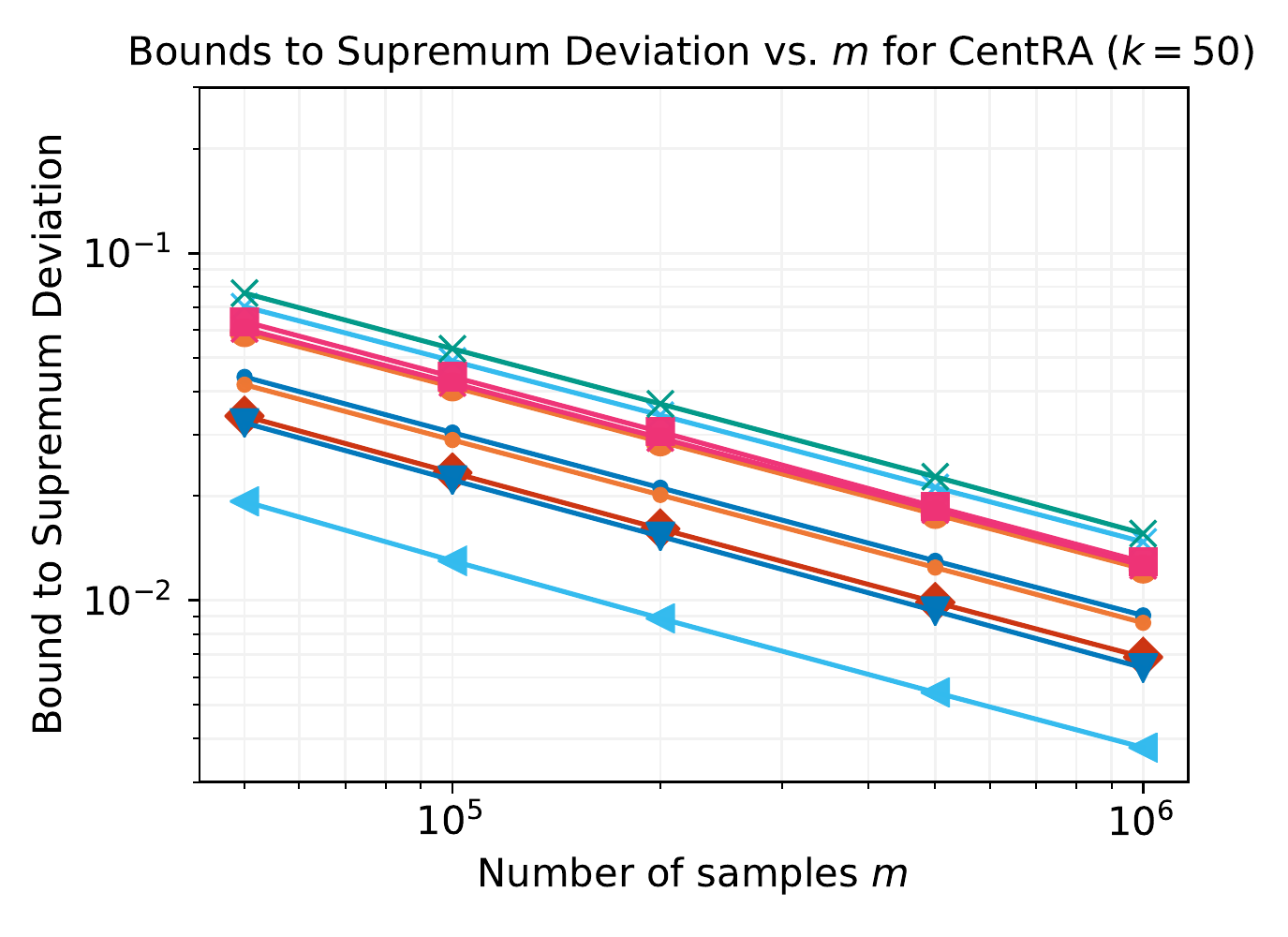}
  \caption{}
\end{subfigure}
\begin{subfigure}{.24\textwidth}
  \centering
  \includegraphics[width=\textwidth]{./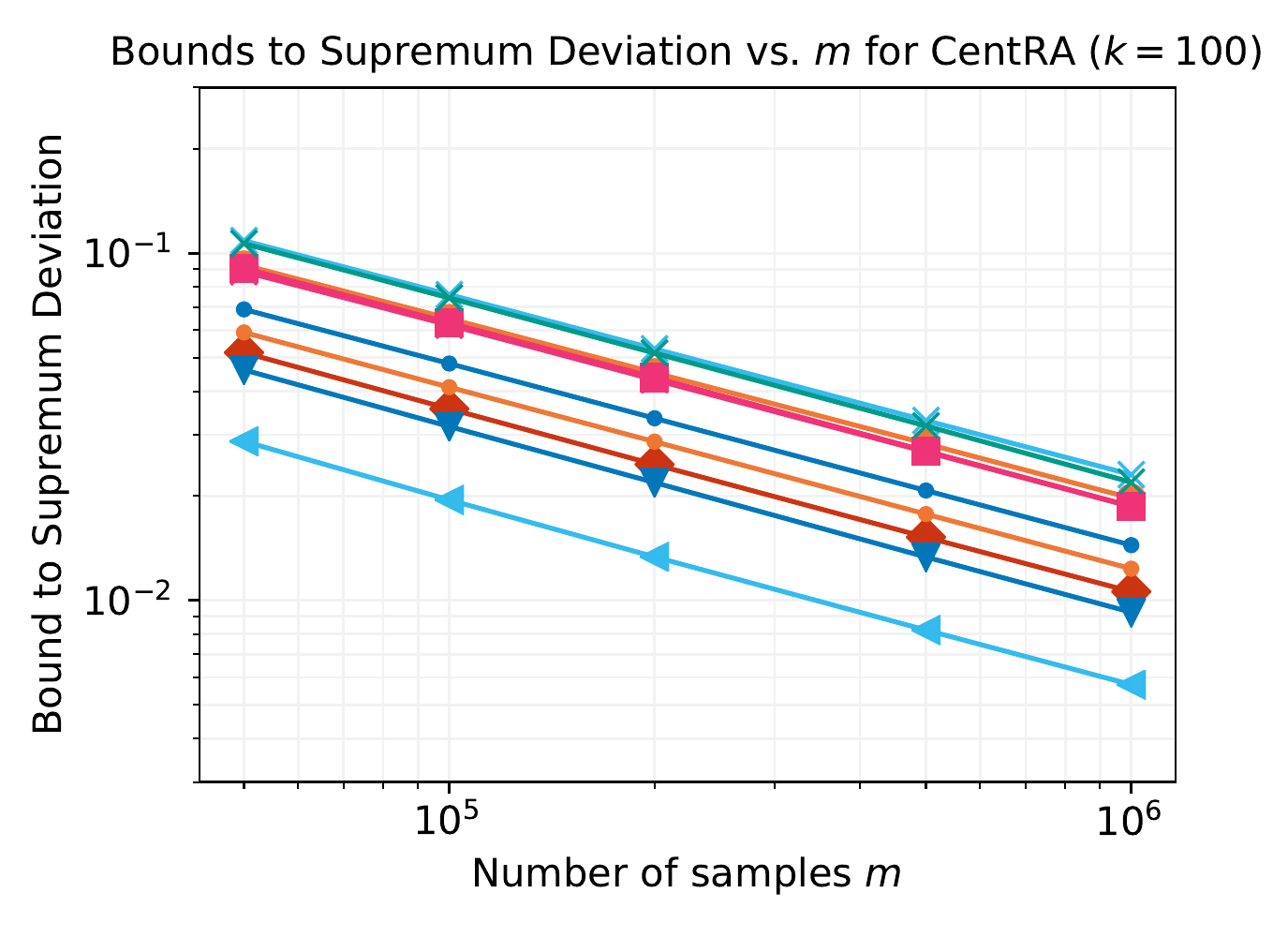}
  \caption{}
\end{subfigure}
\begin{subfigure}{.24\textwidth}
  \centering
  \includegraphics[width=\textwidth]{./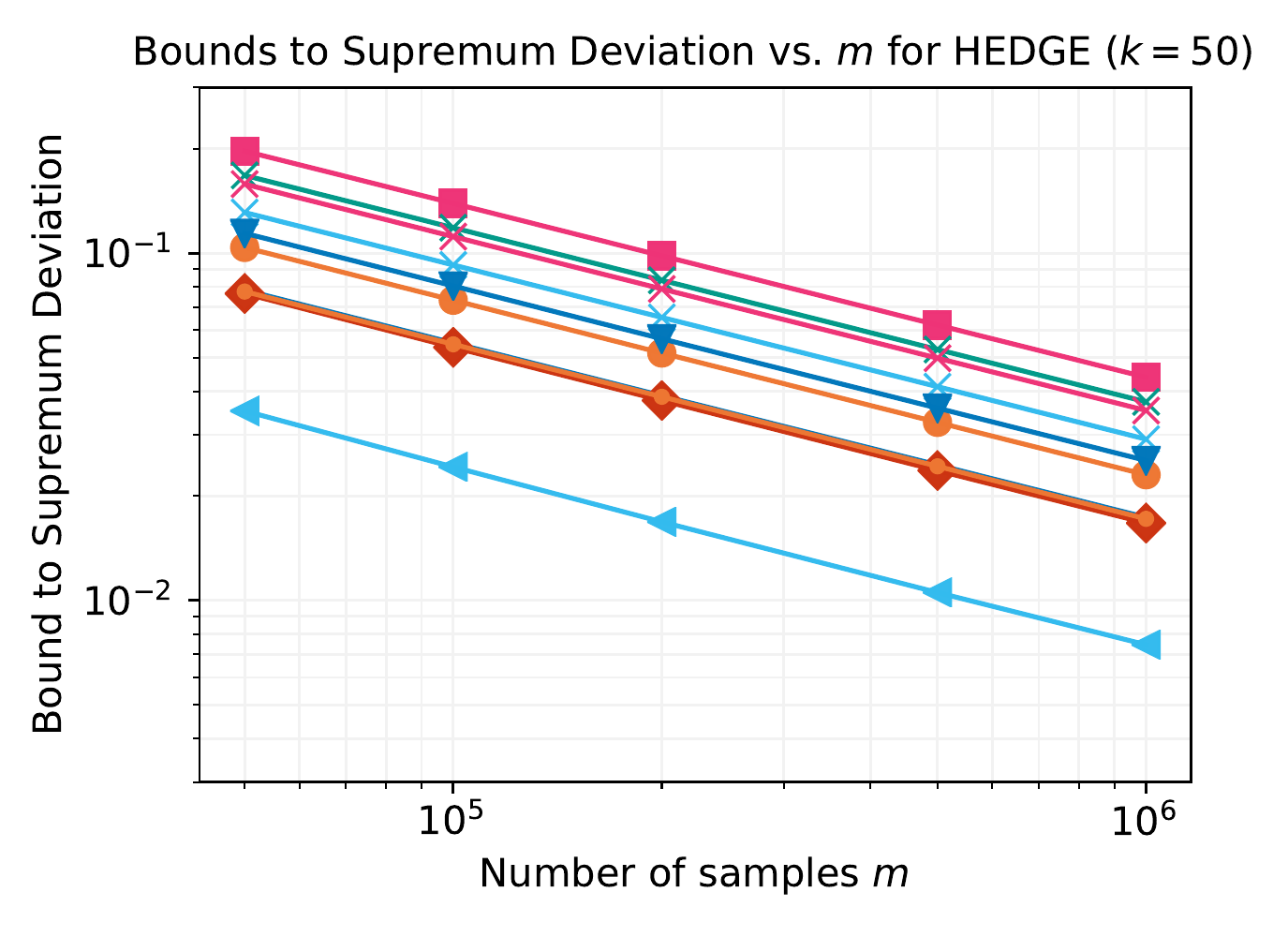}
  \caption{}
\end{subfigure}
\begin{subfigure}{.24\textwidth}
  \centering
  \includegraphics[width=\textwidth]{./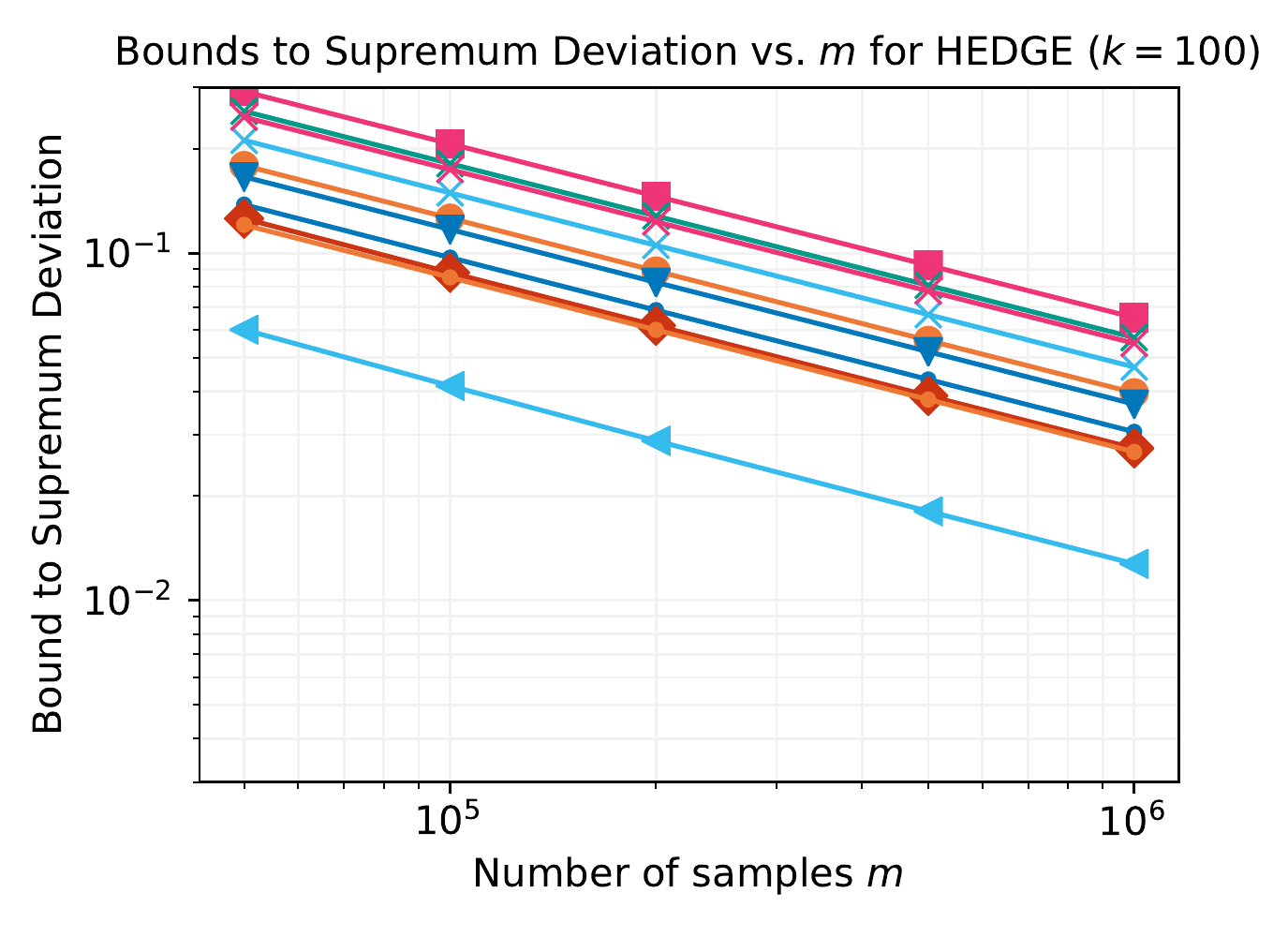}
  \caption{}
\end{subfigure}
\caption{
Bounds to the SD using \algname\ (Figures (a)-(b)) and HEDGE (Figures (c)-(d)) as functions of $m$, for $k=50$ and $k=100$.  
}
\label{fig:supdevvsmindividual}
\Description{This Figure shows the relationship between the number of samples m and the bounds to the supremum deviation obtained by CentRA and HEDGE.}
\end{figure*}

\begin{figure*}[ht]
\centering
\begin{subfigure}{.75\textwidth}
  \centering
  \includegraphics[width=\textwidth]{./figures/bounds-fixed-m-legend.pdf}
\end{subfigure} \\
\begin{subfigure}{.32\textwidth}
  \centering
  \includegraphics[width=\textwidth]{./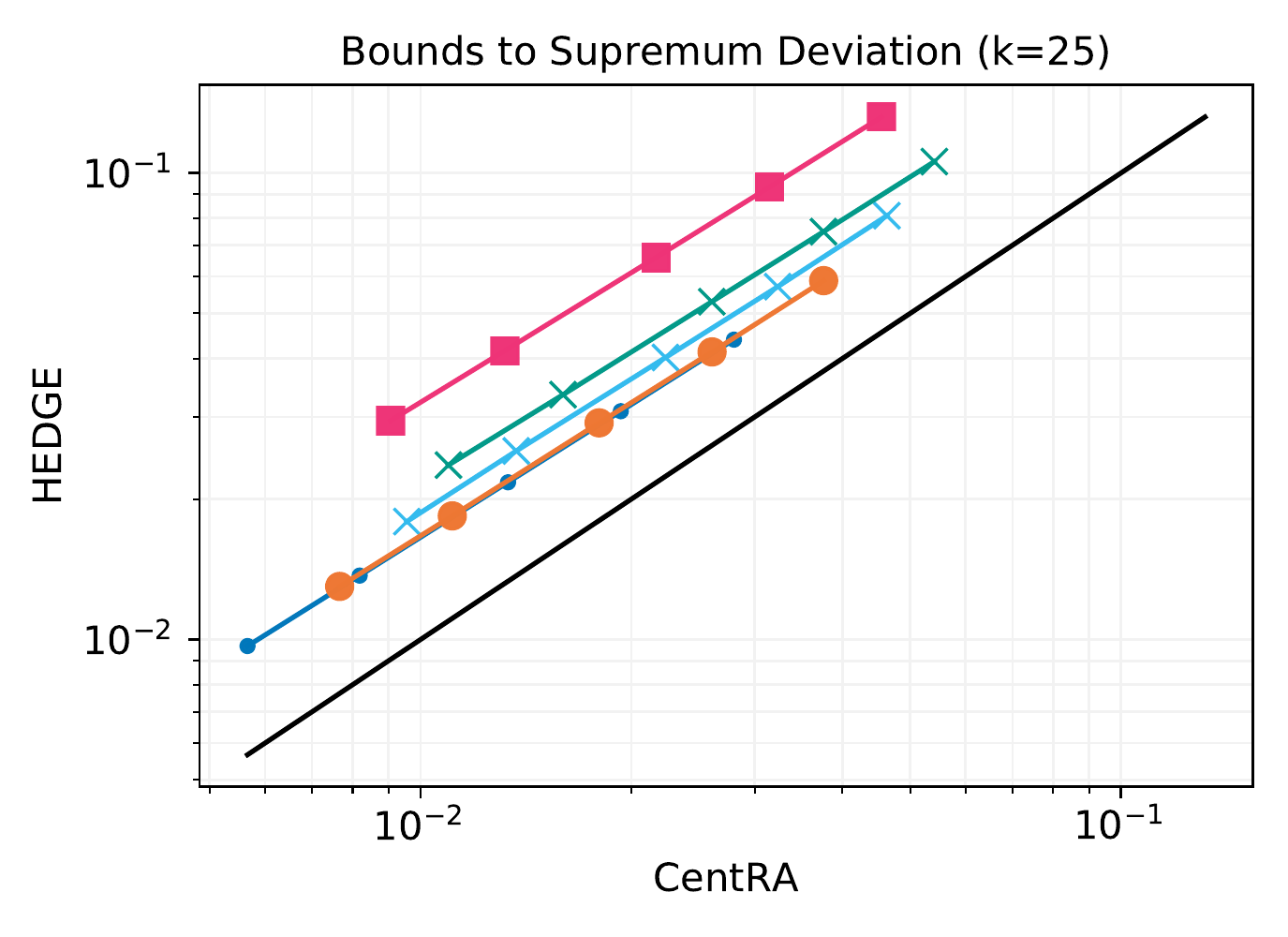}
  \caption{}
\end{subfigure}
\begin{subfigure}{.32\textwidth}
  \centering
  \includegraphics[width=\textwidth]{./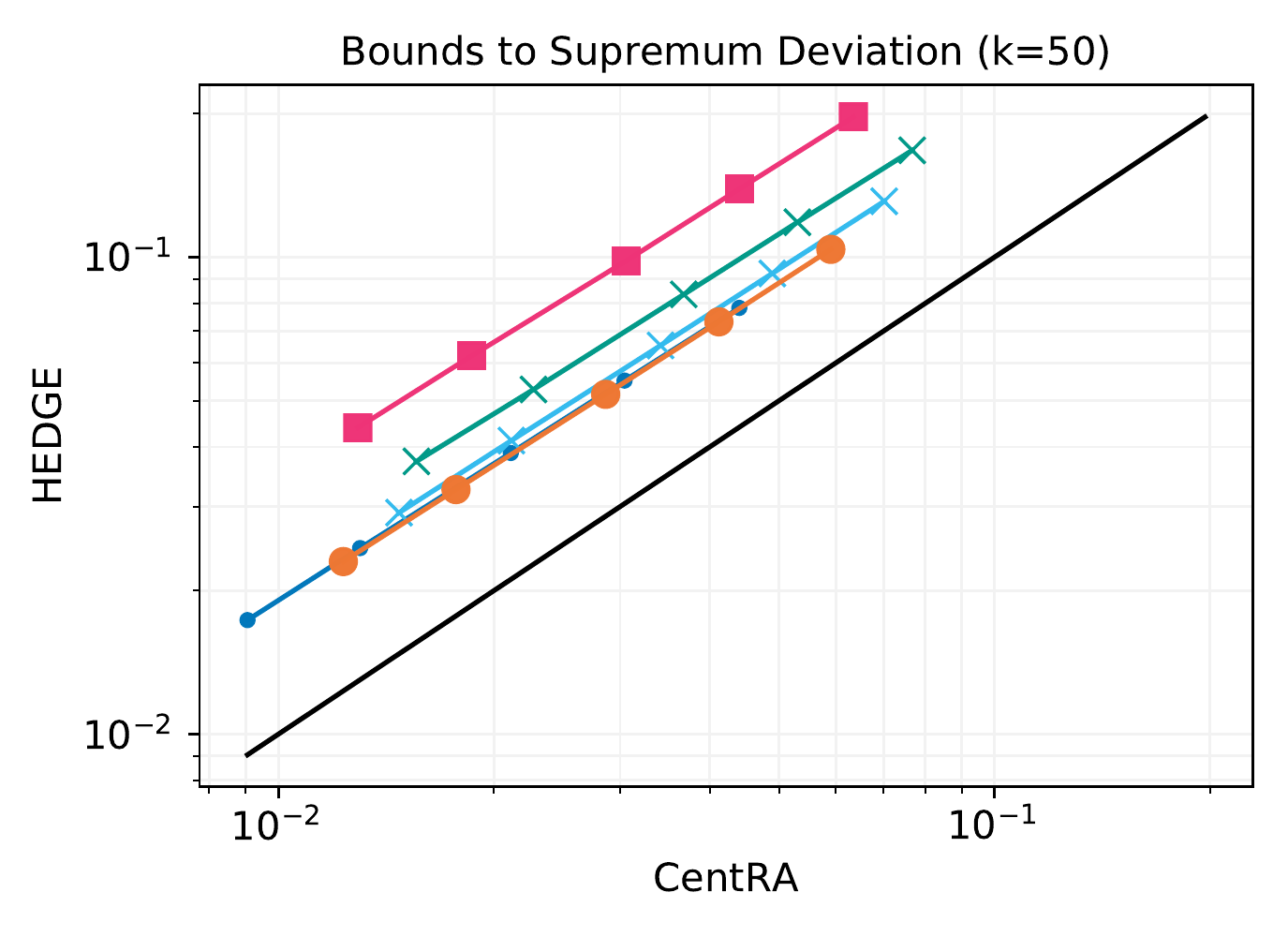}
  \caption{}
\end{subfigure}
\begin{subfigure}{.32\textwidth}
  \centering
  \includegraphics[width=\textwidth]{./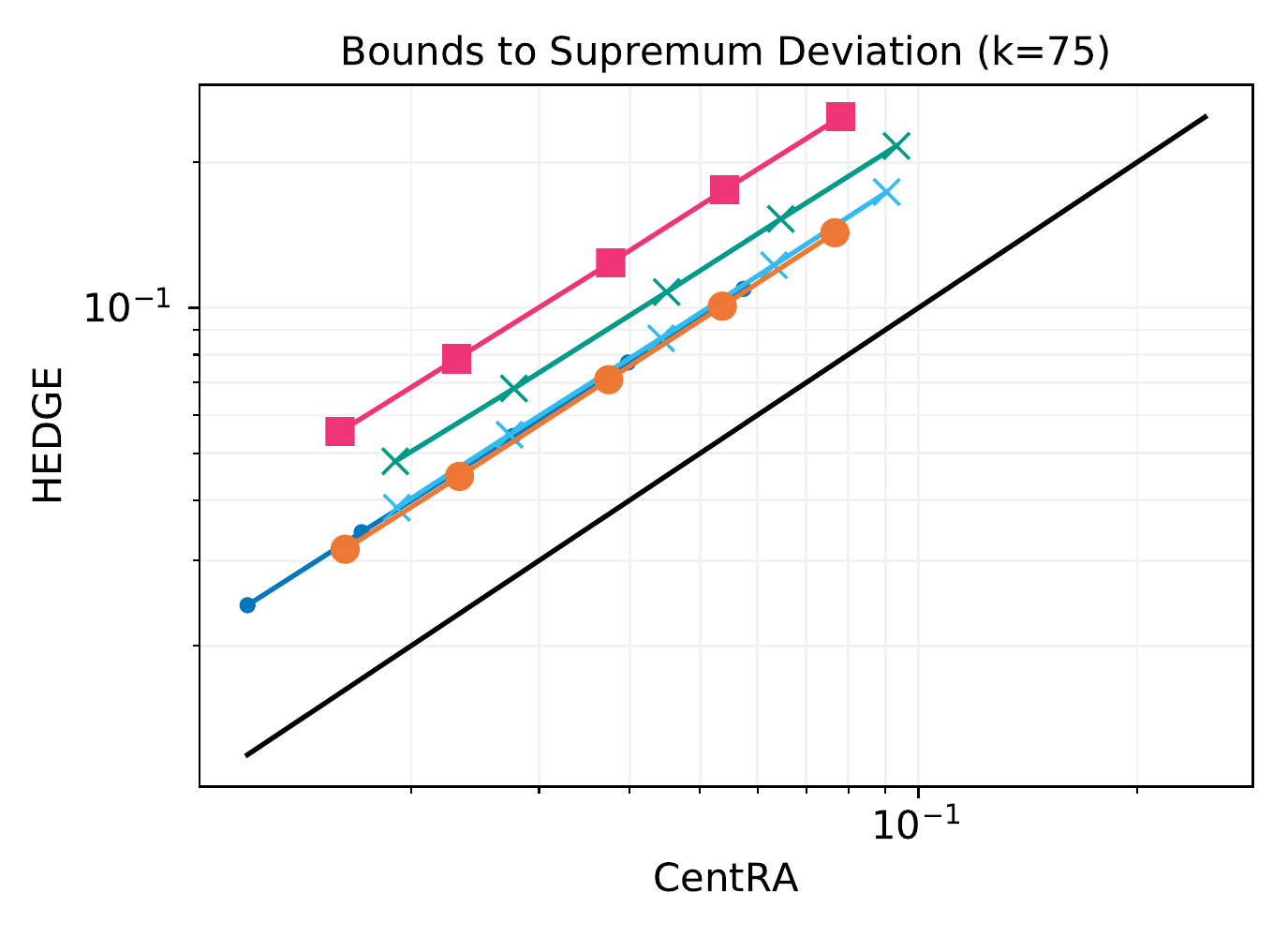}
  \caption{}
\end{subfigure} \\
\begin{subfigure}{.32\textwidth}
  \centering
  \includegraphics[width=\textwidth]{./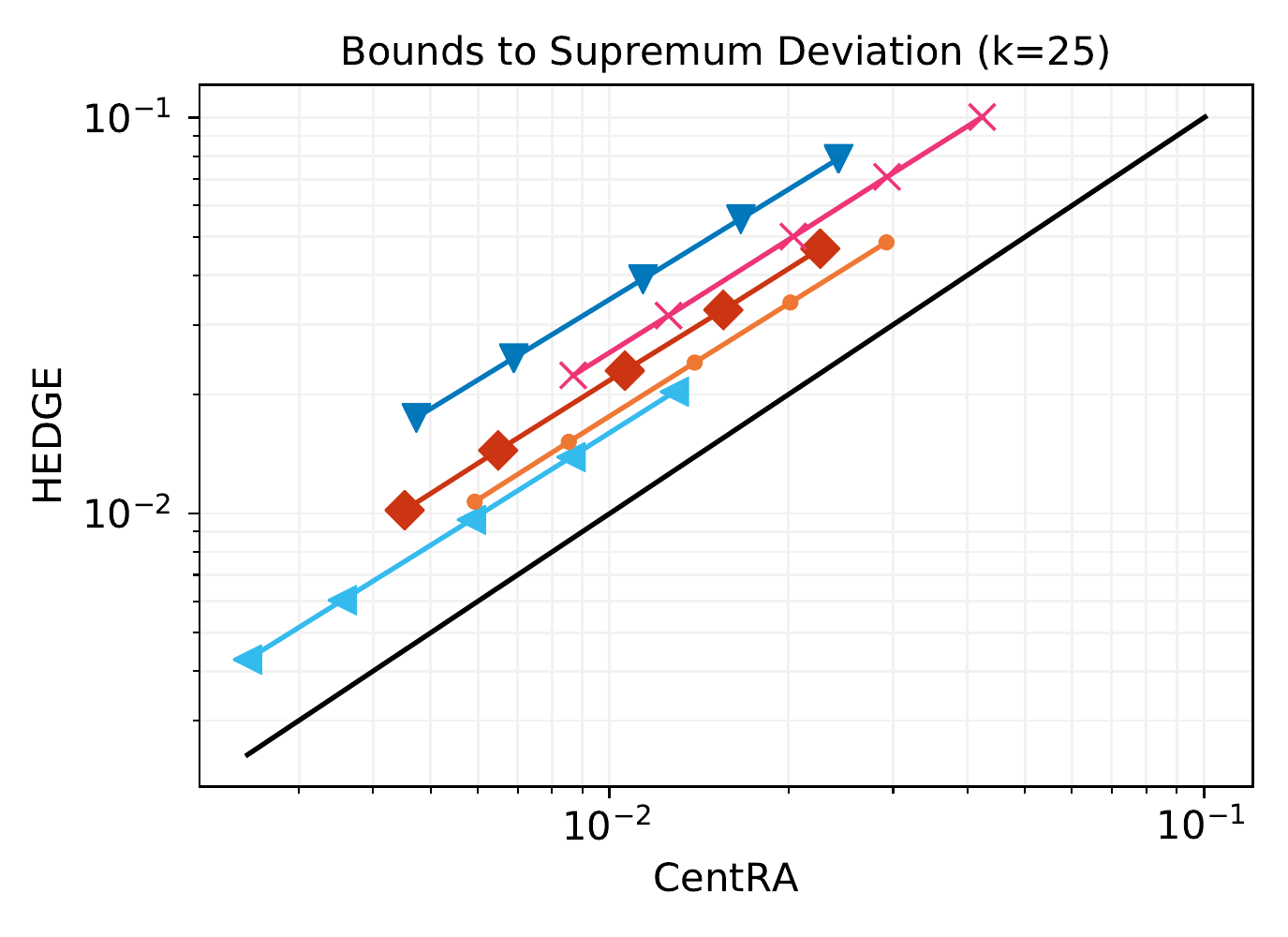}
  \caption{}
\end{subfigure}
\begin{subfigure}{.32\textwidth}
  \centering
  \includegraphics[width=\textwidth]{./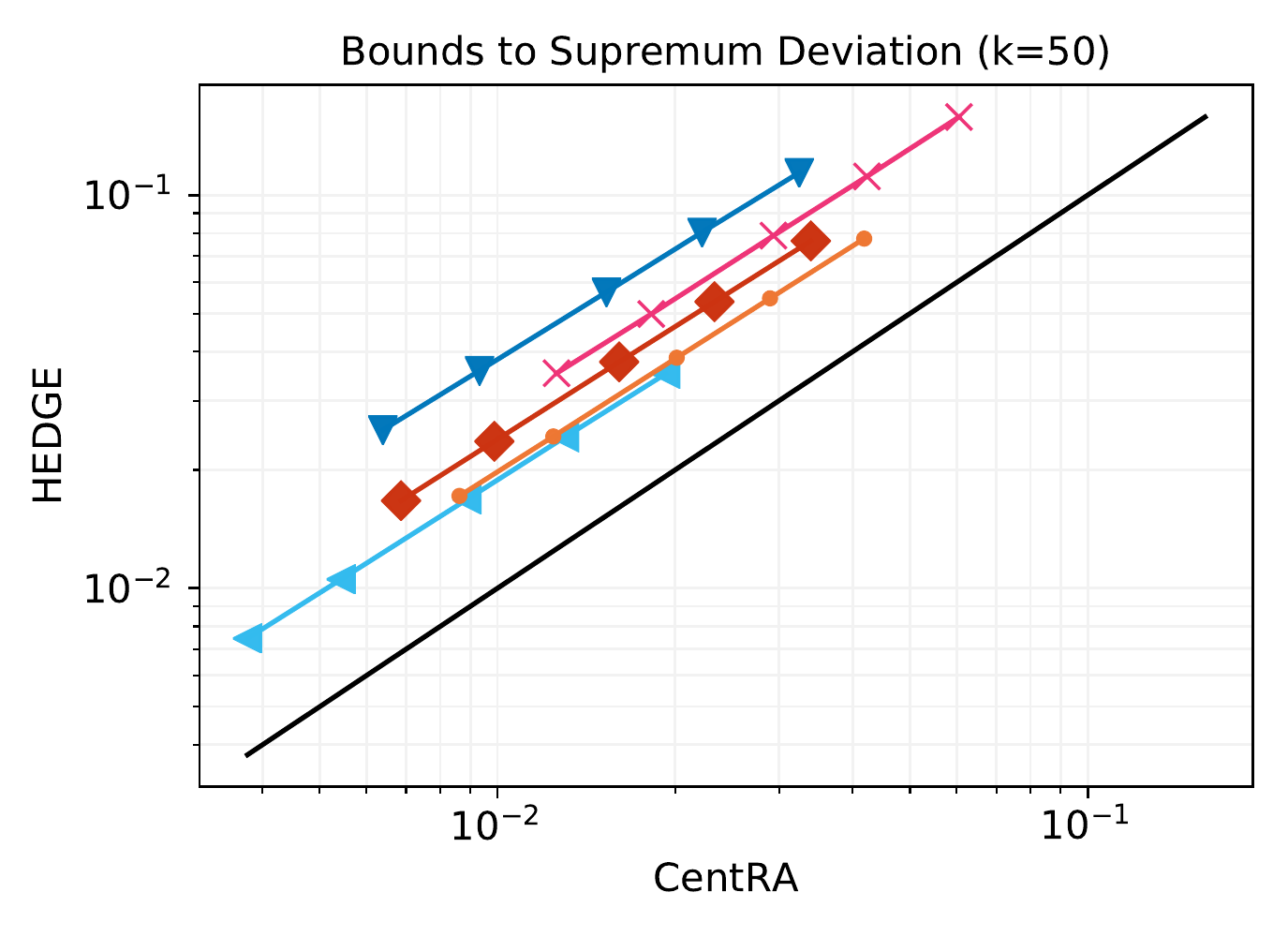}
  \caption{}
\end{subfigure}
\begin{subfigure}{.32\textwidth}
  \centering
  \includegraphics[width=\textwidth]{./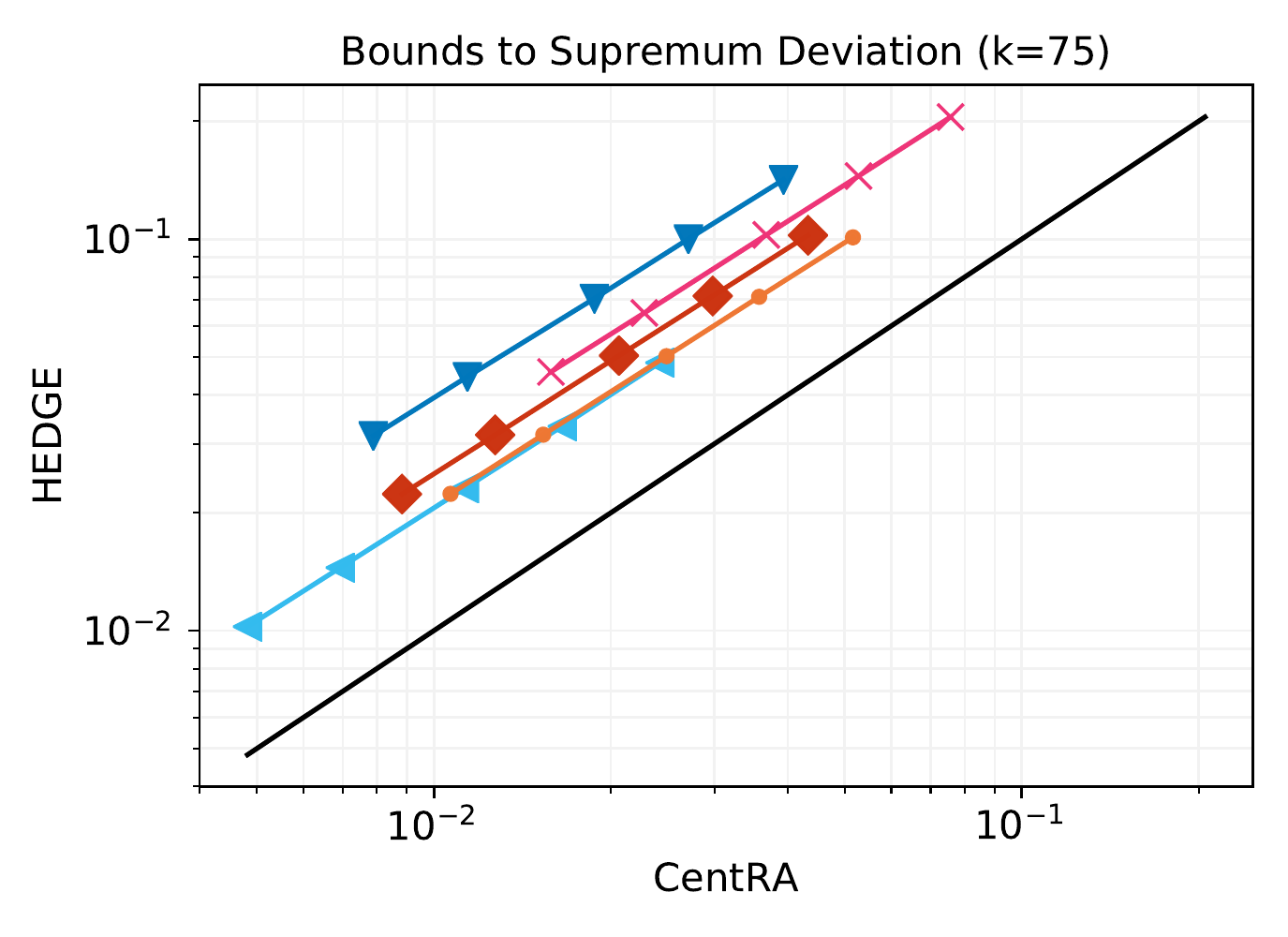}
  \caption{}
\end{subfigure} 
\caption{
Additional Figures comparing the 
bounds to the Supremum Deviation $\supdev$ obtained by HEDGE ($y$ axes, based on the union bound) and \algname\ ($x$ axes, Section~\ref{sec:algorithm}) on samples of size $m$, for $k \in \{25, 50,75\}$. 
Figures (a)-(c): undirected graphs.
Figures (d)-(f): directed graphs. 
Each point corresponds to a different value of $m$.
The black diagonal line corresponds to $y = x$.
}
\label{fig:fixedmappx}
\Description{This Figure shows additional results, for k=25, k=50, and and k=75, of the experiments shown in Figure 1.}
\end{figure*}

\begin{figure*}[ht]
\centering
\begin{subfigure}{.75\textwidth}
  \centering
  \includegraphics[width=\textwidth]{./figures/bounds-fixed-m-legend.pdf}
\end{subfigure} \\
\begin{subfigure}{.24\textwidth}
  \centering
  \includegraphics[width=\textwidth]{./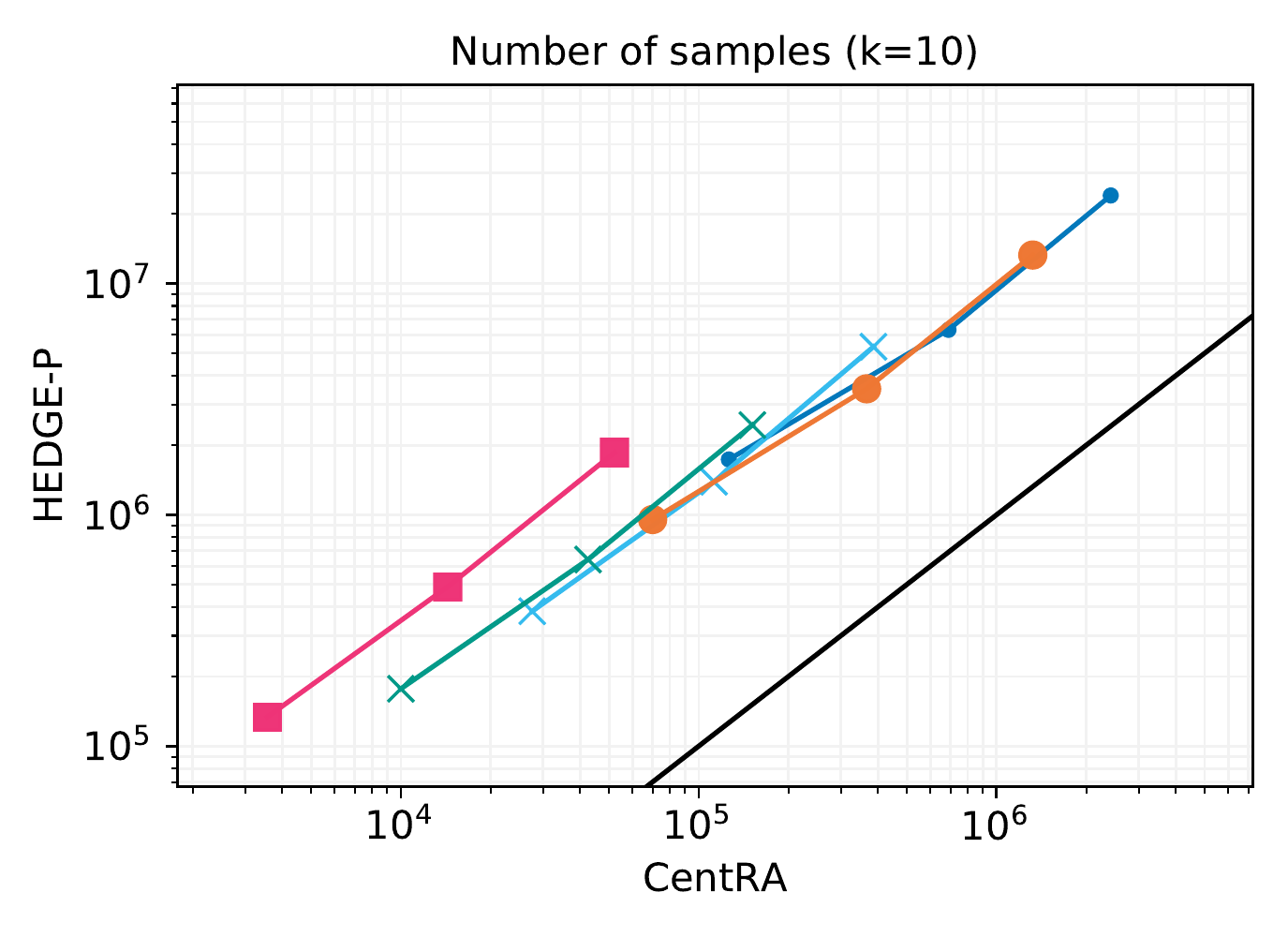}
\end{subfigure}
\begin{subfigure}{.24\textwidth}
  \centering
  \includegraphics[width=\textwidth]{./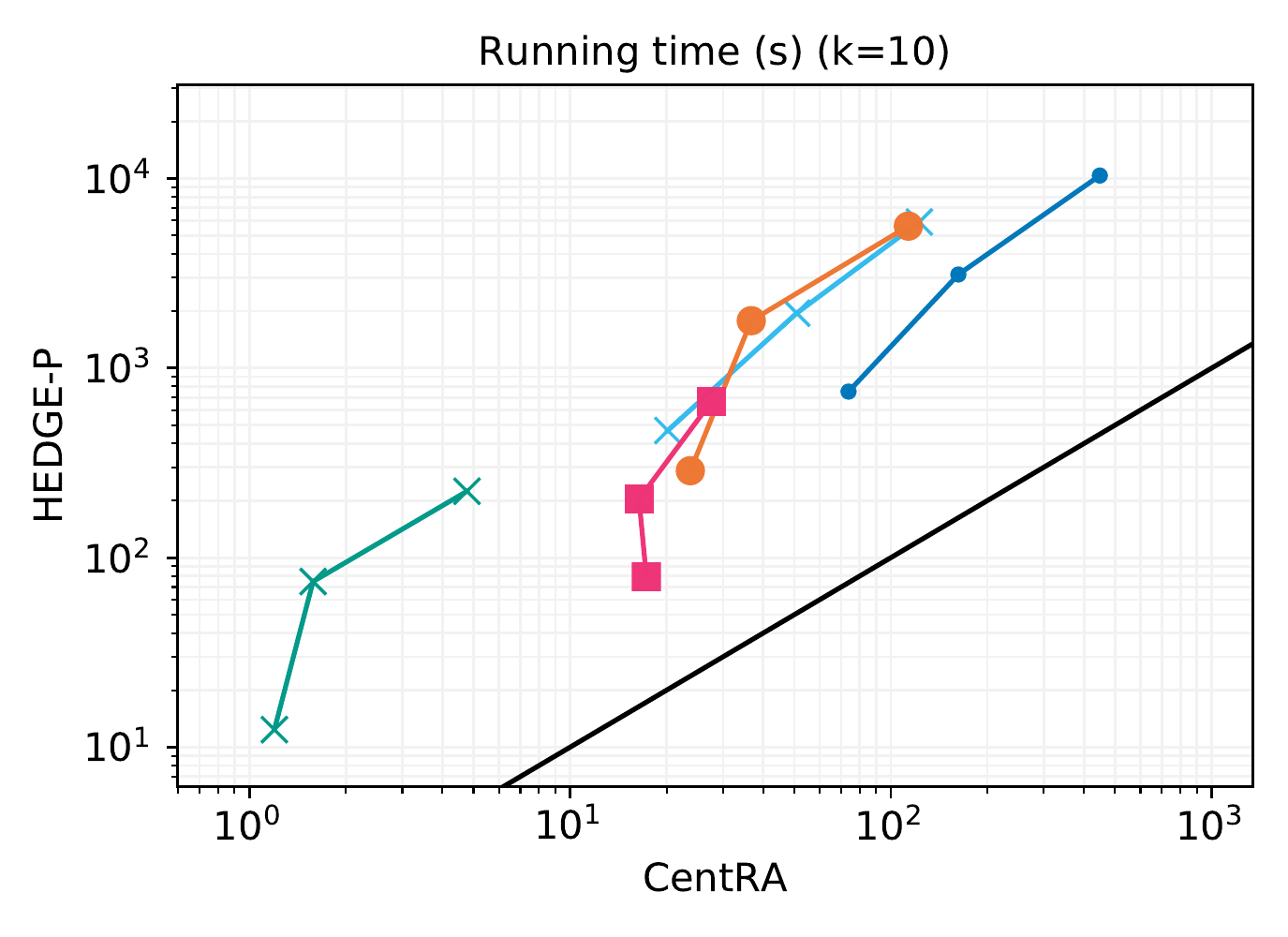}
\end{subfigure}
\begin{subfigure}{.24\textwidth}
  \centering
  \includegraphics[width=\textwidth]{./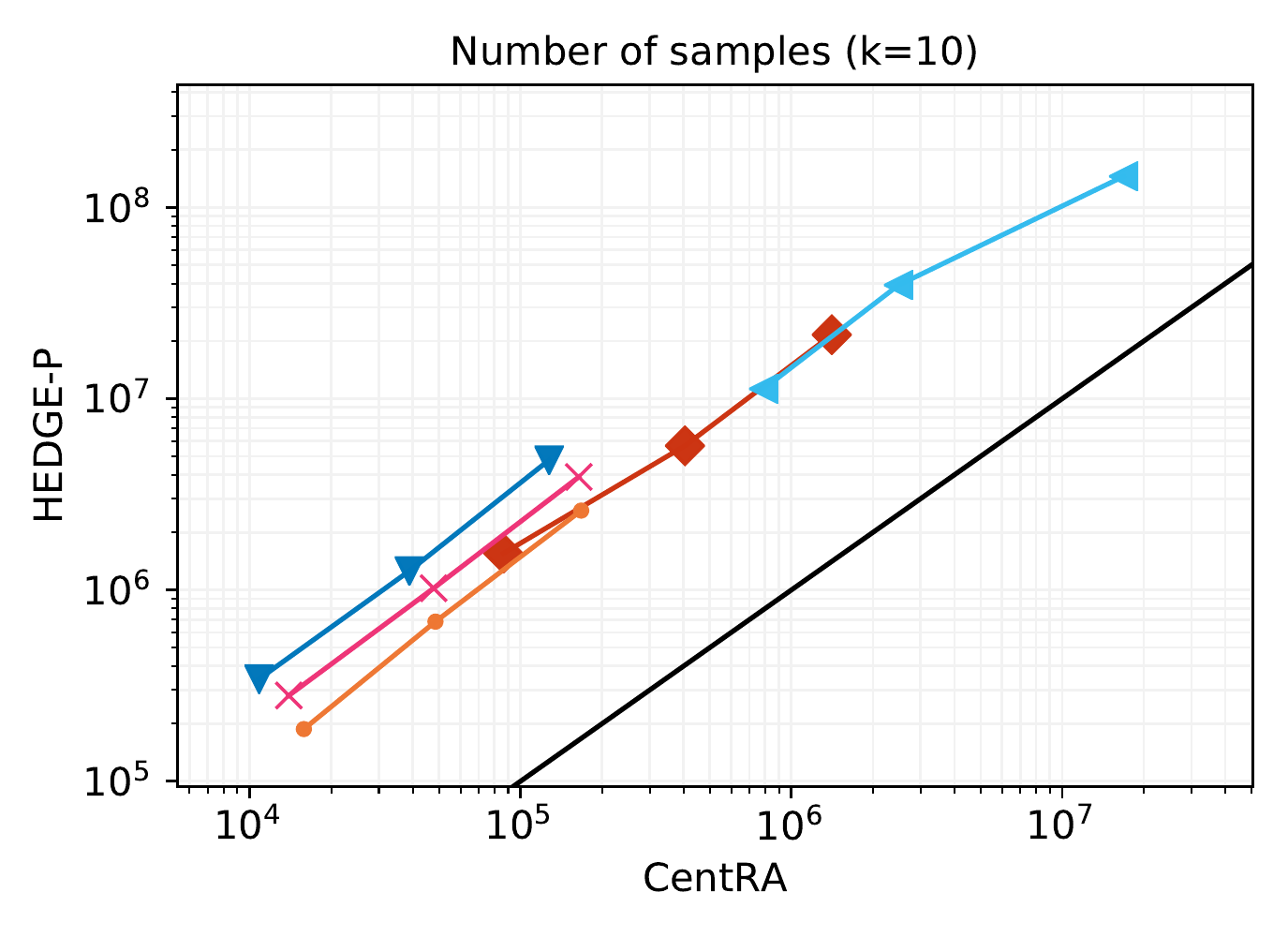}
\end{subfigure}
\begin{subfigure}{.24\textwidth}
  \centering
  \includegraphics[width=\textwidth]{./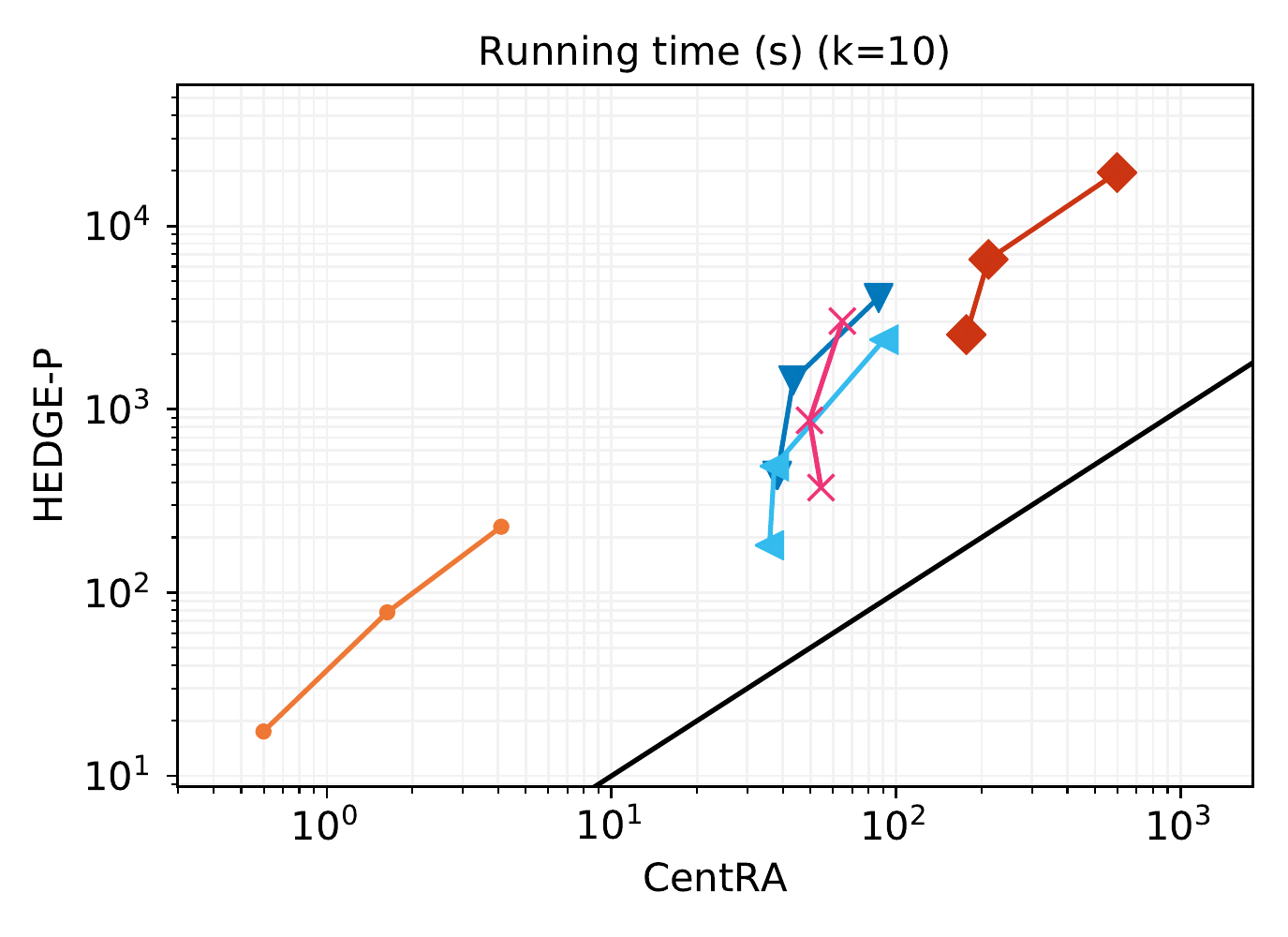}
\end{subfigure}
\begin{subfigure}{.24\textwidth}
  \centering
  \includegraphics[width=\textwidth]{./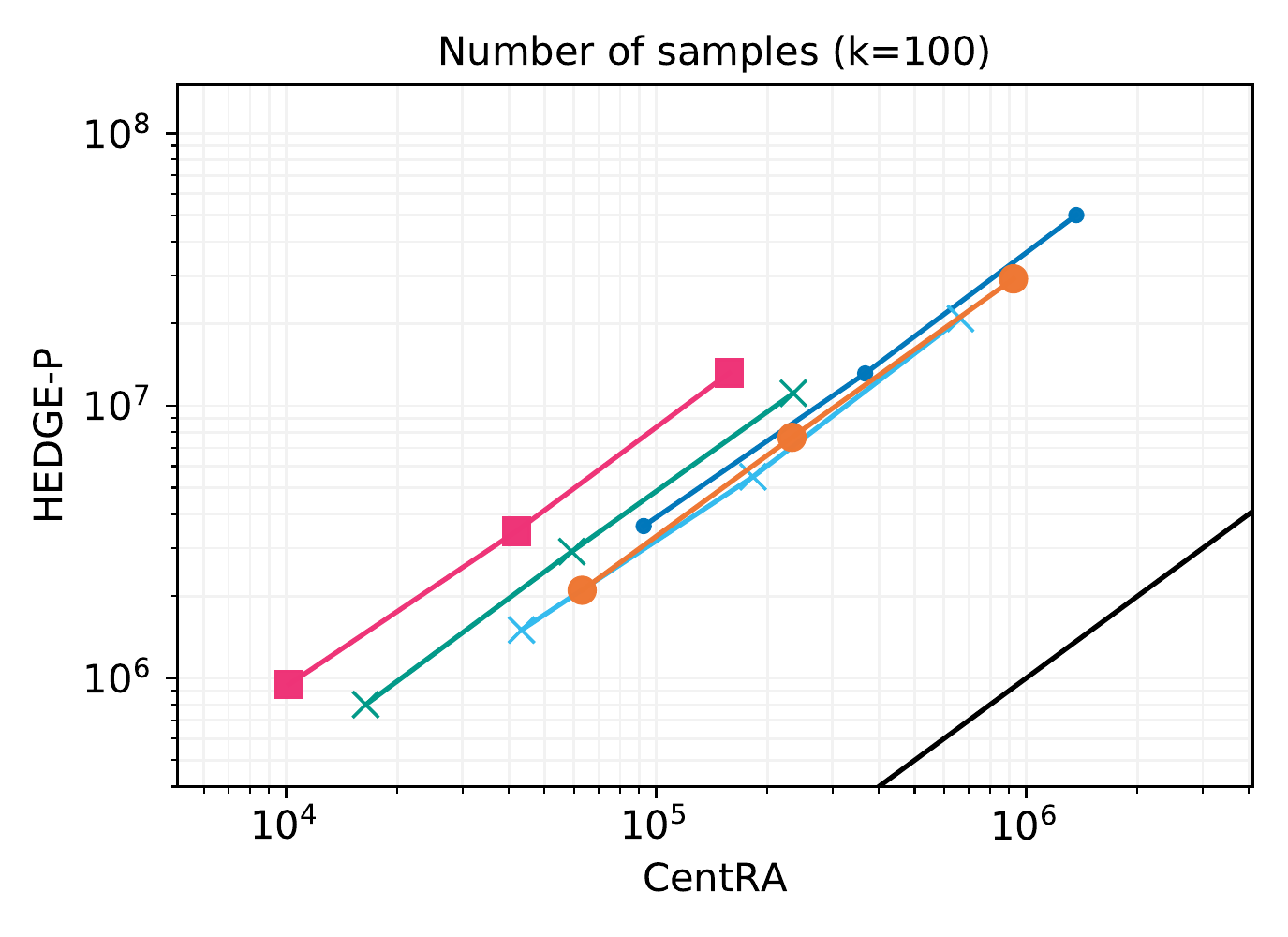}
\end{subfigure}
\begin{subfigure}{.24\textwidth}
  \centering
  \includegraphics[width=\textwidth]{./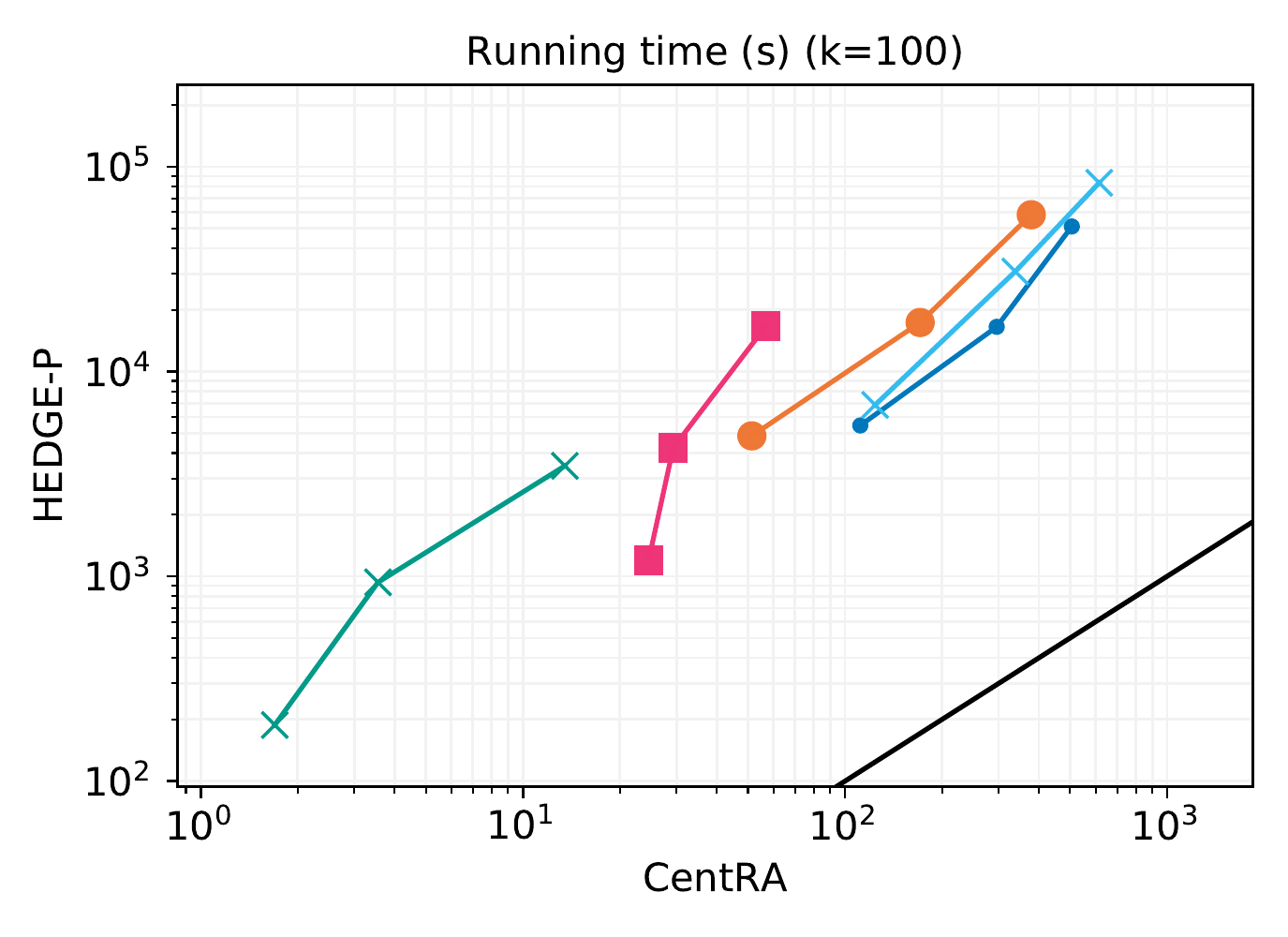}
\end{subfigure}
\begin{subfigure}{.24\textwidth}
  \centering
  \includegraphics[width=\textwidth]{./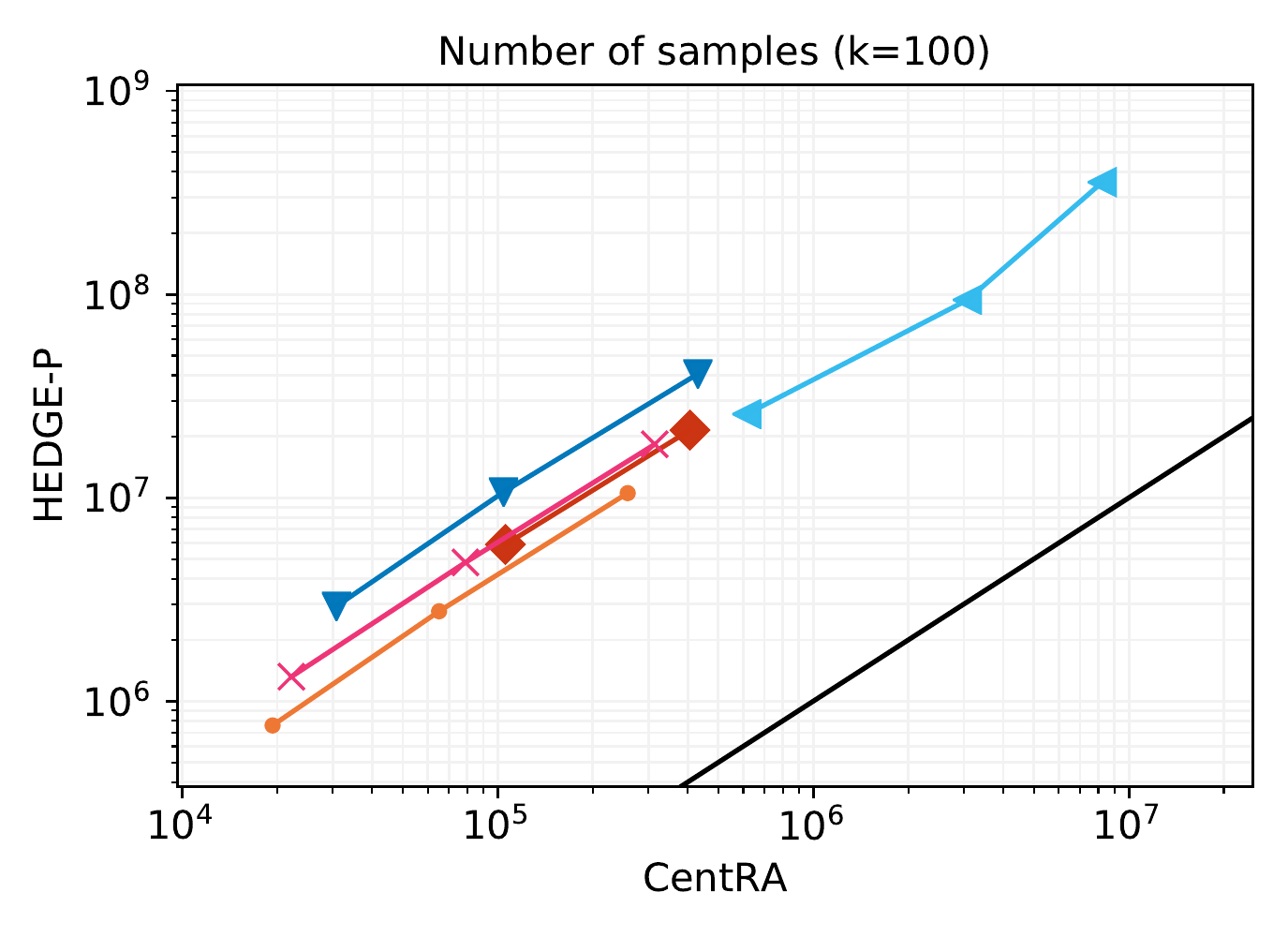}
\end{subfigure}
\begin{subfigure}{.24\textwidth}
  \centering
  \includegraphics[width=\textwidth]{./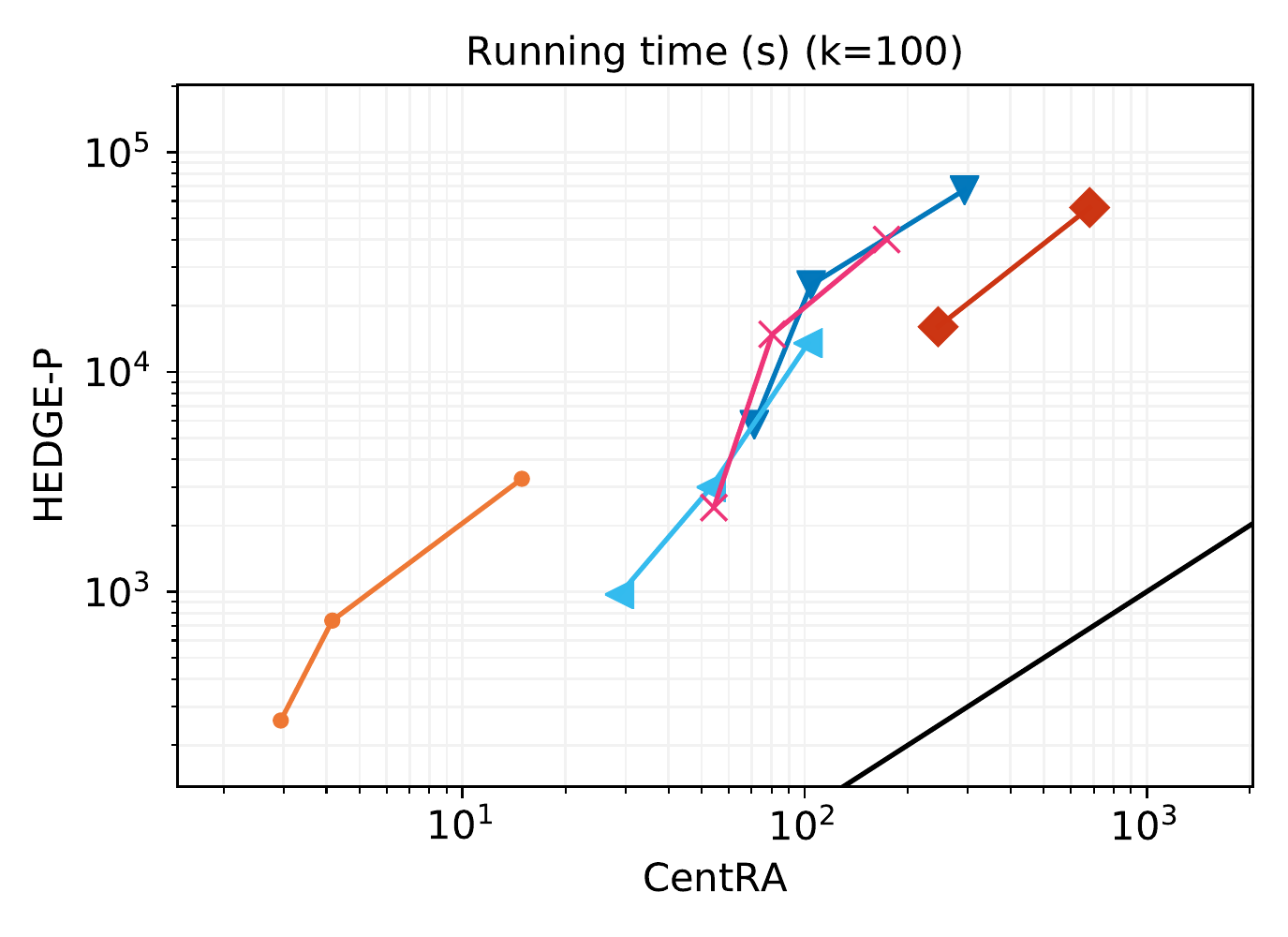}
\end{subfigure}
\caption{
Additional results (analogous to Figure \ref{fig:progsampling}) for $k=10$ and $k=100$.
}
\label{fig:progsamplingappx}
\Description{This Figure shows additional results, for k=10, and k=100, of the experiments shown in Figure 2.}
\end{figure*}

\begin{figure*}[ht]
\centering
\begin{subfigure}{.75\textwidth}
  \centering
  \includegraphics[width=\textwidth]{./figures/bounds-fixed-m-legend.pdf}
\end{subfigure} \\
\begin{subfigure}{.32\textwidth}
  \centering
  \includegraphics[width=\textwidth]{./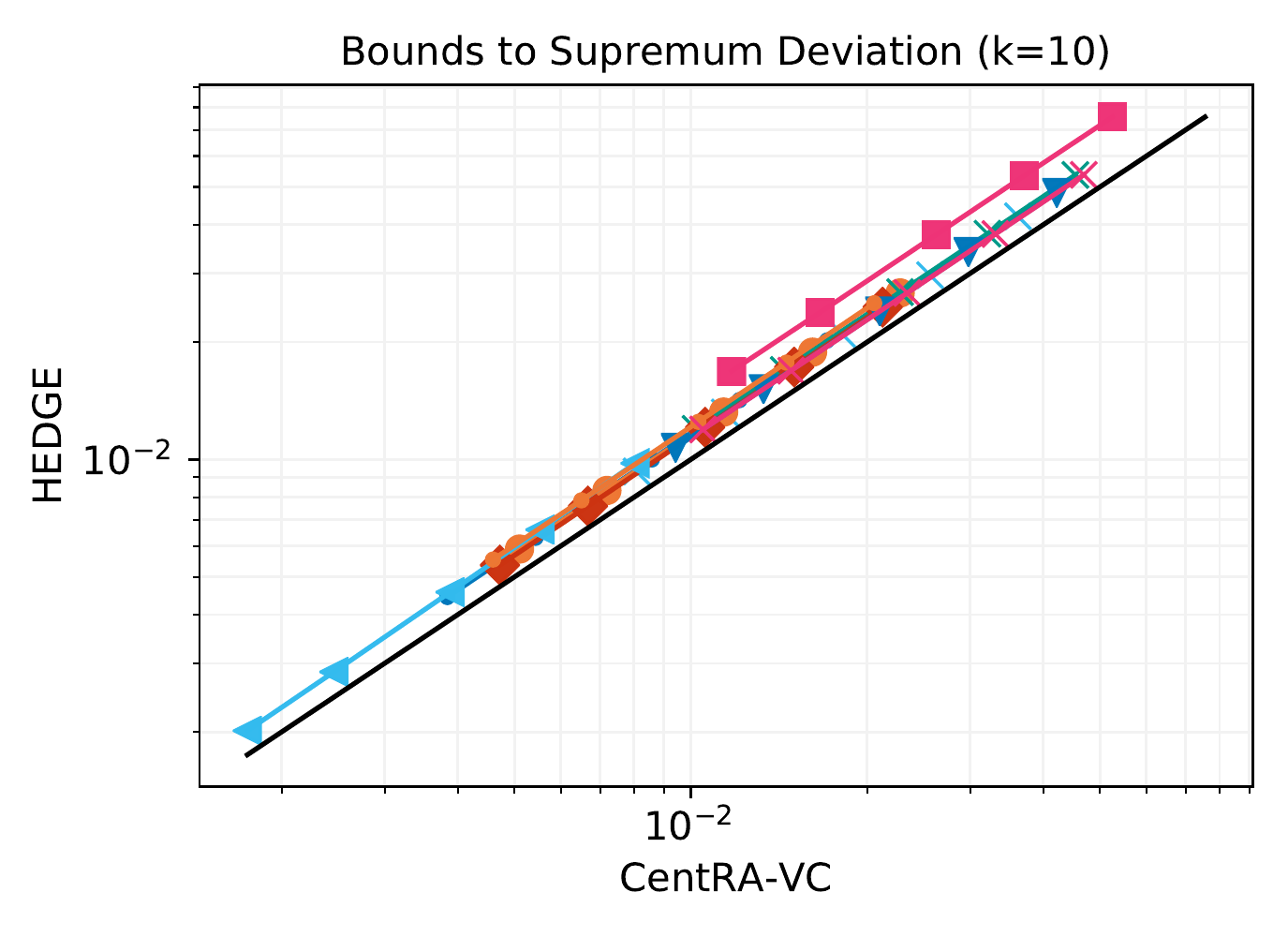}
\end{subfigure}
\begin{subfigure}{.32\textwidth}
  \centering
  \includegraphics[width=\textwidth]{./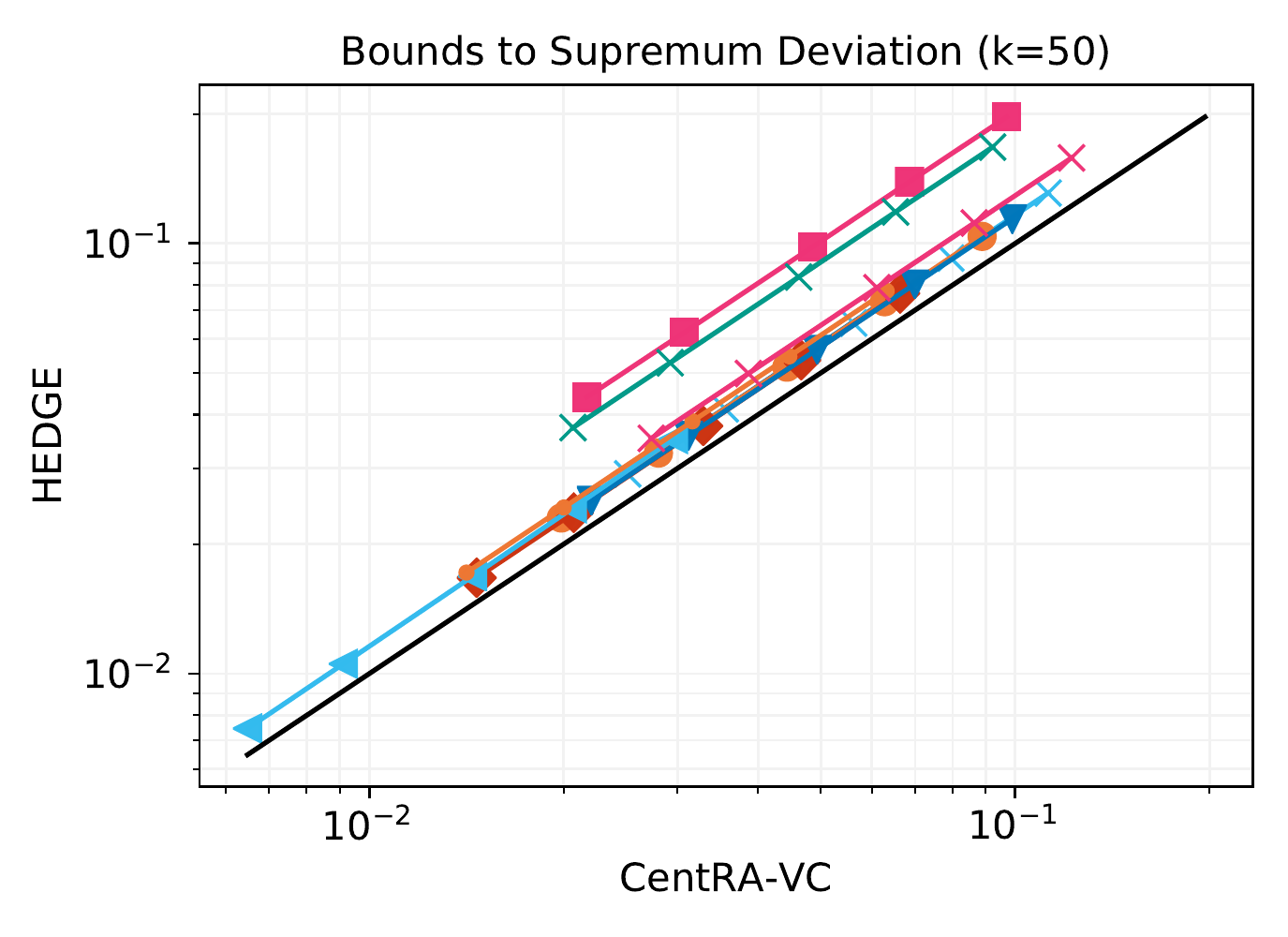}
\end{subfigure}
\begin{subfigure}{.32\textwidth}
  \centering
  \includegraphics[width=\textwidth]{./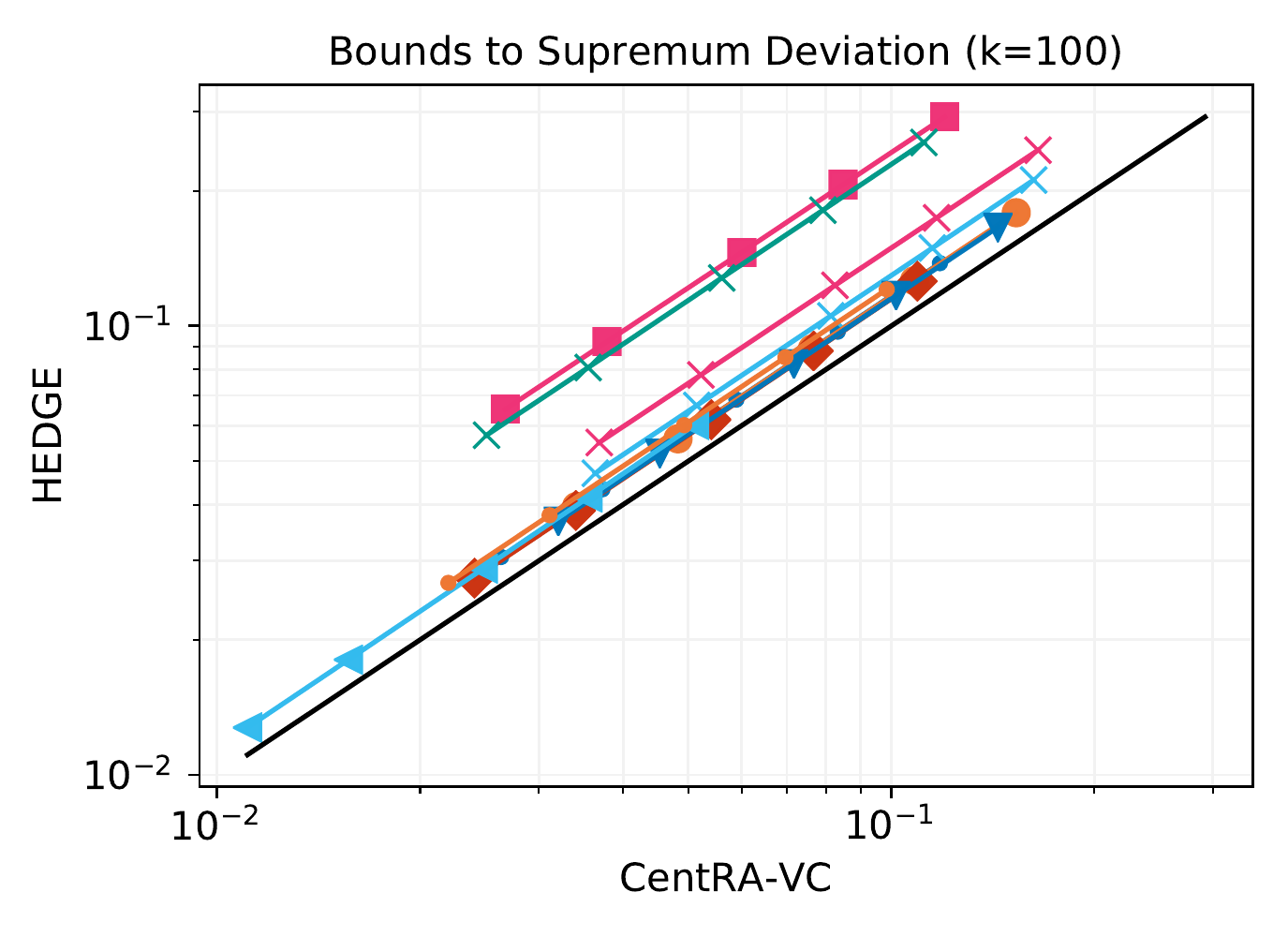}
\end{subfigure}
\caption{
Comparison between the 
bounds to the Supremum Deviation $\supdev$ obtained by HEDGE ($y$ axes, based on the union bound) and \algname-VC ($x$ axes, using the VC-dimension bounds of Section \ref{sec:samplecomplexity}) on samples of size $m \in \{ 5 \cdot 10^4 , 10^5 , 2 \cdot 10^5 , 5 \cdot 10^5 , 10^6 \}$, for $k \in \{10,50,100\}$. 
Each point corresponds to a different value of $m$.
The black diagonal line is at to $y = x$.
}
\label{fig:samplecompl-vc-hedge}
\Description{This Figure compares the bounds to the supremum deviation obtained from samples of size m using HEDGE and CentRA-VC, a variant of CentRA using the VC-dimension instead of Rademacher averages. The plots show that CentRA-VC yields more accurate bounds to the SD.}
\end{figure*}

\begin{figure*}[ht]
\centering
\begin{subfigure}{.75\textwidth}
  \centering
  \includegraphics[width=\textwidth]{./figures/bounds-fixed-m-legend.pdf}
\end{subfigure} \\
\begin{subfigure}{.32\textwidth}
  \centering
  \includegraphics[width=\textwidth]{./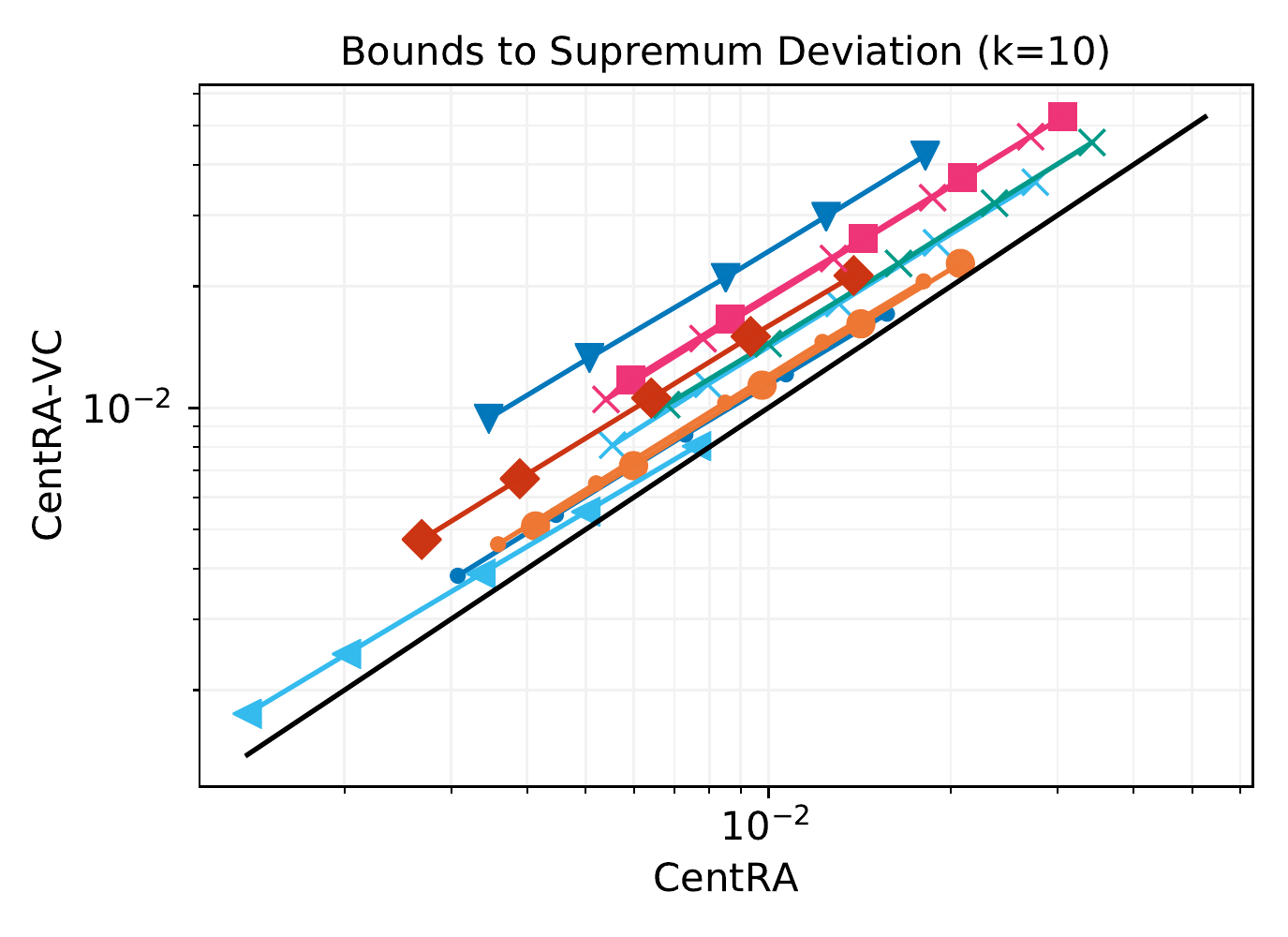}
\end{subfigure}
\begin{subfigure}{.32\textwidth}
  \centering
  \includegraphics[width=\textwidth]{./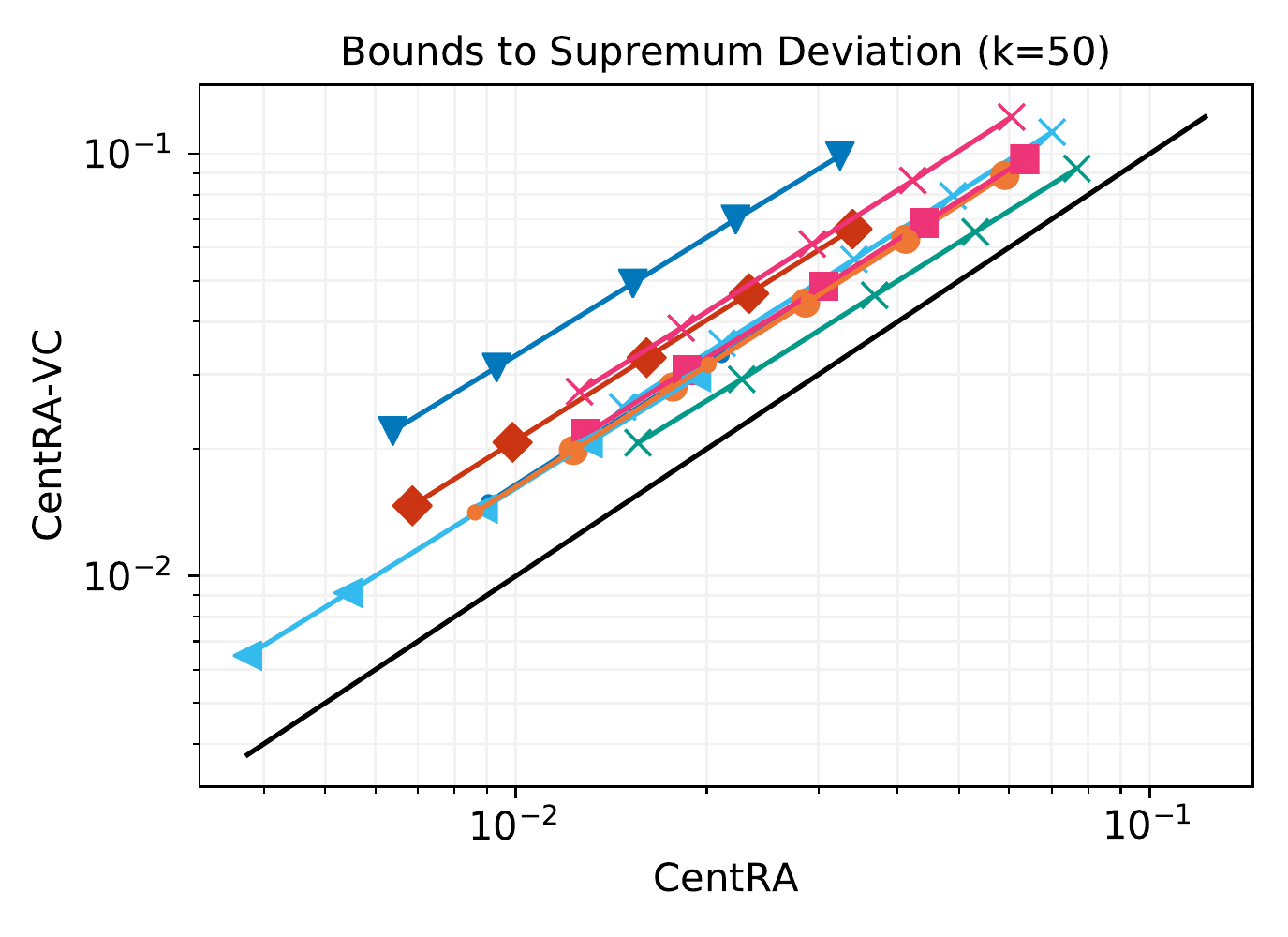}
\end{subfigure}
\begin{subfigure}{.32\textwidth}
  \centering
  \includegraphics[width=\textwidth]{./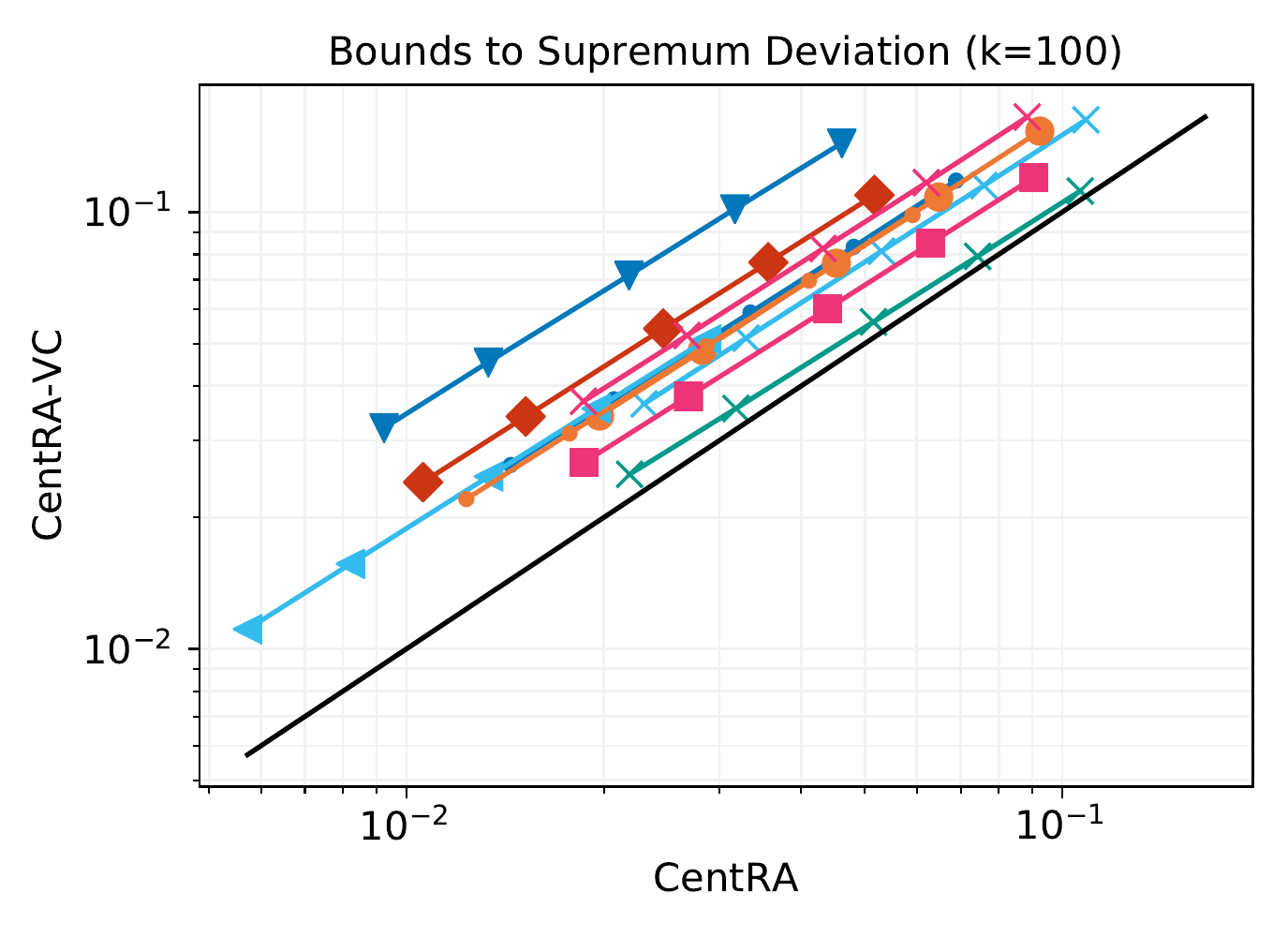}
\end{subfigure}
\caption{
Comparison between the 
bounds to the Supremum Deviation $\supdev$ obtained by 
\algname-VC ($y$ axes, using the VC-dimension bounds of Section \ref{sec:samplecomplexity}) 
and \algname\ ($x$ axes, using the Rademacher averages bounds of Section \ref{sec:boundsupdevrade}) on samples of size $m \in \{ 5 \cdot 10^4 , 10^5 , 2 \cdot 10^5 , 5 \cdot 10^5 , 10^6 \}$, for $k \in \{10,50,100\}$. 
Each point corresponds to a different value of $m$.
The black diagonal line is at to $y = x$.
}
\label{fig:samplecompl-vc-centra}
\Description{This Figure compares the bounds to the supremum deviation obtained from samples of size m using CentRA and CentRA-VC, a variant of CentRA using the VC-dimension instead of Rademacher averages. The plots show that CentRA yields more accurate bounds to the SD.}
\end{figure*}

\end{document}